\definecolor{shadecolor}{rgb}{0.9, 0.9, 0.86}
\definecolor{darkgreen}{rgb}{0.2, 0.5,  0}
\definecolor{darkblue}{rgb}{0.1,0.1,0.45}
\def\&{\vspace{-5pt}&}
\def\Re{\mathrm {Re}\,}
\def\Im{\mathrm {Im}\,}
\tikzset{->-/.style={decoration={
 markings,
 mark=at position #1 with {\arrow{>}}},postaction={decorate}}}
\def \eqref#1{(\ref{#1})}
\def \& {&\hspace{-10pt}}
\def\Ai{ {\mathrm {Ai}}}
\newcommand{\G}{\Gamma} 
\renewcommand{\d}{\mathrm d}
\newtheorem{theorem}{Theorem}[section]
\newtheorem{example}[theorem]{Example}
\newtheorem{exercise}[theorem]{Exercise}
\newtheorem{lemma}[theorem]{Lemma}
\newtheorem{remark}[theorem]{Remark}
\newtheorem{proposition}[theorem]{Proposition} 
\newtheorem{corollary}[theorem]{Corollary} 
\newtheorem{definition}[theorem]{Definition}
\def\le{\left}
\def\ri{\right}
\def\d{{\rm d}}
\def\ds{\displaystyle}
\def\res{\mathop{\mathrm {res}}\limits_}
\def\bt{\begin{theorem}}
\def\et{\end{theorem}}
\def\bc{\begin{corollary}}
\def\ec{\end{corollary}}
\def\bx{\begin{example}}
\def\ex{\end{example}}
\def\bxr{\begin{exercise}\small}
\def\exr{\end{exercise}}
\def\bl{\begin{lemma}}
\def\el{\end{lemma}}
\def\bd{\begin{definition}}
\def\ed{\end{definition}}
\def\bp{\begin{proposition}}
\def\ep{\end{proposition}}
\def\br{\begin{remark}}
\def\er{\end{remark}}
\def\be{\begin{equation}}
\def\ee{\end{equation}}
\def\&{\hspace{-15pt}&}
\def\bea{\begin{eqnarray}}
\def\eea{\end{eqnarray}}
\def\beas{\begin{eqnarray*}}
\def\eeas{\end{eqnarray*}}
\def\R{{\mathbb R}}
\def\1{{\bf 1}}
\def\Id{{\mathrm{Id}}}
\def\Id{\mathbf{I}}
\def\F{\mathcal{F}}
\def\G{\mathcal{G}}
\def\QED {\hfill $\blacksquare$\par\vskip 3pt}
\renewcommand{\theequation}{\arabic{section}.\arabic{equation}}
\date{}                     %% if you don't need date to appear
\title{Fredholm determinant solutions of the Painlev\'e II hierarchy and gap probabilities of determinantal point processes}
\author[1]{Mattia Cafasso}
\author[2]{Tom Claeys}
\author[3]{Manuela Girotti} 
\affil[1]{\textit{LAREMA - Universit\'e d'Angers, 2 Boulevard Lavoisier, 49045 Angers, France;} \texttt{cafasso@math.univ-angers.fr}}
\affil[2]{\textit{Institut de Recherche en Math\'ematique et Physique,  UCLouvain, Chemin du Cyclotron 2, B-1348 Louvain-La-Neuve, Belgium;} \texttt{tom.claeys@uclouvain.be}}
\affil[3]{\textit{Department of Mathematics, Colorado State University, 1874 campus delivery, Fort Collins, CO 80523 and Department of Mathematics and Statistics, Concordia University, Montr\'eal, QC;} \texttt{manuela.girotti@concordia.ca}}
\begin{document}
\maketitle

\begin{abstract}
We study Fredholm determinants of a class of integral operators, whose kernels can be expressed as double contour integrals of a special type. Such Fredholm determinants appear in various random matrix and statistical physics models. We show that the logarithmic derivatives of the Fredholm determinants are directly related to solutions of the Painlev\'e II hierarchy. This confirms and generalizes a recent conjecture by Le Doussal, Majumdar, and Schehr \cite{FermMM}. In addition, we obtain asymptotics at $\pm\infty$ for the Painlev\'e transcendents and large gap asymptotics for the corresponding point processes.
\end{abstract}

%\tableofcontents 

%\newpage

%\insrt{TO DO
%\begin{itemize}
%\item If we want the coefficient of the $\log |s|$ term in the expansion of $\log F$, we need to evaluate the entries of $B_\pm$ (in particular the $11$-entry, rather than the $12$ entry). Is it possible to get a simple expression for them? If not, we may as well leave this coefficient implicit?
%\end{itemize}
%}

\section{Introduction}
We consider a family of integral kernels $K(x,y)$, defined for $x,y\in\mathbb R$, of the form
\begin{equation}\label{def:kernel}
K(x,y)={\dfrac{1}{(2 \pi i)^2}}\int_{\gamma_R} \d\mu \int_{\gamma_L} \d\lambda \frac{{\rm e}^{(-1)^{n+1}(p_{2n+1}(\mu)-p_{2n+1}(\lambda))-x\mu+y\lambda}}{\lambda - \mu},
\end{equation}
where $n\in\mathbb N$, and $\gamma_L$ and $\gamma_R$ are curves in the left and right half of the complex plane without self-intersections which are asymptotic to straight lines with arguments $\pm \frac{n}{2n+1}\pi$ at infinity, as illustrated in Figure \ref{Fig1}. The function $p_{2n+1}$ can be any odd polynomial of the form
\begin{equation}\label{def p}
p_{2n+1}(x)=\frac{x^{2n+1}}{2n+1}+
\sum_{j=1}^{n-1}\frac{
        \tau_{j}}{2j+1}x^{2j+1},
    \end{equation}
    with $\tau_1,\ldots, \tau_{n-1}\in\mathbb R$.
Determinantal point processes with kernels of this form appear in various random matrix and statistical physics models. The most prominent example is the Airy kernel, which corresponds to $n=1$ and $p_{3}(\lambda)=\lambda^3/3$. A generalized Airy point process characterized by the kernel \eqref{def:kernel} with monomial $p_{2n+1}$ appeared recently in \cite{FermMM}, as a model for the momenta of a large number of fermions trapped in a non--harmonic potential. Point processes with kernels of the general form \eqref{def:kernel}, but with $p_{2n+1}$ not necessarily polynomial, also appear, for instance, in \cite{Johansson}.
In \cite{ACvM1}, a connection was established between the kernels \eqref{def:kernel}, the theory of Gelfand--Dickey equations and some specific solutions which are related to topological minimal models of type $A_n$ (see \cite{Dij} for an excellent introduction to this subject). The monomial case for $n = 2$ is also related to spin chains in a magnetic field \cite{Stephan}.\\

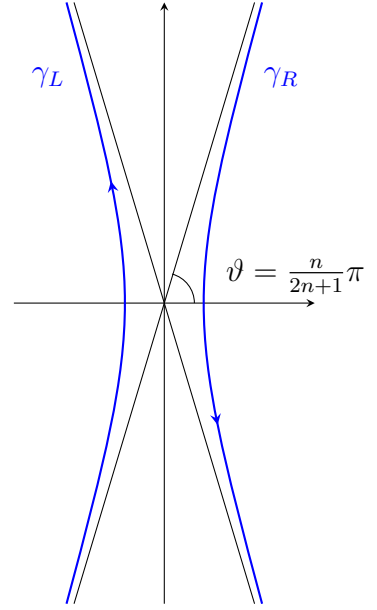
\begin{floatingfigure}[r]{.3\textwidth}
%\begin{wrapfigure}{R}{.3\textwidth}
\centering
\scalebox{1}{
\begin{tikzpicture}[>=stealth]
\path (0,0) coordinate (O);

\draw[->] (-2,0) -- (2,0) coordinate (x axis);
\draw[->] (0,-4) -- (0,4) coordinate (y axis);

\draw (1.2,4) -- (-1.2,-4);
\draw (1.2,-4) -- (-1.2,4);
\draw (.4,0) arc (0:75:0.4);
\node [above right] at (.7,0) {\large $\vartheta = \frac{n}{2n+1}\pi$};

\draw[thick, blue, ->- = .7]  (-1.3,-4) .. controls + (75:4cm) and + (-75:4cm) .. (-1.3,4);
\node [left] at (-1.2,3) {{\color{blue} \large $ \gamma_L$}};

\draw[thick, blue, ->- = .7]  (1.3,4) .. controls + (-105:4cm) and + (105:4cm) .. (1.3,-4);
\node [right] at (1.2,3) {{\color{blue} \large $ \gamma_R$}};
\end{tikzpicture}}
\caption{The contours $\gamma_R$ and $\gamma_L$.  \label{Fig1}}
%\end{wrapfigure}
\end{floatingfigure}

Our objective is to derive properties of the Fredholm determinants
	\begin{multline}\label{def:Fredholm}
		F(s;\varrho):=\det(\Id-\left. \varrho\, \mathcal{K}\right|_{[s,+\infty)})\\
		=1 + \sum_{k = 1}^\infty \frac{(-\varrho)^k}{k!}  \int_{[s,+\infty)^k} \det \Big[K(\xi_i,\xi_j)\Big]_{i,j = 1}^k\prod_{i = 1}^k \d\xi_i,
	\end{multline}
for $s\in\mathbb R$ and $0<\varrho\leq 1${, where $\left.\mathcal K\right|_{[s,+\infty)}$ is the integral operator with kernel $K$ acting on $L^2(s,+\infty)$.}
Throughout the paper, in order to distinguish integral kernel operators from their kernels, we will denote integral kernel operators by calligraphic letters ($\mathcal K$ in the above equation) and the associated kernels by the corresponding uppercase letter ($K$ in the above equation).\\ 
Given a determinantal point process on the real line characterized by a correlation kernel $K$, the Fredholm determinant $F(s;1)$ is the probability distribution of the largest particle $\zeta_{\rm max}$ (which exists almost surely if the Fredholm determinant is well-defined) in the process, $F(s;1)=\mathbb P(\zeta_{\rm max}<s)$, see e.g.\ \cite{Johansson, Sosh}. For $\varrho\in (0,1]$, $F(s;\varrho)=\mathbb P(\zeta_{\rm max}^{(\varrho)}<s)$ is the probability distribution of the largest particle $\zeta_{\rm max}^{(\varrho)}$ in the 
associated thinned process, which is obtained from the original process by removing each of the particles independently with probability $1-\varrho$, see e.g.\ \cite{BohigasPato,BothnerBuckingham, BDIK, ClaeysDoeraene}.
In Appendix \ref{appendix:genAiry}, we confirm using standard methods that the kernels \eqref{def:kernel} indeed define a point process for any choice of $n, \tau_1,\ldots, \tau_{n-1}$, and this implies, in particular, {that {$F(s;\varrho)$} is a distribution function for any $0< \varrho\leq 1$.

As our first result, we will prove that $F(s;\varrho)$ can be expressed explicitly in terms of solutions to the Painlev\'e II hierarchy.

\medskip

The Painlev\'e II hierarchy is a sequence of ordinary differential equations obtained from the equations of the mKdV hierarchy via self-similar reduction \cite{FN}. Using the same normalization as in \cite{CIK}, the $n$-th member of the Painlev\'e II hierarchy is an equation for $q=q(s)$ defined as follows:
\begin{equation}\label{PIIn}
\left(\frac{\d}{\d s}+2q\right)\mathcal L_n[q_s-q^2]+\sum_{\ell=1}^{n-1}\tau_{\ell} \left(\frac{\d}{\d s}+2q\right)\mathcal L_\ell[q_s-q^2]=sq-\alpha, \qquad n\geq 1,
\end{equation}
where $\alpha, \tau_1,\ldots, \tau_{n-1}$ are real parameters, and where the operators $\{\mathcal L_n, \; n \geq 0\}$ are the Lenard operators defined recursively by
\begin{equation}\label{def L}
\frac{\d}{\d s}\mathcal L_{j+1}f=\left(\frac{\d^3}{\d s^3}+4f\frac{\d}{\d s}+2f_s\right)\mathcal L_j f, \qquad \mathcal L_0 f=\frac{1}{2},\qquad \mathcal L_j 1=0,\quad j\ge 1.
\end{equation}
The first members of the hierarchy are\footnote{The function $q$ corresponds to the function $g$ in \cite{FermMM} if we set the parameters $\tau_i$ to zero.}
\begin{align}
& q''-2q^3=sq-\alpha, \nonumber\\
&q''''-10 q(q')^2-10q^2q''+6q^5+\tau_1(q''-2q^3)=sq-\alpha, \nonumber\\
&q''''''-14q^2q''''-56qq'q'''-70(q')^2q''-42q(q'')^2
+70q^4q''+140q^3(q')^2-20q^7 \nonumber \\&\qquad\qquad\qquad\qquad\qquad\qquad+\tau_2(q''''-10 q(q')^2-10q^2q''+6q^5)
+\tau_1(q''-2q^3)=sq-\alpha, \label{PIIhierarchy}
\end{align}
where $'$ stands for $\dfrac{\d}{\d s}$. 
Generally speaking, even if there exist families of rational and special function solutions to Painlev\'e equations and hierarchies,
one can say that typical solutions are transcendental functions which have no simple closed expression. 
We construct a family of solutions to the Painlev\'e II hierarchy in terms of the Fredholm determinants \eqref{def:Fredholm}.
This is also of interest from a numerical point of view: whereas it is in general a challenge to accurately evaluate Painlev\'e transcendents numerically, there are efficient algorithms to compute Fredholm determinants \cite{Bornemann}.\\

We will be interested in the homogeneous version of the Painlev\'e II hierarchy, which corresponds to $\alpha=0$. The family of solutions that we will construct, contains natural generalizations of the Hastings-McLeod solution \cite{HastingsMcLeod} and the Ablowitz-Segur solutions \cite{AblowitzSegur} to the (second order) Painlev\'e II equation.
These solutions behave like a multiple of the Airy function as $s\to +\infty$.
More precisely, for every $n$, we construct solutions $q(s;\varrho)$ of the $n$-th member of the hierarchy in terms of $F(s;\varrho)$, in such a way that $q\big((-1)^{n+1}s;\varrho\big)$ decays rapidly at $+\infty$, and behaves like a root function at $-\infty$ if $\varrho=1$.
\begin{theorem}\label{thm:PIIhier}
	Let $n\in\mathbb N$, {$0<\varrho\leq 1$} $\tau_1,\ldots, \tau_{n-1}\in\mathbb R$, and let $F(s;\varrho)=F(s;\varrho;\tau_1,\ldots, \tau_{n-1})$ be the Fredholm determinant defined in 	\eqref{def:Fredholm} with $K$ given by \eqref{def:kernel}--\eqref{def p}.
There is a real solution $q(s;\varrho)=q(s;\varrho;\tau_1,\ldots, \tau_{n-1})$ to the equation of order $2n$ in the Painlev\'e II hierarchy \eqref{PIIn} which has no poles for real $s$, which satisfies
\be\label{eqthm1}
q^2\big((-1)^{n+1}s;\varrho\big)=-\frac{\d^2}{\d s^2}\log F(s;\varrho),
\ee
and which has the asymptotic behavior
\begin{equation}
q({(-1)^{n+1}}s;\varrho)=\mathcal{O}\le(e^{-C s^{\frac{2 n + 1}{2 n}}}\ri),\qquad \mbox{ as $s\to +\infty$,}\label{eq:qas+0}\end{equation}
for some $C>0$.
Moreover, if $\varrho=1$, 
\begin{equation}
	q((-1)^{n+1}s;1)\sim \left(\frac{n!^2}{(2n)!}|s|\right)^{\frac{1}{2n}},\qquad \mbox{ as $s\to -\infty$.}\label{eq:qas-}
\end{equation}
For any {$0<\varrho\leq 1$}, we also have the identity
\begin{equation}\label{eq:TWidentity}
F(s;\varrho) =\exp\left\{-\int_{s}^{+\infty}(x-s)\, q^2\big({(-1)^{n+1}}x;\varrho\big)\, \d x\right\},
\end{equation}
for $F$ in terms of $q$.\\
%\blue{MATTIA : I wonder if it is more readable changing the definition of the even member of the hierarchy...}
\end{theorem}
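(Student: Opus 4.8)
The plan is to connect the Fredholm determinant $F(s;\varrho)$ to a Riemann--Hilbert problem via the standard integrable-operator machinery, extract a Lax pair from the RH problem, and identify its compatibility condition with the $n$-th member of the Painlev\'e II hierarchy. First I would observe that the kernel \eqref{def:kernel} is an integrable (Its--Izergin--Korepin--Slavnov) kernel built from the functions $f(x)=\frac{1}{2\pi i}\int_{\gamma_R}e^{(-1)^{n+1}p_{2n+1}(\mu)-x\mu}\,\d\mu$ and its $\gamma_L$-counterpart, so the operator $\left.\varrho\mathcal K\right|_{[s,+\infty)}$ admits a resolvent whose structure is governed by a $2\times 2$ RH problem $Y=Y(\zeta;s)$ with jump on $\gamma_L\cup\gamma_R$ involving the constant $\varrho$ and the large-$\zeta$ behaviour dictated by $p_{2n+1}$; this is the content one expects to be set up in the sections following this excerpt. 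The Jimbo--Miwa--Ueno-type computation then gives $\frac{\d}{\d s}\log F(s;\varrho)$ in terms of the RH solution, and in particular a Hamiltonian/tau-function identity of the form $\frac{\d^2}{\d s^2}\log F(s;\varrho)=-q^2\big((-1)^{n+1}s;\varrho\big)$ where $q$ is read off from a subleading coefficient in the expansion of $Y$ at infinity. This establishes \eqref{eqthm1}, and the fact that $q$ solves \eqref{PIIn} with $\alpha=0$ follows from the compatibility of the $\zeta$- and $s$-deformation equations of $Y$, whose polynomial-in-$\zeta$ structure reproduces exactly the Lenard-operator combination in \eqref{PIIn} with the same $\tau_\ell$; the absence of poles on $\R$ is equivalent to the solvability of the RH problem for all real $s$, which holds because $\Id-\varrho\mathcal K$ is invertible on $L^2(s,+\infty)$ for $0<\varrho\le 1$ (nonnegativity of the kernel, $\varrho\le 1$).

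For the asymptotics \eqref{eq:qas+0} as $s\to+\infty$, I would run a steepest-descent/small-norm analysis of the RH problem: after conjugating by the ``$g$-function'' $e^{(-1)^{n+1}p_{2n+1}(\zeta)-s\zeta}$ and deforming the contours through the relevant saddle (which for the shifted variable lies near $\zeta\sim s^{1/(2n)}$ up to constants from the $\tau_j$), the jump matrices become exponentially close to the identity with rate $e^{-Cs^{(2n+1)/(2n)}}$, so $\|\left.\varrho\mathcal K\right|_{[s,+\infty)}\|$ and hence $q$ are of that order. The $\varrho=1$ asymptotics \eqref{eq:qas-} as $s\to-\infty$ is the more delicate ``pole-free'' regime: here the operator norm is not small, and one must build a genuine outer parametrix (a solution to a model RH problem with constant jumps on the rays, solvable by $g$-function and a scalar Szeg\H{o}-type function) plus local parametrices at the branch points; matching then yields $q\big((-1)^{n+1}s;1\big)\sim c\,|s|^{1/(2n)}$, and computing the constant $c$ amounts to a careful bookkeeping of the leading term of the relevant polynomial equation $q^{2n}\sim$ (something)$\cdot|s|$, which is exactly solved by $c=\big(n!^2/(2n)!\big)^{1/(2n)}$ (consistent with the algebraic curve $\mu^{2n+1}\sim$ linear underlying the equilibrium problem).

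Finally, the Tracy--Widom-type identity \eqref{eq:TWidentity} follows by integrating \eqref{eqthm1} twice in $s$: since $\frac{\d^2}{\d s^2}\log F(s;\varrho)=-q^2\big((-1)^{n+1}s;\varrho\big)$, we get $\log F(s;\varrho)=-\int_s^{+\infty}(x-s)q^2\big((-1)^{n+1}x;\varrho\big)\,\d x$ provided the two boundary terms at $+\infty$ vanish, i.e.\ $\log F(s;\varrho)\to 0$ and $\frac{\d}{\d s}\log F(s;\varrho)\to 0$ as $s\to+\infty$; both are immediate from $F(s;\varrho)\to 1$ as $s\to+\infty$ (the largest particle is a.s.\ finite) together with the already-established decay of $q$, and the integral converges by \eqref{eq:qas+0}. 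I expect the main obstacle to be the $s\to-\infty$ steepest-descent analysis for $\varrho=1$: constructing the outer parametrix for the $n$-th member (a higher-genus-free but $\tau$-dependent model problem) and proving the error estimates uniformly, and in particular pinning down the constant in \eqref{eq:qas-} rather than just the power $|s|^{1/(2n)}$. The extraction of the exact Lenard-operator form of \eqref{PIIn} from the Lax pair — matching each $\tau_\ell$ coefficient — is a finite but bookkeeping-heavy computation that I would organize by induction on the degree in $\zeta$ using the recursion \eqref{def L}.
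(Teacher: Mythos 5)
Your proposal follows essentially the same route as the paper: an IIKS/Riemann--Hilbert representation of $\frac{\d}{\d s}\log F$, identification of the resulting constant-jump RH problem with the Painlev\'e II hierarchy Lax pair of \cite{CIK}, the identity $\frac{\d^2}{\d s^2}\log F=-q^2$ from the $\zeta^{-1}$ term of the $s$-equation, double integration using $F\to 1$ and $(\log F)'\to 0$ at $+\infty$, and Deift--Zhou analysis (small-norm at $+\infty$, $g$-function plus global/local parametrices at $-\infty$ for $\varrho=1$). The one step you gloss over is that $K(x,y)$ is \emph{not} itself a $2\times 2$ IIKS kernel on the real line (writing $K=\int_0^\infty\varphi(x+u)\varphi(y+u)\,\d u$ and using the order-$2n$ ODE for $\varphi$ would give a $2n\times 2n$ integrable structure); the paper first proves $\det(\Id-\varrho\,\mathcal K|_{[s,\infty)})=\det(\Id-\varrho\,\mathcal L_s)$ for a genuinely $2\times 2$ integrable operator $\mathcal L_s$ on $\gamma_L\cup\gamma_R$ via a Fourier-type conjugation (Proposition \ref{mainthm}), and only then does the $2\times 2$ RH problem with jumps on those contours become available.
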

In \cite{FermMMarxiv}, the authors studied $F(s;1)$  for the case of monomial $p_{2n + 1}$. Generalizing the approach used by Tracy and Widom in \cite{TracyWidomLevel}, they proved that $F(s;1)$ has an explicit expression in terms of a particular solution of the quite simple Hamiltonian system of dimension $2n$ :\begin{equation}\label{Hamsystem} q'_p = q_{p+1} - q_0u_p, \quad u'_p = -q_0q_p, \quad 0\leq p\leq 2n - 1.\end{equation} Namely, $\frac{\d^2}{\d s^2}\log F(s;1) = -q_0^2(s)$. Then, they conjectured that this Hamiltonian system implies that $q_0$ solves the equation of order $2n$ of the Painlev\'e II hierarchy, where all the parameters $\tau_1,\ldots,\tau_{n-1}$ are set to $0$. The conjecture had been verified by the authors up to large $n$, by explicit computations. Theorem \ref{thm:PIIhier} proves an extended version of this conjecture connecting directly the gap probability of these processes with the Painlev\'e II hierarchy, without using the Hamiltonian system. It would still be interesting, as suggested by the authors of \cite{FermMMarxiv}, to find another proof of their conjecture comparing directly \eqref{Hamsystem} with the well known Hamiltonian system classically associated to the Painlev\'e II hierarchy, see \cite{MazzoccoMo}.
\begin{remark}
More detailed asymptotics for $q((-1)^{n+1}s;\varrho)$ as $s\to +\infty$ can in fact be predicted with the following heuristic arguments.
If $q\big((-1)^{n+1}s;\varrho\big)$ decays rapidly as $s\to +\infty$, it can be expected that the term $\frac{\d^{2n}}{\d s^{2n}}q$ at the left and the term $sq$ at the right are dominant in the equation \eqref{PIIn} for large $s$. The equation then reduces formally to the {generalized Airy} equation \[\frac{\d^{2n}}{\d s^{2n}} f(s) = s f (s),\] and it can therefore be expected that $q\big((-1)^{n+1}s\big)$ behaves for large $s$ like a real decaying solution of this equation.
The function $f(s) = \sqrt{\varrho}\, \Ai_{2n + 1}\big((-1)^{n+1}s\big)$ with
\be\label{def:genAiry}
\Ai_{2n + 1}(s) := (-1)^{n+1} \int_{\gamma_L} \frac{\d \lambda}{2 \pi i}\, {\rm e}^{(-1)^n\frac{\lambda^{2n + 1}}{2n + 1} + s \lambda}, 
\ee
is such a solution for any $\varrho$, see e.g. \cite{FermMM}.
\end{remark}

When all the parameters $\tau_1,\ldots,\tau_{n-1}$ are set to $0$, we confirm the above heuristics by proving that, with $q$ as in Theorem \ref{thm:PIIhier},
\be
	q\big((-1)^{n+1}s;\varrho \big)\sim \sqrt{\varrho}\, \Ai_{2n+1}(s), \quad \text{as} \; s\rightarrow \infty
\ee
(see equation \eqref{eq:asqAiry} below).
As explained in Remark \ref{remark2as}, proving this rigorously for general $\tau_1,\ldots,\tau_{n-1}$ would require a rather technical saddle point analysis, and we therefore do not pursue this.

\begin{remark}
For $n=1$, \eqref{eq:TWidentity} is the well-known Tracy-Widom distribution \cite{TracyWidomLevel}, which is among others, for $\varrho=1$, the limit distribution for extreme eigenvalues in many random matrix ensembles. For general $n$ {and $\varrho = 1$}, \eqref{eq:TWidentity} has the same structure as generalizations of the Tracy-Widom distributions obtained in \cite{CIK}, see also \cite{AkemannAtkin}, 
which describe the limit distribution for the extreme eigenvalues in unitary random matrix models with critical edge points.
However, we emphasize that the relevant solutions in \cite{CIK} are different from the ones present here, since they correspond to $\alpha=1/2$ and have different asymptotics. Therefore, the distribution functions appearing in \cite{CIK} are different from $F(s;1)$. 
\end{remark}

One cannot evaluate the Fredholm determinants $F(s;\varrho)$ explicitly for a given value of $s$, and therefore it is natural to try to approximate them for large values of $s$. In general, the $s\to +\infty$ asymptotics can be deduced directly from the asymptotics of the kernel $K(x,y)$ as $x,y\to +\infty$, but the $s\to -\infty$ are more delicate, and are commonly referred to as {\em large gap asymptotics} \cite{Forrester, Krasovsky}.
In our final result, we obtain $s\to -\infty$ asymptotics for $F(s;1)$ up to the value of a multiplicative constant. For the sake of clarity, let us first state the result in the simpler case where the parameters $\tau_1,\ldots,\tau_{n-1}$ are set to $0$.

\begin{theorem}\label{thm:largegaptau0}

Let $n\in\mathbb N$ and let $F(s;1)$ be the Fredholm determinant defined in \eqref{def:Fredholm} with $K$ given by \eqref{def:kernel} and monomial $p_{2n + 1}(x) := \frac{x^{2n + 1}}{2n + 1}$. 
As $s\to -\infty$, there exists a constant $C>0$, possibly depending on $n$, such that we have the asymptotics 
\begin{equation}\label{confirmation}
 F(s;1)=C |s|^{c}{\rm e}^{-\frac{n^2}{(n+1)(2n + 1)}{2n \choose n}^{-\frac{1}{n}}|s|^{2 + \frac{1}n}}\left(1+o(1)\right),
\end{equation}
{with $c=-\frac{1}{8}$ if $n=1$ and $c=-\frac{1}{2}$ otherwise.}
Moreover, the asymptotics \eqref{eq:qas-} can be improved to 
\begin{equation}
q({(-1)^{n+1}}s;1)=\Bigg(\frac{n!^2}{(2n)!}|s| \Bigg)^\frac{1}{2n}+\frac{c}{2}\Bigg(\frac{(2n)!}{n!^2} \Bigg)^{\frac{1}{2n}}|s|^{-2 - \frac{1}{2n}}+\mathcal O\left(|s|^{-2-\frac{1}{n}}\right) ,\qquad \mbox{ as $s\to -\infty$.}\label{eq:qas-2bis}
\end{equation}
\end{theorem}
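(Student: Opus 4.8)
\noindent\emph{Sketch of proof.}
The plan is to read off both statements from the identity \eqref{eq:TWidentity} of Theorem~\ref{thm:PIIhier}: once the $s\to-\infty$ behaviour of the transcendent $q\big((-1)^{n+1}s;1\big)$ is known with sufficient precision, \eqref{confirmation} follows by integrating twice, while \eqref{eq:qas-2bis} is itself the main intermediate step. Throughout set $\widetilde q(s):=q\big((-1)^{n+1}s;1\big)$ and $a:=\big(n!^2/(2n)!\big)^{\frac1{2n}}$, so $a^2={2n\choose n}^{-\frac1n}$ and, by Theorem~\ref{thm:PIIhier}, $\widetilde q(s)\sim a\,|s|^{\frac1{2n}}$ as $s\to-\infty$, whereas $\widetilde q$ and all its derivatives decay rapidly as $s\to+\infty$.

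\textbf{Step 1: refining the asymptotics of $\widetilde q$.} The function $\widetilde q$ solves the equation of order $2n$ in \eqref{PIIn} with $\alpha=0$ and $\tau_1=\dots=\tau_{n-1}=0$ (for $n$ even after the reflection $s\mapsto-s$, which only changes the sign of $\widetilde q_s$ inside $\widetilde q_s-\widetilde q^2$). I would insert the ansatz $\widetilde q=\widetilde q_0+w$ with $\widetilde q_0(s):=a|s|^{\frac1{2n}}$ and $w=o(\widetilde q_0)$. Using the Lenard recursion \eqref{def L} one checks that $\mathcal L_n[\widetilde q_{0,s}-\widetilde q_0^2]=\tfrac12{2n\choose n}(-\widetilde q_0^2)^n\big(1+o(1)\big)$, so the dominant terms of the equation balance exactly when $a^{2n}={2n\choose n}^{-1}$ --- recovering \eqref{eq:qas-} --- and leave behind an explicit negative power of $|s|$. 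Linearising around $\widetilde q_0$ turns the equation for $w$ into $\big(c_n|s|+\text{lower order}\big)\,w=(\text{explicit forcing})+O(w^2)$, with the top derivative term subdominant; solving this dominant balance gives $w(s)=\tfrac c2\big((2n)!/n!^2\big)^{\frac1{2n}}|s|^{-2-\frac1{2n}}+O\big(|s|^{-2-\frac1n}\big)$, which is \eqref{eq:qas-2bis}, and therefore
\be\label{sketch:q2}
\widetilde q(s)^2=a^2|s|^{\frac1n}+c\,|s|^{-2}+O\big(|s|^{-2-\frac1{2n}}\big),\qquad s\to-\infty .
\ee
The constant $c$ falls out of the coefficient matching, equal to $-\tfrac18$ for $n=1$ and to $-\tfrac12$ for $n\ge2$; the value differs at $n=1$ because there the equation is the genuine Painlev\'e~II equation with its distinguished cubic term, whereas for $n\ge2$ the dominant nonlinearity is of higher degree. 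The real issue is \emph{rigour}: to promote this formal matching to a genuine asymptotic statement one should treat the linearised equation for $w$ by variation of parameters against the Green's function of the generalised--Airy operator $\tfrac{\d^{2n}}{\d s^{2n}}-c_n|s|$ and close a contraction (or Gronwall) estimate on a half-line $(-\infty,-M]$; alternatively, one pushes the Riemann--Hilbert steepest--descent analysis underlying Theorem~\ref{thm:PIIhier} one further order.

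\textbf{Step 2: integrating twice.} By Theorem~\ref{thm:PIIhier}, $\tfrac{\d}{\d s}\log F(s;1)=\int_s^{+\infty}\widetilde q(x)^2\,\d x$, and since $\widetilde q$ satisfies \eqref{PIIn} there is a first integral --- a differential polynomial $\mathcal H$ in $s$ and $\widetilde q,\widetilde q',\dots,\widetilde q^{(2n-1)}$ with $\tfrac{\d}{\d s}\mathcal H[\widetilde q]=-\widetilde q^2$ --- provided by the Hamiltonian structure of the hierarchy (for $n=1$, $\mathcal H[\widetilde q]=(\widetilde q')^2-\widetilde q^4-s\widetilde q^2$; cf.\ \cite{MazzoccoMo}). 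Hence $\tfrac{\d}{\d s}\log F(s;1)=\mathcal H[\widetilde q](s)$, the boundary contribution at $+\infty$ vanishing by \eqref{eq:qas+0}. Evaluating $\mathcal H$ on \eqref{sketch:q2} gives $\tfrac{\d}{\d s}\log F(s;1)=\tfrac{a^2n}{n+1}|s|^{1+\frac1n}-c\,|s|^{-1}+o\big(|s|^{-1}\big)$ as $s\to-\infty$, and --- crucially --- the explicit shape of $\mathcal H$ shows that \emph{no constant term} is produced, because on the root-type asymptotics every monomial of $\mathcal H[\widetilde q]$ is either growing or decaying; this is precisely what excludes a term linear in $|s|$ from $\log F$. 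Integrating once more,
\[
\log F(s;1)=-\frac{a^2n^2}{(n+1)(2n+1)}\,|s|^{2+\frac1n}+c\log|s|+\mathrm{const}+o(1),\qquad s\to-\infty,
\]
and exponentiating, using $a^2={2n\choose n}^{-\frac1n}$, yields \eqref{confirmation} with $C=e^{\mathrm{const}}>0$.

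\textbf{Main obstacle.} The hard part is Step~1: upgrading the formal coefficient matching to \eqref{eq:qas-2bis} with a genuine, uniform remainder, simultaneously for all $n$, which requires controlling the generalised--Airy Green's function and pinning down the precise order of the leading residual. A secondary, mostly bookkeeping, point is to check that for general $n$ the Hamiltonian $\mathcal H$ produces no $O(1)$ contribution on the root asymptotics (immediate for $n=1$, a short computation using a couple of terms of \eqref{eq:qas-2bis} in general). Finally, as always in large--gap problems, the multiplicative constant $C$ --- the constant problem --- lies beyond the scope of this method and is left open.
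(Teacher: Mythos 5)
Your route (formal ODE matching for $q$, then a first integral of the hierarchy to integrate $q^2$) is genuinely different from the paper's, which proves both \eqref{eq:qas-2bis} and \eqref{confirmation} by a Deift--Zhou steepest descent analysis of the $\Psi$-RH problem: a $g$-function built from $\zeta_0$ and the coefficients $c_j$, a global parametrix, local Airy parametrices at $\pm\zeta_0$, and a small-norm argument in which the constant $c$ drops out of the explicit residues $B_{\pm,11}$ in \eqref{Bvalues}, while $\frac{\d}{\d s}\log F=\Gamma_{1,11}$ is evaluated directly from $g_1(s)$ and $R_{1,11}$ (so no first integral is needed, and the comparison of the two computations of $\log F$ is what pins down the unknown $M_{12}$ in the expansion of $q$).

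However, your Step 1 has a genuine gap that the proposed fixes do not close. The coefficient of $|s|^{-2-\frac{1}{2n}}$ in \eqref{eq:qas-2bis} is a property of the \emph{particular} solution singled out by the Stokes data $s_1=-s_{2n+1}=1$, not of the equation together with the leading-order behavior \eqref{eq:qas-}. Already for $n=1$, real solutions of Painlev\'e II sharing the leading behavior $q\sim\sqrt{-s/2}$ can differ by slowly decaying oscillatory corrections of order $|s|^{-1/4}$, which sit far above the $|s|^{-5/2}$ term you are trying to compute; for general $n$ no classification of solutions with root-type leading behavior is available. So matching coefficients in a \emph{formal} power-series ansatz, or running a contraction argument around the linearised generalised-Airy operator, at best produces \emph{some} solution with that expansion --- it does not show that $\widetilde q$ has a pure power-series asymptotic expansion in the first place, i.e.\ that the oscillatory corrections are absent for this solution. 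Excluding them is exactly the connection problem, and is why the paper carries the Riemann--Hilbert analysis of Section 4 one order beyond what is needed for \eqref{eq:qas-} (your own parenthetical ``alternatively, one pushes the RH analysis one further order'' is in fact the paper's proof, not an alternative to it). A secondary point: your Step 2 evaluates a first integral $\mathcal H[\widetilde q]$ involving $\widetilde q',\dots,\widetilde q^{(2n-1)}$ on the expansion \eqref{sketch:q2}, but asymptotics for these derivatives do not follow from an asymptotic expansion of $\widetilde q$ alone and would have to be established separately.
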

\begin{remark}The leading order term of $\log F(s;1)$ as $s\to -\infty$, namely $-\frac{n^2}{(n+1)(2n + 1)}{2n \choose n}^{-\frac{1}{n}}|s|^{2 + \frac{1}n}$, was already predicted in \cite{FermMM}, but without any information about the subleading terms.
Our approach to prove Theorem \ref{thm:largegaptau0} consists of first deriving asymptotics for the logarithmic derivative
$\frac{\d}{\d s}\log F(s;1)$, which we then integrate in $s$. A consequence is that the multiplicative constant $C$ in \eqref{confirmation} arises as a constant of integration, which we are not able to evaluate. Explicitly evaluating such multiplicative constants in large gap asymptotics of Fredholm determinants is in general a difficult task, see \cite{Krasovsky}. In the Airy case $n=1$ with $\varrho=1$, it was proved in \cite{DIK, BBdF} that $C=\rm e^{\frac{1}{24}\log 2+\zeta'(-1)}$, where $\zeta'$ is the derivative of the Riemann $\zeta$ function, by approximating the Fredholm determinant by a Toeplitz or Hankel determinant for which the corresponding constant can be obtained via the evaluation of a Selberg integral. It seems unlikely that a similar approach could lead to the evaluation of $C$ for $n>1$.
\end{remark}

{In the more general case where the parameters $\tau_1,\ldots,\tau_{n-1}$ are different from zero, the {asymptotics} of $F(s;1)=F(s;1;\tau_1,\ldots, \tau_{n-1})$ {change} in a rather subtle way; we refer to {Section} \ref{sec:4} for {further} details. 
Let 
$$ \lambda(z) := \sum_{k=1}^{n}(-1)^{n-k} {2k \choose k} \tau_{k} z^{2k},\qquad \tilde\lambda(z):=\sum_{k=1}^{n}(-1)^{n-k} {2k \choose k} \tau_{k} z^{k},$$
and define $\theta_1,\ldots, \theta_{2n}$ and $\theta_1^{[2]},\ldots, \theta_{2n}^{[2]}$ as follows:
\begin{equation}
\theta_i(\tau_1,\ldots,\tau_n) \equiv \theta_i := \begin{cases} \ds{2n \choose n}^{-\frac{1}{2n}} \quad \mathrm{for}\; i=0,\\[2ex]										 \dfrac{1}{2i - 1} \res {z = \infty} \lambda^{\frac{2i-1}{2n}}(z), \quad \mathrm{for}\; i \geq 1, 										 \end{cases}\label{thetajintro}
\end{equation}
	where the residue at infinity is minus the coefficient of $z^{-1}$ in the large $z$ expansion of the branch of $\lambda^{\frac{2i-1}{2n}}(z)$
	which is positive for large $z>0$, and similarly
\begin{equation}
\theta_i^{[2]}(\tau_1,\ldots,\tau_n) \equiv \theta_i^{[2]} := \begin{cases} \ds\theta_0^2,&i=0,\\[2ex]
														\ds\frac{\ds\tau_{n - 1}}{4n-2},&i = 1,\\[2ex]

\ds\dfrac{1}{i - 1} \res {z = \infty} \tilde\lambda^{\frac{i-1}{n}}(z), &i \geq 2. 							\end{cases}\label{formula:thetaj2intro}\end{equation}	
We will explain in Section \ref{sec:4} that these numbers are related to topological minimal models of type $A_n$ and, more specifically, to flat coordinates for the corresponding Frobenius manifolds. One can also compute $\{ \theta_j^{[2]}, \; j \geq 0 \}$ as follows:
	\begin{equation}\label{defthetaj2}
	\theta_j^{[2]}=\sum_{i=0}^j\theta_i\theta_{j-i}.
	\end{equation}
}
We can now state our result on large gap asymptotics in full generality.
\begin{theorem}\label{thm:largegap}
Let $n\in\mathbb N$, $\tau_1,\ldots, \tau_{n-1}\in\mathbb R$, and let $F(s;1)=F(s;1;\tau_1,\ldots, \tau_{n-1})$ be the Fredholm determinant defined in \eqref{def:Fredholm} with $K$ given by \eqref{def:kernel}--\eqref{def p}.
As $s\to -\infty$, there exists a constant $C>0$, possibly depending on $n$ and the parameters $\tau_1,\ldots,\tau_{n-1}$, such that 
we have the {asymptotics}
\begin{equation}\label{FIntro}
 \log F(s;1) = -\sum_{\substack{j=0 \\ j \neq n+1}}^{2n}\frac{n^2}{(n+1-j)(2n+1-j)}\theta_j^{[2]}|s|^{\frac{2n-j+1}{n}}   + c \log |s| + \log C + o(1),
\end{equation}
{with $c=-\frac{1}{8}$ if $n=1$ and $c=-\frac{1}{2}$ otherwise.}
Moreover, the asymptotics \eqref{eq:qas-} can be improved to 
\begin{equation}
q({(-1)^{n+1}}s;1)=\sum_{i = 0}^{2n} \theta_i |s|^{\frac{1}{2n} - \frac{i}n} + \frac{c}{2\theta_0}|s|^{-2 - \frac{1}{2n}}+\mathcal O\left(|s|^{-2 - \frac{1}{n}}\right) ,\qquad \mbox{ as $s\to -\infty$.}\label{eq:qas-2}
\end{equation}
\end{theorem}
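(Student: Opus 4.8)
The plan is to perform, as $s\to-\infty$, a Deift--Zhou steepest descent analysis of the Riemann--Hilbert problem attached in the preceding sections to the kernel \eqref{def:kernel}: its solution is a fundamental solution of the Lax pair for $q$, and it carries a differential identity for $\log F$. From the resulting asymptotics one reads off the complete expansion of $q\big((-1)^{n+1}s;1\big)$ --- this gives \eqref{eq:qas-2}, and in particular \eqref{eq:qas-} --- and also the asymptotics of $\frac{\d}{\d s}\log F(s;1)$, obtained either from the differential identity or, equivalently, from the formula $\frac{\d}{\d s}\log F(s;1)=\int_s^{+\infty}q^2\big((-1)^{n+1}x;1\big)\,\d x$ implied by \eqref{eq:TWidentity}. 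One integration in $s$ then produces \eqref{FIntro}, the constant $C$ entering genuinely as a constant of integration which, as in the remark following Theorem \ref{thm:largegaptau0}, we do not attempt to evaluate. Theorem \ref{thm:largegaptau0} is the special case $\tau_1=\dots=\tau_{n-1}=0$.

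The core of the argument, and the only place where general $\tau_1,\dots,\tau_{n-1}$ cause extra work, is the construction of the $g$-function. After a rescaling $\lambda\mapsto\theta_0|s|^{\frac1{2n}}\lambda$ the exponent in the jumps of the Riemann--Hilbert solution takes the form $|s|^{\frac{2n+1}{2n}}\big(\vartheta_0(\lambda)+\sum_{j=1}^{n-1}|s|^{-\frac{2n-2j}{2n}}\vartheta_j(\lambda)\big)$, so the $\tau_j$ enter only through finitely many explicit subleading powers of $|s|$, controlled by the polynomials $\lambda(z)$ and $\tilde\lambda(z)$ of the statement. One must then produce a scalar $g$, analytic off a single real band $J_s$, whose boundary values satisfy the usual equilibrium relations ($g_++g_-$ equals the effective potential on $J_s$, with the complementary inequality off it) and such that the conjugated jumps become exponentially close to the identity away from $J_s$. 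The numbers $\theta_i$ and $\theta_i^{[2]}$ are exactly the Laurent coefficients at $z=\infty$ of $g$ and of $g^2$ --- equivalently, of the relevant fractional powers of $\lambda$ and $\tilde\lambda$ --- which both explains their link to flat coordinates on the $A_n$ Frobenius manifold and, after double integration in $s$, yields the powers $|s|^{\frac{2n-j+1}{n}}$ with coefficients $\tfrac{n^2}{(n+1-j)(2n+1-j)}\theta_j^{[2]}$ in \eqref{FIntro}. The would-be resonant index $j=n+1$ carries the coefficient $\theta_{n+1}^{[2]}=\tfrac1n\res{z=\infty}\tilde\lambda(z)=0$, since $\tilde\lambda$ is a polynomial; this is why that term is absent and why no $|s|\log|s|$ term and no spurious term linear in $|s|$ appear. (The leading coefficients $\theta_i$ may also be confirmed by inserting the ansatz $q\sim\sum_i\theta_i|s|^{\frac1{2n}-\frac in}$ into \eqref{PIIn}.)

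Once $g$ is available the remaining steps are routine: open lenses around $J_s$, build a global parametrix from a scalar Szeg\H{o}-type function on $J_s$, install Airy parametrices at the (soft) endpoints of $J_s$, and verify that the ratio $\RR$ of the exact solution with the approximate one solves a small-norm Riemann--Hilbert problem, $\RR=\Id+\mathcal O\big(|s|^{-\kappa}\big)$ for a suitable $\kappa>0$, admitting a full asymptotic expansion in powers of $|s|^{-\frac1{2n}}$. Feeding the expansions of $\RR$ and of the parametrices into the formulas for $q$ and for $\frac{\d}{\d s}\log F$ yields the stated series: the local parametrices are responsible for the term $c\log|s|$ (with $c=-\tfrac18$ for $n=1$ and $c=-\tfrac12$ otherwise) and for the correction $\tfrac{c}{2\theta_0}|s|^{-2-\frac1{2n}}$ in \eqref{eq:qas-2}, while the size of $\RR$ accounts for the error terms $\mathcal O(|s|^{-2-\frac1n})$ and the $o(1)$ in \eqref{FIntro}.

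The main obstacle is the $g$-function analysis itself, for arbitrary real $\tau_1,\dots,\tau_{n-1}$ and uniformly for all large $|s|$: one must show that the genus-zero (single-band) configuration persists --- i.e.\ that $J_s$ does not split --- locate its endpoints, and carry out the residue-at-infinity computations that turn the abstract $g$ into the explicit constants $\theta_j^{[2]}$ and the exact exponents of \eqref{FIntro}. (By contrast, the analogous analysis at $s\to+\infty$ is genuinely more delicate for $\tau\ne0$, which is why we do not establish the Airy-type refinement of the $+\infty$ asymptotics of $q$ in that generality.) A secondary technical point is matching the Airy parametrices to the global parametrix accurately enough to pin down the precise value of $c$ and of the $|s|^{-2-\frac1{2n}}$ coefficient in $q$.
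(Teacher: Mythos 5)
Your strategy coincides with the paper's proof: a $g$-function with a single cut $[-\zeta_0,\zeta_0]$ whose endpoint and $1/\zeta$-coefficient are expanded via residues at infinity of fractional powers of $\lambda$ and $\tilde\lambda$ (whence the $\theta_j$, $\theta_j^{[2]}$ and the vanishing of the $j=n+1$ term), followed by the standard global parametrix, Airy local parametrices at $\pm\zeta_0$, a small-norm problem for $R$, and integration in $s$ with $C$ as an undetermined integration constant. The only nuance you gloss over is that the two routes to $\frac{\d}{\d s}\log F$ are not used interchangeably in the paper but are played off against each other: comparing the integral of $q^2$ with the expansion coming from the differential identity $\frac{\d}{\d s}\log F=\Gamma_{1,11}$ is precisely what determines the otherwise uncomputed second-order entry $M_{12}$ of the small-norm expansion, and hence the coefficient $\frac{c}{2\theta_0}$ of $|s|^{-2-\frac{1}{2n}}$ in \eqref{eq:qas-2}.
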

For $n = 1$, there are no parameters of deformation $\tau_i$, so that Theorem \ref{thm:largegap} gives the well known large gap asymptotics for the Tracy--Widom distribution \cite{BBdF, DIK,TracyWidomLevel}
$$
	\log F(s;1) = -\frac{|s|^3}{12} -\frac{1}8 \log |s| +  \log C + o(1) \quad \text{as} \; s\rightarrow -\infty. 
$$ 
In the first non-trivial case $n = 2$, we obtain
$$
	\log F(s;1) =  -\frac{2}{45}\sqrt{6}|s|^{5/2} - \frac{1}{12}\tau_1|s|^2 - \frac{\sqrt{6}}{54}\tau_1^2|s|^{3/2} - \frac{\sqrt{6}}{432}\tau_1^4|s|^{1/2} - \frac{1}2\log|s| +  \log C + o(1) \quad \text{as} \; s\rightarrow -\infty. 
$$
In the case $n = 3$, we have two deformation parameters $\tau_1,\tau_2$, and the large gap asymptotics read
\begin{align*}
	\log F(s;1)= &-{\frac {9}{560}}{20}^{\frac{2}{3}}{|s|}^{7/3}-\frac{1}{20}\tau_{{2}}{|s|}^{2}+{\frac {3\sqrt [3]{20}}{1000}} \left( 10\tau_{{1}}-3\tau_{{2}}^{2}
 \right) {|s|}^{5/3}+\frac{3}{2000} {20}^{\frac{2}{3}}\tau_{{2}} \left( 5\tau_{{1}}-\tau_{{2}}^{2} \right) {|s|}^{4/3} \\
 &-\frac{\sqrt [3]{20}}{5000} \tau_2 \left( 50 \tau_1^{2}-25\tau_2^2\tau_1+3\tau_2^{4} \right) |s|^{2/3}+\frac{{20}^{\frac{2}{3}}}{900000}\left(1000\tau_1^{3}-1800\tau_2^2\tau_1^{2}+630\tau_{2}^{4}\tau_{1}-63 \tau_{{2}}^{6} \right) |s|^{1/3}\\
			&-\frac{1}2 \log |s| +  \log C +  o(1) \quad \text{as} \; s\rightarrow -\infty.
\end{align*}
When $n$ increases, formulas get longer and longer, but they remain always completely explicit.\\

Let us end this introduction with an outline for the rest of this paper.
In Section \ref{sec:2}, we will show that the Fredholm determinants $F(s;\varrho)$ are equal to Fredholm determinants of a simpler integral operator, obtained through Fourier conjugation. This crucial observation will allow us to show that $F(s;\varrho)$ can be written as the Fredholm determinant of an operator which is of integrable type. We then use the formalism developed by Its, Izergin, Korepin, and Slavnov \cite{IIKS} to express the logarithmic derivative of $F$ in terms of a Riemann--Hilbert (RH) problem. We show that this RH problem is equivalent to a special case of the RH problem associated to the Painlev\'e II hierarchy. This enables us to prove formula \eqref{eq:TWidentity}.
In Section \ref{sec:3}, we will apply the Deift--Zhou steepest descent method to obtain $s\to +\infty$ asymptotics for the Painlev\'e II RH problem and to prove the asymptotics \eqref{eq:qas+0} for the Painlev\'e transcendent $q\big((-1)^{n+1}s;\varrho\big)$. 
In Section \ref{sec:4}, we perform a similar but somewhat more complicated asymptotic analysis as $s\to -\infty$, which will lead to the proof of \eqref{eq:qas-}, and to the proof of the large gap asymptotics stated in Theorems \ref{thm:largegaptau0} and \ref{thm:largegap}.
An important ingredient in this analysis is the construction of a suitable $g$-function, for which we need to exploit results related to the theory of topological minimal models of type $A_n$. The general strategy to prove Theorem \ref{thm:largegaptau0} and Theorem \ref{thm:largegap} shows similarities with the one from \cite{ClaeysGirottiStivigny}, where large gap asymptotics were obtained for Fredholm determinants arising from product random matrices.
Finally, in Appendix \ref{appendix:genAiry}, we prove that the kernels \eqref{def:kernel} define point processes.

\section{Painlev\'e expression for the Fredholm determinant}\label{sec:2}
This section is devoted to the proof of \eqref{eqthm1} and \eqref{eq:TWidentity}. In the following we denote with $\chi_R$ the indicator function of the contour $\gamma_R$, and analogously for $\gamma_L$. Let $K$ be a kernel of the form \eqref{def:kernel}. We will start by showing that the Fredholm determinant $\det(\Id -  \left. \varrho\, \mathcal{K}\right|_{[s,+\infty)})$ is equal to the Fredholm determinant of another integral operator $\mathcal L_s$ with kernel $L_s$ which is of integrable type, according to the terminology of Its, Izergin, Korepin, and Slavnov \cite{IIKS}. This means that $L_s(\lambda,\mu)$ has the form 
\be\label{IIKSform}
		L_s(\lambda, \mu) = \frac{f^{\mathrm{T}}(\lambda)g(\mu)}{\lambda - \mu},
	\ee
for some column vectors $f,g$ which are such that $f^{\mathrm{T}}(\lambda)g(\lambda)=0$.

\begin{proposition}\label{mainthm}
	Let $L_s$ be the kernel of the form \eqref{IIKSform} with
	\be
		f(\lambda) := \frac{1}{2 \pi i} \left( \begin{array}{c}
								{\rm e}^{\frac{(-1)^{n+1}}{2}p_{2n+1}(\lambda)} \chi_R(\lambda)\\
								\\
								{\rm e}^{\frac{(-1)^{n}}{2}p_{2n+1}(\lambda) + s \lambda} \chi_L(\lambda)
				\end{array} \right), \quad g(\mu) := \left( \begin{array}{c}
								{\rm e}^{\frac{(-1)^{n}}{2}p_{2n+1}(\mu)} \chi_L(\mu)\\
								\\
								{\rm e}^{\frac{(-1)^{n+1}}{2}p_{2n+1}(\mu) - s \mu} \chi_R(\mu)
				\end{array} \right),
	\ee
	{and $\mathcal L_s$ the associated operator acting on $L^2(\gamma_L\cup\gamma_R)$.}
Then we have 
	\be\label{eqdets}
		\mathrm{det}(\Id - \, \varrho\, \mathcal{L}_s) = \det \le(\Id -  \left. \varrho\,\mathcal{K}\right|_{[s,+\infty)} \ri).
	\ee
\end{proposition}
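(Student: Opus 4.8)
The plan is to recognize the kernel $K(x,y)$ in \eqref{def:kernel} as a product of two operators and to use the cyclic property of Fredholm determinants, $\det(\Id-AB)=\det(\Id-BA)$, to swap the order of composition. First I would write the restricted operator $\left.\varrho\,\mathcal K\right|_{[s,+\infty)}$ as $\varrho\,\mathcal A\mathcal B$, where $\mathcal B\colon L^2(s,+\infty)\to L^2(\gamma_L\cup\gamma_R)$ and $\mathcal A\colon L^2(\gamma_L\cup\gamma_R)\to L^2(s,+\infty)$ are the integral operators whose kernels are read off directly from the double-contour integral representation. Concretely, the exponent $(-1)^{n+1}(p_{2n+1}(\mu)-p_{2n+1}(\lambda))-x\mu+y\lambda$ factorizes as a function of $(\lambda,x)$ times a function of $(\mu,y)$ times the Cauchy kernel $\frac{1}{\lambda-\mu}$, so one can split
\[
K(x,y)=\int_{\gamma_L\cup\gamma_R}\!\!\big(\text{kernel }A(x,w)\big)\,\big(\text{kernel }B(w,y)\big)\,\d w
\]
by inserting the $\lambda$-integral into $\mathcal A$ and the $\mu$-integral into $\mathcal B$; the Cauchy factor $\frac{1}{\lambda-\mu}$ gets absorbed into whichever of the two kernels is convenient. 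The $x$-dependence sits in $e^{-x\mu}$ and the $y$-dependence in $e^{y\lambda}$, and the restriction of $x,y$ to $[s,+\infty)$ will, after the swap, turn into an integral $\int_s^{+\infty}e^{-x\mu+x\lambda}\d x = \frac{e^{s(\lambda-\mu)}}{\mu-\lambda}$ (convergent because the contours have the prescribed arguments $\pm\frac{n}{2n+1}\pi$, making $\Re(\lambda-\mu)$ have the right sign); this is precisely the mechanism that produces both the Cauchy kernel $\frac{1}{\lambda-\mu}$ in \eqref{IIKSform} and the shift factors ${\rm e}^{\pm s\lambda}$, ${\rm e}^{\pm s\mu}$ appearing in $f$ and $g$.

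Next I would carry out the swap: $\det(\Id-\varrho\,\mathcal A\mathcal B)=\det(\Id-\varrho\,\mathcal B\mathcal A)$, and compute the kernel of $\mathcal B\mathcal A$ on $L^2(\gamma_L\cup\gamma_R)$. Performing the $x$-integration over $[s,+\infty)$ as above, and splitting $p_{2n+1}$ symmetrically as $\tfrac12 p_{2n+1}(\lambda)+\tfrac12 p_{2n+1}(\lambda)$ so that the exponential weight distributes evenly between the $f$ and $g$ vectors, one reads off that $\mathcal B\mathcal A=\mathcal L_s$ with $L_s(\lambda,\mu)=\frac{f^{\mathrm T}(\lambda)g(\mu)}{\lambda-\mu}$ and $f,g$ exactly the two-component vectors in the statement; the indicator functions $\chi_R,\chi_L$ keep track of which piece of $\gamma_L\cup\gamma_R$ each component lives on, and the $2\times 2$ structure records that the original kernel had a $\lambda$ on the $L$-part and a $\mu$ on the $R$-part. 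The condition $f^{\mathrm T}(\lambda)g(\lambda)=0$ holds automatically because the first component of $f$ is supported on $\gamma_R$ while the first component of $g$ is supported on $\gamma_L$, and similarly for the second components, so the pointwise product vanishes — which is exactly what makes $L_s$ a bona fide integrable kernel (no diagonal singularity).

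The main obstacle is analytic rather than algebraic: one must justify that all the operators involved are trace class (or at least Hilbert–Schmidt with trace-class products) so that the determinants are well-defined and the identity $\det(\Id-AB)=\det(\Id-BA)$ applies, and that Fubini is legitimate when interchanging the $\d x$ integration with the contour integrals. This rests on the decay ${\rm e}^{(-1)^{n+1}p_{2n+1}(\lambda)}$ along $\gamma_L,\gamma_R$: since $\gamma_{L},\gamma_R$ are asymptotic to rays of argument $\pm\frac{n}{2n+1}\pi$, on these rays $\Re\big((-1)^{n+1}p_{2n+1}(\lambda)\big)\to-\infty$ like $|\lambda|^{2n+1}$, which dominates the linear terms $\pm s\lambda$ and $\mp x\mu$ and guarantees absolute convergence of every integral in sight as well as the required operator-ideal bounds; one should also check that $F(s;\varrho)$ itself is finite, which the paper anyway establishes via the point-process statement in Appendix \ref{appendix:genAiry}. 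Once these estimates are in place, the determinant identity \eqref{eqdets} follows immediately from the cyclic property, completing the proof. \QED
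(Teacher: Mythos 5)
Your overall strategy (factor the restricted operator through $L^2(\gamma_L\cup\gamma_R)$, then swap the order of composition under the determinant) is in the right spirit, and the mechanism you identify --- $\int_s^{+\infty}e^{x(\lambda-\mu)}\,\d x=\frac{e^{s(\lambda-\mu)}}{\mu-\lambda}$ generating both the Cauchy factor and the shifts $e^{\pm s(\cdot)}$ --- is indeed what drives the proof. However, the central identification $\mathcal B\circ\mathcal A=\mathcal L_s$ cannot hold as stated, and this is a genuine gap. With respect to $L^2(\gamma_L\cup\gamma_R)=L^2(\gamma_L)\oplus L^2(\gamma_R)$, the operator $\mathcal L_s$ is purely off-diagonal, and its two blocks are asymmetric: the block $\mathcal F:L^2(\gamma_L)\to L^2(\gamma_R)$, with kernel $\frac{1}{2\pi i}\,\frac{e^{\frac{(-1)^{n+1}}{2}(p_{2n+1}(\mu)-p_{2n+1}(\lambda))}}{\mu-\lambda}$, carries no $s$-dependence at all, while the block $\mathcal G:L^2(\gamma_R)\to L^2(\gamma_L)$ carries all of it. Now any composition $\mathcal B\circ\mathcal A$ with $\mathcal A:L^2(\gamma_L\cup\gamma_R)\to L^2(s,+\infty)$ and $\mathcal B:L^2(s,+\infty)\to L^2(\gamma_L\cup\gamma_R)$ has kernel $\int_s^{+\infty}B(w,x)A(x,w')\,\d x$, and convergence as $x\to+\infty$ forces the kernels of $\mathcal A$ and $\mathcal B$ to contain $e^{-x\mu}$ on $\gamma_R$ and $e^{+x\lambda}$ on $\gamma_L$. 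Then either (i) $\mathcal A$ is supported on one contour and $\mathcal B$ on the other, in which case $\mathcal B\circ\mathcal A$ has only one nonzero block; or (ii) both are supported on both contours, in which case the diagonal blocks, e.g.\ $\int_s^{+\infty}e^{-x(\mu+\mu')}\,\d x\neq 0$ for $\mu,\mu'\in\gamma_R$, do not vanish. In either case $\mathcal B\circ\mathcal A\neq\mathcal L_s$: the swap can only produce a one-contour operator such as $\mathcal G\circ\mathcal F$ on $L^2(\gamma_L)$, never the two-block integrable operator $\mathcal L_s$.

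The missing ingredient is the passage from $\det(\Id-\varrho\,\mathcal G\circ\mathcal F)$ to $\det(\Id-\varrho\,\mathcal L_s)$. This is supplied by the block-triangular factorization: multiplying $\Id-\varrho\,\mathcal L_s=\left(\begin{smallmatrix}\Id&-\sqrt\varrho\,\mathcal F\\ -\sqrt\varrho\,\mathcal G&\Id\end{smallmatrix}\right)$ on the right by $\left(\begin{smallmatrix}\Id&\sqrt\varrho\,\mathcal F\\ 0&\Id\end{smallmatrix}\right)$, which has determinant one, yields $\det(\Id-\varrho\,\mathcal L_s)=\det(\Id-\varrho\,\mathcal G\circ\mathcal F)$. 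This is exactly how the paper argues, run in the opposite direction: it first reduces $\mathcal L_s$ to $\mathcal G\circ\mathcal F$ by this block identity, and then conjugates $\mathcal G\circ\mathcal F$ by a Fourier-type transform $\mathfrak T$ to recover $\left.\mathcal K\right|_{[s,+\infty)}$; your ``swap'' is equivalent to that conjugation step, so once you insert the block identity your argument closes. One further point to make rigorous: for the left-hand side of \eqref{eqdets} to be a bona fide Fredholm determinant you need $\mathcal L_s$ --- hence each of $\mathcal F$ and $\mathcal G$ separately --- to be trace class, not merely Hilbert--Schmidt with trace-class products; the paper obtains this by writing each of $\mathcal F,\mathcal G$ as a product of two Hilbert--Schmidt Cauchy-type operators factored through an auxiliary contour $\gamma_0=i\mathbb R+\varepsilon$.
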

\proof

Using the fact that $L^2(\gamma_L \cup \gamma_R) = L^2(\gamma_L) \oplus L^2(\gamma_R)$, one can write the operator $\Id -  \varrho\, \mathcal L_s$ in block form as
\be
	\Id -\varrho \, \mathcal{L}_s = \left( \begin{array}{c|c}
				\Id & -\sqrt\varrho\, \F \\
				\hline
				-\sqrt\varrho\, \G & \Id
			\end{array} \right)
\ee
where $\F : L^2(\gamma_L) \longrightarrow L^2(\gamma_R)$ and $\G : L^2(\gamma_R) \longrightarrow L^2(\gamma_L)$ are respectively given by the kernels
\be
	F(\mu,\lambda) = \frac{1}{2 \pi i} \frac{{\rm e}^{\frac{(-1)^{n+1}}{2}(p_{2n+1}(\mu)-p_{2n+1}(\lambda))}}{\mu - \lambda}, \quad \quad G(\xi,\mu) = \frac{1}{2 \pi i} \frac{{\rm e}^{\frac{(-1)^{n}}{2}(p_{2n+1}(\xi)-p_{2n+1}(\mu))+ s\xi - s\mu}}{\xi - \mu}.
\ee
{This implies that the operators} $\F$ and $\G$ (and hence $\mathcal{L}_s$) are Hilbert--Schmidt, since
\begin{gather}
\|\F \|_2^2 = \frac{1}{4 \pi^2} \int_{\gamma_L} |\d \lambda| \int_{\gamma_R} | \d\mu | \frac{{\rm e}^{\mathrm{Re}\left[p_{2n+1}(\mu) - p_{2n+1}(\lambda)\right]}}{|\mu - \lambda| ^2} < + \infty, \label{FoperatorHS}
\end{gather}
and analogously for $\G$. We { need a bit more and} prove that actually $\F$ and $\G$ are also \emph{trace class}, {such} that the Fredholm determinant of $\Id - \varrho\, \mathcal L_s$ is well defined. To see this, introduce a third contour $\gamma_0 := i\mathbb{R} + \varepsilon$ which does not intersect with $\gamma_L$ and $\gamma_R$. We start with $\G$ and we introduce two operators
\be
	\mathcal C^{(1)} : L^2(\gamma_R) \longrightarrow L^2(\gamma_0) ; \quad \mathcal C^{(2)} : L^2(\gamma_0) \longrightarrow L^2(\gamma_L)
\ee
with kernels
\be
	C^{(1)}(\zeta,\mu) := \frac{{\rm e}^{\frac{(-1)^{n+1}}{2} p_{2n + 1}(\mu) - s\mu}}{2\pi i(\zeta - \mu)}; \quad C^{(2)}(\xi,\zeta) := \frac{{\rm e}^{\frac{(-1)^{n}}{2} p_{2n + 1}(\xi) + s\xi} }{2\pi i(\zeta - \xi)}.
\ee
With the same computation as in \eqref{FoperatorHS}, we prove that both $\mathcal C^{(1)}$ and $\mathcal C^{(2)}$ are Hilbert-Schimdt operators. Moreover, a residue argument shows that 
$$
	\G = \mathcal C^{(2)} \circ \mathcal C^{(1)},
$$  
hence $\G$ is a trace-class operator. For the operator $\F$, the same reasoning applies if we define the operators
\be
	\mathcal D^{(1)} : L^2(\gamma_L) \longrightarrow L^2(\gamma_0) ; \quad \mathcal D^{(2)} : L^2(\gamma_0) \longrightarrow L^2(\gamma_R)
\ee 
with kernels
\be
	D^{(1)}(\zeta,\lambda) := \frac{{\rm e}^{\frac{(-1)^{n}}{2} p_{2n + 1}(\lambda) }}{2\pi i(\lambda - \zeta)}; \quad D^{(2)}(\mu,\zeta) := \frac{{\rm e}^{\frac{(-1)^{n+1}}{2} p_{2n + 1}(\mu)} }{2\pi i(\zeta - \mu)}.
\ee

Hence we can use the following chain of equalities \bea\label{KsGF}
	\mathrm{det}(\Id - \varrho \, \mathcal{L}_s) &=& \mathrm{det}\left( \begin{array}{c|c}
				\Id & -\sqrt\varrho \,\mathcal{F} \\
				\hline
				-\sqrt\varrho\, \mathcal{G} & \Id
			\end{array} \right) = \mathrm{det} \Bigg[ \left( \begin{array}{c|c}
				\Id & -\sqrt\varrho \,\mathcal{F} \\
				\hline
				-\sqrt\varrho \,\mathcal{G} & \Id
			\end{array} \right) \left( \begin{array}{c|c}
				\Id & \sqrt\varrho\mathcal{F} \\
				\hline
				0 & \Id
			\end{array} \right) \Bigg] %= \mathrm{det}_2(\Id - \G \circ \F){\rm e}^{-\mathrm{Tr}(\G \circ \F)} \nonumber\\
			\nonumber \\
			&=& \det(\Id - \varrho \,\G \circ \F)
\eea
where $\G \circ \F : L^2(\gamma_L) \longrightarrow L^2(\gamma_L)$ has kernel
{\be
	\frac{1}{(2 \pi i)^2} {\rm e}^{(-1)^n \frac{p_{2n+1}(\xi)+p_{2n+1}(\lambda) }{2} + s \xi} \int_{\gamma_R} \d \mu \frac{{\rm e}^{(-1)^{n+1}p_{2n+1}(\mu) - s\mu)}}{(\xi - \mu)(\mu - \lambda)}.
\ee 
}
{
It is now important to observe that the above arguments are valid also if we replace the contour $\gamma_L$ by $i\mathbb R$.
It is indeed easy to see that if we replace $\gamma_L$ by $i\R$ in \eqref{FoperatorHS}, the operators $\mathcal G$ and $\mathcal F$ are still Hilbert-Schmidt, therefore $\mathcal G \circ \mathcal F$ is still trace class.
 Moreover, replacing $\gamma_L$ by $i\mathbb R$ does not modify the Fredholm determinant $\det(\Id - \varrho \,\G \circ \F)$, because in the series definition of the Fredholm determinant the contour $\gamma_L$ can be deformed to $i\R$, by analyticity of the kernel. 
}

The last step of the proof {consists} in performing a conjugation of $\G \circ \F$ by the Fourier-type transform
\bea
	\mathfrak T :  L^2(i\R) &\longrightarrow &L^2(\R) \nonumber\\
		f(\xi) & \longmapsto & \frac{1}{\sqrt{2 \pi i}} \int_{i\mathbb R} f(\xi){\rm e}^{(-1)^{n+1}\frac{p_{2n+1}(\xi)}{2} - \xi x} \d\xi
\eea
with inverse
\bea
	\mathfrak T^{-1} :  L^2(\R) &\longrightarrow &L^2(i\R) \nonumber\\
		h(x) & \longmapsto & \frac{{\rm e}^{(-1)^n\frac{p_{2n+1}(\xi)}{2}}}{\sqrt{2 \pi i}} \int_{\R} h(x){\rm e}^{\xi x} \d x .
\eea
The resulting operator $\mathfrak T \circ \G \circ \F \circ \mathfrak T^{-1}$ has the following kernel, with $x,y\in\mathbb R$,
{\begin{multline}\label{GFAi}
	\frac{1}{(2 \pi i)^2} \int_{i\mathbb R} \frac{\d \xi}{2 \pi i}{\rm e}^{\xi(s - x)} \int_{\gamma_R} \d \mu \int_{i\mathbb R} \d \lambda \frac{{\rm e}^{(-1)^{n+1}(p_{2n+1}(\mu)-p_{2n+1}(\lambda))- s\mu}}{(\xi - \mu)(\mu - \lambda)}{\rm e}^{y \lambda} \nonumber  \\
								\\
								= \left\{\begin{array}{cc}
											-\displaystyle \frac{1}{(2 \pi i)^2} \int_{\gamma_R} \d \mu \int_{\gamma_L} \d \lambda \frac{{\rm e}^{(-1)^{n+1}(p_{2n+1}(\mu)-p_{2n+1}(\lambda)) -x \mu + y \lambda}}{\mu - \lambda} \quad & \mathrm{if} \quad x \geq s \nonumber\\
											\\
											0  &  \mathrm{if} \quad x < s
									\end{array}   \right.
\end{multline}}
{where we passed from the first to the second line by deforming the outer integration contour $i\mathbb R$ to the right/left depending the sign of $(x - s)$, and by computing the residue in $\xi$.} Combining the equations \eqref{KsGF} and the one above we {obtain}
\be
		\mathrm{det} (\Id - \varrho \, \mathcal{L}_s) =  \det(\Id - \varrho \, \G \circ \F) = \det (\Id - \varrho \, \mathfrak T \circ \G \circ \F \circ \mathfrak T^{-1}) = \det\le(\Id - \left. \varrho\,\mathcal{K}\right|_{[s,+\infty)}\ri) \nonumber
\ee
proving \eqref{eqdets}. \qed

Now, following \cite{IIKS}, we associate to the \emph{integrable} kernel $L_s(\lambda,\mu)$ a RH problem {with} jumps $$J(\zeta) := I - 2 \pi i \varrho\,  f(\zeta)g^{\mathrm{T}}(\zeta)$$ on the contour $\gamma_L \cup \gamma_R$ and regular asymptotics {at infinity.  It {was proved} in \cite[Appendix A]{BertolaCafasso1} that the logarithmic derivative of $F(s;\varrho)$, if it exists, can be expressed in terms of the solution $\Gamma$ of a RH problem. 
{Note that for $\varrho\in(0,1]$, we have that $F(s;\varrho)$ is smooth and positive, such that $\frac{\d}{\d s} \log F(s;\varrho)$ exists for all $s\in\mathbb R$. Then, we have the identity}}\footnote{{In the more general identity stated in \cite[Appendix A]{BertolaCafasso1}, an additional (explicit) term $H(J)$ is present, but this term vanishes in our situation.}}
\be\label{Marcoformula}
	\frac{\d}{\d s} \log F(s;\varrho) = \int_{\gamma_R \cup \gamma_L} \mathrm{Tr} \Bigg[ \Gamma_-^{-1} (\zeta)\Gamma'_-(\zeta) (\partial_s J)(\zeta) J^{-1}(\zeta) \Bigg] \frac{\d \zeta}{2 \pi i},
\ee
where $\Gamma'$ is the derivative of $\Gamma$ with respect to $\zeta$, and $\Gamma$ is the unique solution to the RH problem {below.} 

\subsubsection*{RH problem for $\Gamma$}
\begin{itemize}
\item[(a)] {$\Gamma:\mathbb C\setminus(\gamma_L\cup\gamma_R)\to \mathbb C^{2\times 2}$ is analytic,}
\item[(b)] {$\Gamma$ has continuous boundary values $\Gamma_\pm$ as $\zeta\in \gamma_L\cup\gamma_R$ is approached from the left ($+$) or right ($-$) side, and they are related by}
	\begin{align}	
				\Gamma_+(\zeta) &=  \Gamma_-(\zeta) \begin{pmatrix}1 & 0 \\ -\sqrt\varrho\,  {\rm e}^{-(-1)^{n+1}p_{2n+1}(\zeta)+s\zeta} & 1 \end{pmatrix}, &&\mbox{ for $\zeta\in\gamma_L$},\\ 
				\Gamma_+(\zeta) &= \Gamma_-(\zeta) \begin{pmatrix}1 & -\sqrt\varrho\,{\rm e}^{(-1)^{n+1}p_{2n+1}(\zeta)-s\zeta} \\ 0 & 1 \end{pmatrix}, &&\mbox{ for $\zeta\in\gamma_R$},
				\end{align} 
\item[(c)] there exists a matrix $\Gamma_1$ independent of $\zeta$ (but depending on $n, \varrho,\tau_j$ and $s$) such that $\Gamma$ satisfies
\be
				\Gamma(\zeta) = I + \Gamma_1\zeta^{-1}+\mathcal O(\zeta^{-2}), \quad \zeta \rightarrow \infty. 
		  \label{RHPGamma}
	\ee
\end{itemize}

The identity \eqref{Marcoformula} yields the following result.
{\begin{proposition}\label{corlogF}
	The Fredholm determinant $F(s;\varrho)$ satisfies the differential identity 
	\be\label{eqlogF}
		\frac{\d}{\d s} \log F(s;\varrho) = \Gamma_{1,11},
	\ee
	where $\Gamma_1$ is defined by \eqref{RHPGamma}.
\end{proposition}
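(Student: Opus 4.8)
The plan is to turn the differential identity \eqref{Marcoformula} into a residue computation at $\zeta=\infty$, by observing that its integrand is, after elementary algebra, the jump across $\gamma_L\cup\gamma_R$ of a scalar function whose $\zeta^{-1}$-coefficient at infinity is controlled by $\Gamma_1$. First I would record the $s$-dependence of the jump matrix $J$ of the RH problem for $\Gamma$ in a convenient conjugated form. Setting $\theta(\zeta)=\theta(\zeta;s):=(-1)^{n+1}p_{2n+1}(\zeta)-s\zeta$, so that $\partial_s\theta(\zeta)=-\zeta$, one checks directly from the two formulas in item (b) that
\[
J(\zeta)=e^{\frac12\theta(\zeta)\sigma_3}\,J_0\,e^{-\frac12\theta(\zeta)\sigma_3},\qquad \zeta\in\gamma_L\cup\gamma_R,
\]
where the $s$- and $\zeta$-independent matrix $J_0$ equals $\begin{pmatrix}1&0\\-\sqrt\varrho&1\end{pmatrix}$ on $\gamma_L$ and $\begin{pmatrix}1&-\sqrt\varrho\\0&1\end{pmatrix}$ on $\gamma_R$. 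Since $\sigma_3$ commutes with $e^{\frac12\theta\sigma_3}$, this yields $\partial_s J=\tfrac{\partial_s\theta}{2}[\sigma_3,J]=-\tfrac\zeta2[\sigma_3,J]$, hence
\[
(\partial_s J)(\zeta)\,J^{-1}(\zeta)=\tfrac\zeta2\bigl(J(\zeta)\sigma_3 J^{-1}(\zeta)-\sigma_3\bigr),\qquad \zeta\in\gamma_L\cup\gamma_R.
\]

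Next I would substitute this into \eqref{Marcoformula} and simplify the trace. From $\Gamma_+=\Gamma_-J$ one has $\Gamma_+^{-1}=J^{-1}\Gamma_-^{-1}$ and $\Gamma_+'=\Gamma_-'J+\Gamma_-J'$, so $J^{-1}\Gamma_-^{-1}\Gamma_-'J=\Gamma_+^{-1}\Gamma_+'-J^{-1}J'$; moreover $J-I$, and therefore $J'$ and $J^{-1}J'$, is strictly triangular, whence $\mathrm{Tr}[J^{-1}J'\sigma_3]=0$. Using cyclicity of the trace,
\[
\mathrm{Tr}\bigl[\Gamma_-^{-1}\Gamma_-'\,(\partial_s J)\,J^{-1}\bigr]=\tfrac\zeta2\Bigl(\mathrm{Tr}[\Gamma_+^{-1}\Gamma_+'\sigma_3]-\mathrm{Tr}[\Gamma_-^{-1}\Gamma_-'\sigma_3]\Bigr)=H_+(\zeta)-H_-(\zeta),
\]
where $H(\zeta):=\tfrac\zeta2\,\mathrm{Tr}\bigl[\Gamma^{-1}(\zeta)\Gamma'(\zeta)\sigma_3\bigr]$ is analytic on $\mathbb C\setminus(\gamma_L\cup\gamma_R)$ with continuous boundary values (the jump matrix is entire, so $\Gamma$ and $\Gamma'$ extend continuously up to $\gamma_L\cup\gamma_R$ from either side). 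Thus \eqref{Marcoformula} becomes $\frac{\d}{\d s}\log F(s;\varrho)=\frac{1}{2\pi i}\int_{\gamma_L\cup\gamma_R}(H_+-H_-)\,\d\zeta$.

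Finally I would evaluate this integral by deforming to infinity. The expansion in item (c) gives $\Gamma^{-1}\Gamma'=-\Gamma_1\zeta^{-2}+\mathcal O(\zeta^{-3})$, so $H(\zeta)=-\tfrac12\,\mathrm{Tr}[\Gamma_1\sigma_3]\,\zeta^{-1}+\mathcal O(\zeta^{-2})$ as $\zeta\to\infty$. Since $J-I$ decays super-exponentially along $\gamma_L\cup\gamma_R$ (the same decay, coming from the asymptotic directions $\pm\frac{n}{2n+1}\pi$ of the contours, that makes the relevant operators trace class), the density $H_+-H_-$ is integrable on $\gamma_L\cup\gamma_R$ and $H$ coincides with the Cauchy transform of its own jump; comparing the $\zeta^{-1}$-coefficients on both sides gives $\frac{1}{2\pi i}\int_{\gamma_L\cup\gamma_R}(H_+-H_-)\,\d\zeta=\tfrac12\,\mathrm{Tr}[\Gamma_1\sigma_3]$. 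To conclude, I note that $\det\Gamma\equiv1$, since the jump matrices are unimodular and $\Gamma\to I$; hence $\mathrm{Tr}\,\Gamma_1=0$, so $\tfrac12\mathrm{Tr}[\Gamma_1\sigma_3]=\tfrac12\bigl(\Gamma_{1,11}-\Gamma_{1,22}\bigr)=\Gamma_{1,11}$, which is \eqref{eqlogF}.

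The step that will require the most care is the last one: justifying that $H$ equals the Cauchy transform of $H_+-H_-$, i.e. that no contribution is lost ``at infinity'' when collapsing the contour integral onto the residue, while keeping the orientation conventions of Figure \ref{Fig1} consistent throughout. This rests solely on the decay of $\Gamma_+-\Gamma_-$ along the unbounded contours, which is already available from the construction of $\Gamma$; the remaining manipulations are elementary matrix algebra and reading off one Taylor coefficient.
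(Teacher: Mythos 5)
Your proof is correct and follows essentially the same route as the paper's: factor the jump as $J=e^{T_s}J_0e^{-T_s}$, use cyclicity of the trace and the jump relation $\Gamma_+=\Gamma_-J$ to rewrite the integrand of \eqref{Marcoformula} as the additive jump of $H(\zeta)=\tfrac{\zeta}{2}\mathrm{Tr}[\Gamma^{-1}\Gamma'\sigma_3]$, and then read off the residue at infinity together with $\mathrm{Tr}\,\Gamma_1=0$. The only (cosmetic) difference is that you extract the $\zeta^{-1}$-coefficient via the Cauchy-transform/Liouville representation of $H$ rather than by deforming the $\pm$ sides to a large circle, and you spell out the needed identity $\mathrm{Tr}[J^{-1}J'\sigma_3]=0$, which the paper leaves implicit.
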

}
	\proof 
	Starting from formula \eqref{Marcoformula}, 
	 we compute the integral on the right hand side using residues, as done in \cite{BertolaIsoTau}. We briefly repeat the procedure (which simplifies in our case) here. We start by noting that the jump matrix can be factorized as
	$$J(\zeta) = {\rm e}^{T_s(\zeta)} J_0 {\rm e}^{-T_s(\zeta)}, \quad T_s(\zeta) :=\left(\frac{(-1)^{n+1}}{2}p_{2n+1}(\zeta)-\frac{1}{2}s\zeta\right)\sigma_3,$$
	where $J_0$ is a (piecewise) constant matrix on the contours $\gamma_L, \gamma_R$. Here we are using the Pauli notations
$$
	\sigma_1 := \begin{pmatrix} 0 & 1\\ 1 & 0 \end{pmatrix}, \; \sigma_2 := \begin{pmatrix} 0 & -i\\ i & 0 \end{pmatrix}, \; \sigma_3 := \begin{pmatrix} 1 & 0\\ 0 & -1 \end{pmatrix}.
$$

	  Hence, we obtain
	  \begin{multline*}
	  	  \displaystyle \int_{\gamma_R \cup \gamma_L} \mathrm{Tr} \Bigg[ \Gamma_-^{-1}(\zeta) \Gamma'_-(\zeta) (\partial_s J)(\zeta) J^{-1}(\zeta) \Bigg] \frac{\d \zeta}{2 \pi i} = 
		 \displaystyle \int_{\gamma_R \cup \gamma_L} \mathrm{Tr} \Bigg[ \Gamma_-^{-1}(\zeta) \Gamma'_-(\zeta) \Big(\partial_s T(\zeta) - J(\zeta) \partial_s T(\zeta) J^{-1}(\zeta) \Big) \Bigg] \frac{\d \zeta}{2 \pi i}  \nonumber\\
= \int_{\gamma_R \cup \gamma_L} \mathrm{Tr} \Bigg[ \Gamma_-^{-1}(\zeta) \Gamma'_-(\zeta)(\partial_s T)(\zeta)\Bigg] - \mathrm{Tr} \Bigg[\Gamma_+^{-1}(\zeta) \Gamma'_+(\zeta)(\partial_s T)(\zeta)\Bigg] \frac{\d \zeta}{2 \pi i}, \nonumber
	  \end{multline*}
	where, in the last line, we used the {cyclic property of the trace and the form of $J$}. Now, we observe that the integral contributions of the minus sides of $\gamma_L$ and $\gamma_R$ sum up to zero, and that the plus sides can be deformed to a {large counterclockwise circle $C_R$ of radius $R$}, leading to 
{	\be
		\frac{\d}{\d s} \log F(s;\varrho) = \lim_{R\to +\infty}\oint_{C_R} \mathrm{Tr}\le[ \Gamma^{-1}(\zeta) \Gamma'(\zeta)\frac{1}{2}\zeta\sigma_3\ri]  \frac{\d\zeta}{2 \pi i} = \frac{1}2\mathrm{Tr}\Big[\Gamma_{1,11} -\Gamma_{1,22}\Big] = \Gamma_{1,11}.
	\ee
	In} the last equality we used the fact that $\Gamma_1$ { has zero trace}, since $\Gamma$ has determinant equal to $1$.
	  \QED

The relation with the second Painlev\'e hierarchy is established {by transforming the RH problem for $\Gamma$ to one} with {constant} jumps. Namely, define
\begin{equation}
\label{def:Psi}
\Psi(\zeta) := {\sigma_3}\Gamma(2 i\zeta){\sigma_3}{\rm e}^{T_s(2i\zeta)}.
\end{equation}

Then it is {straightforward} to verify that $\Psi$ solves the following RH problem:

\subsubsection*{RH problem for $\Psi$}

\begin{itemize}
\item[(a)] {$\Psi:\mathbb C\setminus (\gamma_U\cup \gamma_D)\to \mathbb C^{2\times 2}$ is analytic,  with $\gamma_U, \gamma_D$ as in \figurename \ \ref{Fig2},}
\item[(b)]
the boundary values of {$\Psi$} on $\gamma_U \cup \gamma_D$
are related by the conditions
    \begin{align}
        \label{RHP Psi2: b1}
        \Psi_+(\zeta) &= \Psi_-(\zeta)
            \begin{pmatrix}
                1 & 0 \\
                \sqrt\varrho & 1
            \end{pmatrix},
            && \mbox{for $\zeta\in \gamma_U$,} 
        \\      
        \label{RHP Psi2: b2}
            \Psi_+(\zeta) &= \Psi_-(\zeta)
            \begin{pmatrix}
                1 & -\sqrt\varrho \\
                0 & 1
            \end{pmatrix},
            && \mbox{for $\zeta\in  \gamma_D$,}
    \end{align}
    %with contours as in \figurename \ \ref{Fig2}.
\item[(c)] $\Psi$ has the following behavior at infinity:
    \begin{equation}\label{RHP Psi2: c}
        \Psi(\zeta)=\le[ I+\frac{\Psi_1}{\zeta}+\mathcal O(\zeta^{-2})\ri] {\rm e}^{- i\Theta(\zeta;s,\tau_1,\ldots, \tau_{n-1})\sigma_3},
        \qquad \mbox{as $\zeta\to\infty$,} 
        \end{equation}
        where 
        \begin{equation}
        \Psi_1=-\frac{i}{2} {\sigma_3}\Gamma_1{\sigma_3}
        \label{normalization}
        \end{equation}
and
\be\label{def Theta}
        \Theta(\zeta;s,\tau_1, \ldots,\tau_{n-1})=\frac{1}{4n+2}(2\zeta)^{2n+1}+
\sum_{j=1}^{n-1}\frac{(-1)^{{n+j}}
        \tau_{j}}{4j+2}(2\zeta)^{2j+1}+s\zeta.
\ee
\end{itemize}
Note that we have now chosen to orient both jump contours from left to right, so that $\gamma_D=-i\gamma_R$, but with reverse orientation, and $\gamma_U = - i\gamma_L$.
{Note also that the RH solution $\Psi$ depends on the choice of jump contours $\gamma_U, \gamma_D$, but that the solution corresponding to a different choice of jump contours can be obtained by analytic continuation. The matrix $\Psi_1$ is independent of the choice of jump contours.}

The above RH problem {is equivalent to the one} in \cite[Section 4.2]{CIK}, in the case where $\alpha=0$ and where the Stokes multipliers are given by
\[s_1=-s_{2n+1}=\sqrt\varrho,\qquad s_2=\ldots=s_{2n}=0.\]

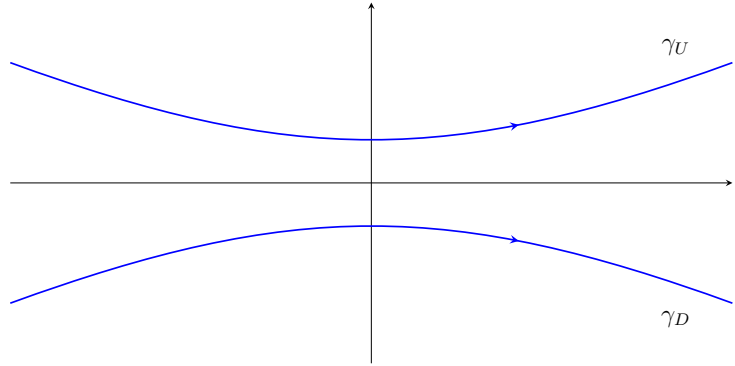
\begin{figure}[h!]
\centering
\scalebox{.8}{
\begin{tikzpicture}[>=stealth]
\path (0,0) coordinate (O);

\draw[->] (-6,0) -- (6,0) coordinate (x axis);
\draw[->] (0,-3) -- (0,3) coordinate (y axis);

%\draw (1,6) -- (-1,-6);
%\draw (1,-6) -- (-1,6);
%\draw (.5,0) arc (0:80:0.5);
%\node [above right] at (.7,0) {\large $\vartheta = \frac{n}{2n+1}\pi$};

\draw[thick, blue, ->- = .7]  (-6,2) .. controls + (-20:5cm) and + (200:5cm) .. (6,2);
\node [above] at (5,2) {{ \large $ \gamma_U$}};

\draw[thick, blue, ->- = .7]  (-6,-2) .. controls + (20:5cm) and + (-200:5cm) .. (6,-2);
\node [below] at (5,-2) {{ \large $ \gamma_D$}};
\end{tikzpicture}}
\caption{The contours $\gamma_U$ and $\gamma_D$. \label{Fig2}}
\end{figure}

\begin{proposition}
	Let $n\in\mathbb N$ and let $\Psi$ be the solution to the above RH problem. Then, the function 
\begin{equation}\label{qdef}
		q(s;\varrho) := \begin{cases}
			2i\Psi_{1,12}(s)= -2i\Psi_{1,21}(s), &\text{for} \; n \; \text{odd,} \\
			\\
			2i\Psi_{1,12}(-s)= -2i\Psi_{1,21}(-s), &\text{for} \; n \; \text{even,} 
	\end{cases}
\end{equation}
{is real for $s\in\mathbb R$, $0<\rho\leq 1$, and it}
solves the Painlev\'e II hierarchy equation \eqref{PIIn}.
\end{proposition}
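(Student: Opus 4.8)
The plan is to recognize the RH problem for $\Psi$ as a known Lax-type / isomonodromic RH problem for the Painlev\'e II hierarchy and then extract the ODE for $q$ from the associated linear system. Concretely, since the excerpt already tells us that the RH problem for $\Psi$ coincides with the one in \cite[Section 4.2]{CIK} with $\alpha=0$ and Stokes data $s_1=-s_{2n+1}=\sqrt\varrho$, $s_2=\cdots=s_{2n}=0$, the function $q(s;\varrho)$ defined by \eqref{qdef} is, up to the reflection $s\mapsto -s$ for $n$ even, exactly the Painlev\'e transcendent constructed in \cite{CIK}, which by construction solves \eqref{PIIn}. So the bulk of the work is (i) checking that the normalization and symmetry conventions match, and (ii) justifying the reflection and the reality of $q$.

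\medskip

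\noindent\textbf{Step 1 (compatibility / Lax pair).} First I would show that the RH problem for $\Psi$ admits, by standard arguments, a compatible pair of linear ODEs: $\partial_\zeta\Psi=A(\zeta;s)\Psi$ and $\partial_s\Psi=B(\zeta;s)\Psi$, where $A$ is polynomial in $\zeta$ of degree $2n$ and $B$ is linear in $\zeta$, with coefficients rational in the entries of $\Psi_1,\Psi_2,\dots$ This follows because the jump matrices in \eqref{RHP Psi2: b1}--\eqref{RHP Psi2: b2} are constant (in both $\zeta$ and $s$), so $\partial_\zeta\Psi\cdot\Psi^{-1}$ and $\partial_s\Psi\cdot\Psi^{-1}$ are entire; the growth \eqref{RHP Psi2: c} with $\Theta$ as in \eqref{def Theta} pins down their polynomial degrees and leading coefficients via Liouville's theorem. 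The $s$-equation reads $\partial_s\Psi=\bigl(-i\zeta\sigma_3+[\text{const. in }\zeta]\bigr)\Psi$ with the constant term built from $\Psi_1$; reading off the residue/coefficient relations yields precisely $q(s;\varrho)=2i\Psi_{1,12}=-2i\Psi_{1,21}$ as the natural off-diagonal unknown.

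\medskip

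\noindent\textbf{Step 2 (identification with CIK).} Next I would match $\Theta(\zeta;s,\tau_1,\ldots,\tau_{n-1})$ in \eqref{def Theta} with the phase function used in \cite{CIK}: the leading term $\tfrac{1}{4n+2}(2\zeta)^{2n+1}$ and the lower-order terms with coefficients $\tfrac{(-1)^{n+j}\tau_j}{4j+2}(2\zeta)^{2j+1}$ correspond, under the substitution $\zeta\mapsto$ CIK's spectral variable, to the same linear combination of times that produces the $n$-th member \eqref{PIIn} with deformation parameters $\tau_1,\ldots,\tau_{n-1}$. Since the jump contours and Stokes multipliers also match (with $\alpha=0$), the isomonodromy theorem of \cite{CIK} gives that the quantity extracted in Step 1 satisfies \eqref{PIIn}. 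The only subtlety is the variable reflection: for $n$ even the definition \eqref{qdef} uses $-s$, which accounts for the sign $(-1)^{n+1}$ appearing in the exponent of the original kernel \eqref{def:kernel} and in $T_s$; I would check that under $s\mapsto -s$ the phase $\Theta$ transforms consistently with the symmetry of \eqref{PIIn} (which is invariant under $(s,q,\alpha)\mapsto(-s,-q,-\alpha)$ combined with an appropriate sign change, and here $\alpha=0$), so that $q(s;\varrho)$ as defined solves the equation for the original variable $s$.

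\medskip

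\noindent\textbf{Step 3 (reality).} Finally I would establish that $q(s;\varrho)$ is real for $s\in\mathbb R$ and $0<\varrho\le 1$. This comes from a symmetry of the RH problem: the jump matrices in \eqref{RHP Psi2: b1}--\eqref{RHP Psi2: b2}, together with the contours $\gamma_U,\gamma_D$ (which are mapped to one another by $\zeta\mapsto\bar\zeta$), and the fact that $\Theta(\bar\zeta;s,\tau)=\overline{\Theta(\zeta;s,\tau)}$ for real $s,\tau$, imply that $\sigma_1\overline{\Psi(\bar\zeta)}\sigma_1$ solves the same RH problem; by uniqueness it equals $\Psi(\zeta)$, which forces $\Psi_1=\sigma_1\overline{\Psi_1}\sigma_1$, i.e. $\Psi_{1,12}=\overline{\Psi_{1,21}}$. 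Combined with $\Psi_{1,12}=-\Psi_{1,21}$ from Step 1, this gives $\Psi_{1,12}\in i\mathbb R$, hence $q=2i\Psi_{1,12}\in\mathbb R$. \emph{The main obstacle} I anticipate is Step 1: carefully deriving the Lax pair from the RH data and cleanly reading off which entry of $\Psi_1$ (and which combination of higher coefficients $\Psi_2,\dots$) yields the Lenard-operator form \eqref{PIIn}; this is the place where the normalization constants $2i$ and the precise shape of $\Theta$ must be tracked without error. Everything else reduces to invoking \cite{CIK} and checking symmetries.
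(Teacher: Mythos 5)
Your proposal follows essentially the same route as the paper: derive the Lax pair from the constant-jump RH problem, identify it with the one in \cite[Section 4.1--4.2]{CIK} with $\alpha=0$ and the stated Stokes data (using the reflection $s\mapsto -s$ for $n$ even), invoke compatibility/isomonodromy to get \eqref{PIIn}, and obtain reality of $q$ from the Schwarz-type symmetry $\sigma_1\overline{\Psi(\overline{\zeta})}\sigma_1=\Psi(\zeta)$ together with uniqueness of the RH solution.

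One point needs repair. In Step 1 you assert that Liouville's theorem and the residue relations already yield $q=2i\Psi_{1,12}=-2i\Psi_{1,21}$, and in Step 3 you then cite ``$\Psi_{1,12}=-\Psi_{1,21}$ from Step 1.'' But the $s$-equation obtained from Liouville alone is $\partial_s\Psi=\bigl(-i\zeta\sigma_3-i[\Psi_1,\sigma_3]\bigr)\Psi$, whose off-diagonal entries are $2i\Psi_{1,12}$ and $-2i\Psi_{1,21}$ with no relation between them; the symmetric Zakharov--Shabat form $\begin{pmatrix}-i\zeta & q\\ q& i\zeta\end{pmatrix}$ needed to match \cite{CIK} is not automatic. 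The missing ingredient is the parity symmetry $\sigma_1\Psi(-\zeta)\sigma_1=\Psi(\zeta)$, proved by the same uniqueness argument you use for the conjugation symmetry, after choosing the contours so that $\gamma_U=-\gamma_D$ as sets (and $\gamma_U=\overline{\gamma}_D$ as oriented contours); this gives $\sigma_1\Psi_1\sigma_1=-\Psi_1$, hence $\Psi_{1,12}=-\Psi_{1,21}$ and, combined with $\overline{\Psi}_1=-\Psi_1$, the reality of $q$. This is exactly how the paper proceeds, so the fix is to add this second symmetry to your Step 3 and move the conclusion $\Psi_{1,12}=-\Psi_{1,21}$ there rather than attributing it to the Liouville argument.
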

\proof
{
We start by showing that $\Psi(\zeta)$ satisfies two symmetry relations, provided that we choose $\gamma_U$ and $\gamma_D$ in such a way that\footnote{To be more precise, $\gamma_U=\overline{\gamma}_D$ as oriented contours, while $\gamma_U = -\gamma_D$ as sets, but they have opposite orientation.}
$\gamma_U=\overline{\gamma}_D=-\gamma_D$ . First, we verify that $\sigma_1\Psi(-\zeta)\sigma_1$ satisfies the RH problem for $\Psi$. But since the RH solution $\Psi$ is unique (this follows from standard arguments using Liouville's theorem), we obtain the identity
\[\sigma_1\Psi(-\zeta)\sigma_1=\Psi(\zeta).\]
Similarly we obtain, for real values of $s$, the relation
\[\sigma_1\overline{\Psi(\overline{\zeta})}\sigma_1=\Psi(\zeta).\]
These identities imply in particular that
\[-\Psi_1=\sigma_1\Psi_1\sigma_1=\overline{\Psi}_1.\]
It follows that the two expressions at the right hand side of \eqref{qdef} are indeed equal, and moreover that $q$ is real-valued.

The remaining part of the proof} consists of identifying $\Psi$ with solutions to the Lax pair associated with the Painlev\'e II hierarchy (cf. \cite{CJM,FN,Kudryashov_2001,MazzoccoMo}). Here we use the same approach and normalization as in \cite[Section 4.1]{CIK}. Namely, if we define
{\[ 
Z(\zeta;s) := \begin{cases}
				\Psi(-i\zeta;s), &\text{for} \; n \; \text{odd,} \\
				\\
				\Psi(-i\zeta;-s), &\text{for} \; n \; \text{even,} 
			\end{cases}
\]
}
then one verifies using the RH conditions that $Z$ satisfies the differential equation
\begin{equation}\label{LaxEquation1}
	\dfrac{\partial}{\partial s}Z(\zeta;s) = \begin{pmatrix} -i \zeta & q(s;\varrho) \\ q(s;\varrho) & i \zeta  \end{pmatrix} Z(\zeta;s)
\end{equation}
{with $q$ as in \eqref{qdef}.} In addition, we have a second differential equation of the form
\begin{equation}\label{LaxEquation2}
	\dfrac{\partial}{\partial \zeta}Z(\zeta;s)  = M(\zeta;s) Z(\zeta;s),
\end{equation}
with $M$ as in \cite[Equations (4.1)--(4.3)]{CIK}. {The compatibility between the Lax equations \eqref{LaxEquation1} and \eqref{LaxEquation2} now implies, in the same way as in \cite{CIK}, that $q=q(s;\varrho)$ solves the Painlev\'e II hierarchy equation \eqref{PIIn}.} \QED

{
\begin{remark}
In particular, the solution $q(s;\varrho)$ is uniquely determined by the Stokes parameters described in the Riemann--Hilbert for $\Psi$.
\end{remark}
}

We are now ready to prove \eqref{eqthm1}.

\begin{proposition}
Let $F$ be defined by \eqref{def:Fredholm}, and $q$ by \eqref{qdef}. Then, we have the differential identity
	\be\label{doublelogderivative}
		\frac{\d ^2}{\d s^2} \log F(s;\varrho) = - q^2\big({(-1)^{n+1}s};\varrho\big).
	\ee
\end{proposition}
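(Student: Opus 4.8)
The plan is to combine the two differential identities already established: Proposition~\ref{corlogF}, which gives $\frac{\d}{\d s}\log F(s;\varrho)=\Gamma_{1,11}$, and the relation between $\Gamma_1$ and $\Psi_1$ in \eqref{normalization}, together with the Lax-pair structure of $\Psi$. Concretely, I would first differentiate \eqref{eqlogF} once more in $s$, so that $\frac{\d^2}{\d s^2}\log F(s;\varrho)=\frac{\d}{\d s}\Gamma_{1,11}$, and then express $\Gamma_{1,11}$ through $\Psi_1$. From \eqref{def:Psi} we have $\Psi(\zeta)=\sigma_3\Gamma(2i\zeta)\sigma_3 e^{T_s(2i\zeta)}$, and matching the $1/\zeta$ terms in the expansion at infinity gives $\Psi_1=-\tfrac{i}{2}\sigma_3\Gamma_1\sigma_3$ as in \eqref{normalization}; in particular $\Gamma_{1,11}=2i\Psi_{1,11}$ (the $(1,1)$ entry is unchanged by conjugation with $\sigma_3$, up to the scalar factor). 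So it remains to compute $\frac{\d}{\d s}\Psi_{1,11}$ and show it equals $-\tfrac{1}{2i}q^2\big((-1)^{n+1}s;\varrho\big)$.

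The key step is to extract this from the $s$-Lax equation \eqref{LaxEquation1}. Recall $Z(\zeta;s)=\Psi(-i\zeta;s)$ for $n$ odd (resp. $\Psi(-i\zeta;-s)$ for $n$ even), which satisfies $\partial_s Z=\begin{pmatrix}-i\zeta & q\\ q & i\zeta\end{pmatrix}Z$ with $q=q(s;\varrho)$. Writing $Z(\zeta;s)=\big(I+Z_1(s)\zeta^{-1}+\mathcal O(\zeta^{-2})\big)e^{-i\wt\Theta(\zeta)\sigma_3}$ in the appropriate variable, one substitutes this asymptotic expansion into \eqref{LaxEquation1} and collects powers of $\zeta$. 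The $\mathcal O(\zeta^0)$ term is automatically satisfied by the leading behavior of $\wt\Theta$ (which produces $-i\zeta\sigma_3$ on the diagonal), while the $\mathcal O(\zeta^{-1})$ term yields a relation of the schematic form $\partial_s Z_1 + (\text{commutator of }Z_1\text{ with }\sigma_3) = (\text{off-diagonal matrix with entries }q) + (\text{diagonal correction})$. The diagonal $(1,1)$ component of this relation reads $\partial_s (Z_1)_{11} = -q\cdot(Z_1)_{21}$ (or its analogue), and a separate off-diagonal component, or the known identity $(Z_1)_{12}=(Z_1)_{21}$ together with \eqref{qdef} relating $q$ to $\Psi_{1,12}=-\Psi_{1,21}$ and hence to $(Z_1)_{21}$, closes the computation: one finds $\partial_s(Z_1)_{11}$ proportional to $q^2$. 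Unwinding the change of variables $\zeta\mapsto -i\zeta$ and $s\mapsto \pm s$ produces the factor $(-1)^{n+1}$ in the argument and the correct sign, giving \eqref{doublelogderivative}.

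More precisely, the cleanest route is: from $\Psi(\zeta)=\big(I+\Psi_1\zeta^{-1}+\cdots\big)e^{-i\Theta(\zeta)\sigma_3}$, note $q(s;\varrho)=2i\Psi_{1,12}=-2i\Psi_{1,21}$ (for $n$ odd; with $s\mapsto -s$ for $n$ even, by \eqref{qdef}), and the symmetry $-\Psi_1=\sigma_1\Psi_1\sigma_1=\overline{\Psi}_1$ established in the preceding proof forces $\Psi_{1,11}=-\Psi_{1,22}$ and $\Psi_{1,11}$ purely imaginary. The $\zeta^{-1}$ coefficient of the $s$-Lax equation gives $\partial_s\Psi_{1,11}$ in terms of the product $\Psi_{1,12}\Psi_{1,21}=-\Psi_{1,12}^2$, i.e. $\partial_s\Psi_{1,11}=c\,q^2$ for an explicit constant $c$; chasing the constants (the $2i$ factors from \eqref{qdef} and the $-i/2$ from \eqref{normalization}) pins down $c$ so that $\frac{\d}{\d s}\Gamma_{1,11}=-q^2\big((-1)^{n+1}s;\varrho\big)$. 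Combining with $\frac{\d}{\d s}\log F=\Gamma_{1,11}$ finishes the proof.

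The main obstacle, I expect, is purely bookkeeping: keeping track of the three nested changes of variable (the Fourier/scaling $\Gamma\to\Psi$ via $\zeta\mapsto 2i\zeta$, the further $\zeta\mapsto -i\zeta$ and possible $s\mapsto -s$ in passing to $Z$, and the parity split between $n$ odd and $n$ even) without dropping a sign or a factor of $i$. There is no analytic difficulty — everything follows from plugging asymptotic expansions into linear ODEs and matching coefficients — but getting the constant exactly $-1$ (rather than, say, $+1$ or $-\tfrac14$) requires care, and the symmetry relations on $\Psi_1$ from the previous proposition are what guarantee the right-hand side is a genuine square of a real quantity with the correct sign.
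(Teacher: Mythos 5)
Your proposal is correct and follows essentially the same route as the paper: the paper also extracts the vanishing $\zeta^{-1}$ coefficient of $\le(\partial_s\Psi\ri)\Psi^{-1}$ (equivalently, matches coefficients in the $s$-Lax equation), whose $(1,1)$ entry gives $\partial_s\Psi_{1,11}=-2i\le(\Psi_{1,12}\ri)^2$, and then combines this with Proposition \ref{corlogF} and $\Psi_{1,11}=-\tfrac{i}{2}\Gamma_{1,11}$ from \eqref{normalization}. The constant-chasing you defer works out exactly as you predict.
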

\proof
{It follows from \eqref{LaxEquation1} that $\le(\partial_s \Psi \ri) \Psi^{-1}$ is a polynomial in $\zeta$. Hence, the term in $\zeta^{-1}$ in the large $\zeta$ expansion of $ \le(\partial_s \Psi \ri) \Psi^{-1}$ vanishes, and this yields
\begin{gather}
	-i[\Psi_2,\sigma_3] + i[\Psi_1,\sigma_3]\Psi_1 + \partial_s\Psi_1 = 0, \label{Lax}
\end{gather}
where $\Psi_1$ and $\Psi_2$ are defined by the expansion} 
$$ \Psi(\zeta)=\le[ I+\frac{\Psi_1}{\zeta}+ \frac{\Psi_2}{\zeta^2}+\mathcal O\le(\frac{1}{\zeta^{3}}\ri)\ri] {\rm e}^{- i\Theta(\zeta)\sigma_3} \; \text{as} \; \zeta \to \infty .$$
In particular, the $(1,1)$-entry {of \eqref{Lax}} gives 
\be\label{Psieq}
	\frac{\d}{\d s} \Psi_{1,11} = -2i \le(\Psi_{1,12}\ri)^2.
\ee
Using \eqref{Psieq} together with {Proposition \ref{corlogF}} and the relation
$
\Psi_{1,11} = -\frac{i}{2} \Gamma_{1,11}$ (see \eqref{normalization}), we obtain the result. 
\QED
{
Since $q$ is real-valued, the above proposition implies that $\log F(s;\varrho)$ is concave as a function of $s$. We also know that $F$ is a distribution function for $0< \rho\leq 1$, so we can conclude that
$$	\lim_{s \rightarrow + \infty} \log F(s;\varrho) = 0, \quad \lim_{s \rightarrow +\infty} (\log F)'(s;\varrho) = 0.
$$}
Hence, integrating \eqref{doublelogderivative}, we obtain
\be\label{Ffirstder}
\frac{\d}{\d s} \log F(s;\varrho) =\int_{s}^{+\infty}q^2\big((-1)^{n+1}\xi;\varrho\big)\d\xi,
\ee
and after another integration by parts we obtain \eqref{eq:TWidentity}.

\section{Asymptotics for the Painlev\'e function $q\left((-1)^{n+1}s;\varrho\right)$ as  $s\to + \infty$}\label{sec:3}

We start from the RH problem for $\Psi$, see \eqref{RHP Psi2: b1}--\eqref{RHP Psi2: c}, and we rescale it in the following way. Let
\begin{gather}
\Xi(\zeta) := \Psi\le(s^{\frac{1}{2 n}} \zeta \ri){\rm e}^{i\Theta\left(s^{\frac{1}{2 n}} \zeta;s,\tau_1,\ldots, \tau_{n-1}\right)\sigma_3},
\end{gather}
then $\Xi$ satisfies the following RH problem.

\subsubsection*{RH problem for $\Xi$}
\begin{itemize}
\item[(a)] $\Xi$ is analytic in $\mathbb C\setminus\left(\tilde\gamma_U\cup\tilde\gamma_D\right)$, where {$\tilde \gamma_U=s^{-\frac{1}{2n}}\gamma_U$ and $\tilde \gamma_D=s^{-\frac{1}{2n}}\gamma_D$} are the transformed contours under the scaling $\zeta \mapsto s^{\frac{1}{2 n}} \zeta$,
\item[(b)] $\Xi$ satisfies the jump relations
 \begin{align}
        &\Xi_+(\zeta) = \Xi_-(\zeta)
            \begin{pmatrix}
                1 & 0 \\
                \sqrt\varrho\,{\rm e}^{2i\Theta\left(s^{\frac{1}{2 n}} \zeta;s,\tau_1,\ldots, \tau_{n-1}\right) } & 1
            \end{pmatrix},
            && \mbox{for $\zeta\in \tilde \gamma_U$,} 
        \\      
           & \Xi_+(\zeta) = \Xi_-(\zeta)
            \begin{pmatrix}
                1 & -\sqrt\varrho\,{\rm e}^{-2i\Theta\left(s^{\frac{1}{2 n}} \zeta;s,\tau_1,\ldots, \tau_{n-1}\right)} \\
                0 & 1
            \end{pmatrix},
            && \mbox{for $\zeta\in \tilde \gamma_D$,}
            \end{align}
            \item[(c)]
            as $\zeta\to\infty$, we have
\[\Xi(\zeta)= I+ \frac{\Xi_1}{\zeta}+\mathcal O(\zeta^{-2}),\]
where
\begin{gather}
\Xi_1 = s^{-\frac{1}{2 n}}\Psi_1. \label{Xi1Psi1}
\end{gather}
\end{itemize}

We now note that, {by \eqref{def Theta},}
\be \label{expansionTheta}
\Theta\left(s^{\frac{1}{2 n}} \zeta;s,\tau_1,\ldots, \tau_{n-1}\right)=s^{\frac{2n+1}{2n}}\le[\frac{(2\zeta)^{2 n + 1}}{4 n + 2} + \zeta \ri]+\mathcal O\left(s^{\frac{2n-1}{2n}}\zeta^{2n-1}\right)
\ee
as $s\to +\infty$, uniformly for $\zeta\in\tilde \gamma_U\cup \tilde \gamma_D$ and for $\tau_1,\ldots, \tau_{n-1}$ in any compact set. For this reason we
want to {choose} the contours  $\tilde \gamma_U$ and $\tilde \gamma_D$ in such a way that 
\begin{align}
&\operatorname{Im} \le[\frac{(2\zeta)^{2 n + 1}}{4 n + 2} + \zeta \ri] > 0,  \qquad {\zeta} \in \tilde \gamma_U,\\
&\operatorname{Im} \le[\frac{(2\zeta)^{2 n + 1}}{4 n + 2} +\zeta \ri] < 0,  \qquad {\zeta} \in \tilde \gamma_D.
\end{align}
Even in the case where the parameters {$\tau_1,\ldots, \tau_{n-1}$} are not all equal to zero, the behavior of the phase $\Theta$ is  still asymptotically dominated by the leading terms $\frac{2\zeta^{2 n + 1}}{4 n + 2} + \zeta$ as $s \to +\infty$.
It is {straightforward to check that we can always choose $\tilde \gamma_U$ and $\tilde \gamma_D$ such that the inequalities above are satisfied. This is also illustrated in \figurename \ \ref{fig2}.}

\begin{figure}
\centering
\includegraphics[width=.5\textwidth]{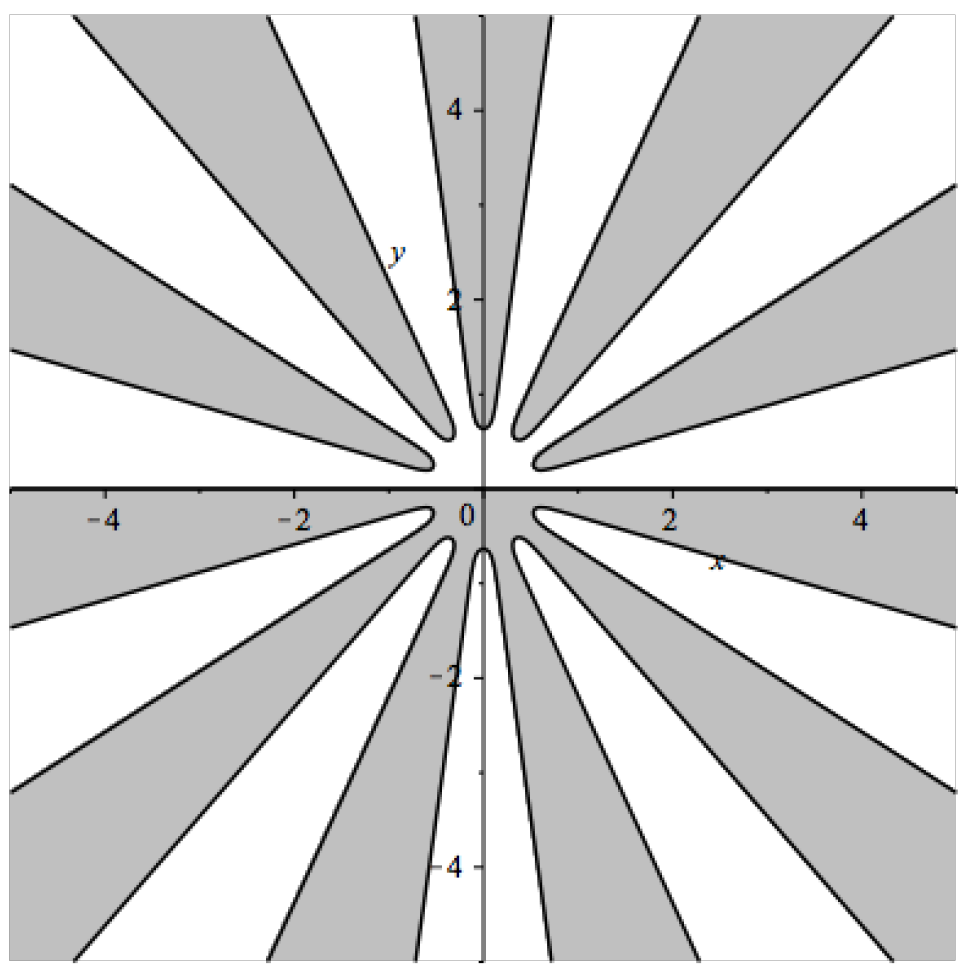}
\caption{The plot of the  imaginary part of the quantity $\frac{(2\zeta)^{2n + 1}}{4n + 2} + \zeta$ in the case $n = 5$. The white areas are the zones where $\operatorname{Im} \le[\frac{(2\zeta)^{2n + 1}}{4n + 2} +\zeta \ri] > 0$ while the grey areas are the zones where $\operatorname{Im} \le[\frac{(2\zeta)^{2n + 1}}{4n + 2} +\zeta \ri] <0$.  \label{fig2}}
\end{figure}

{
With such a choice of $\tilde\gamma_U$ and $\tilde\gamma_D$, the jump relation for $\Xi$ reads as $s\to +\infty$,
\[\Xi_+(\zeta) = \Xi_-(\zeta)\left(I+\mathcal O(\zeta^{-2}{\rm e}^{-C s^{\frac{2 n + 1}{2 n}}})\right),\] 
for a suitable constant $C>0$, uniformly in $\zeta\in\tilde\gamma_U\cup\tilde\gamma_L$.
By small norm theory for RH problems (see e.g.\ \cite{DeiftCourant, smallnormRH}),} we can infer that there exists a constant $C>0$ such that
\begin{gather}
\label{eq:Xias}\Xi(\zeta) = I + \mathcal{O}\le(\zeta^{-1}{\rm e}^{-C s^{\frac{2 n + 1}{2 n}}}\ri) \qquad \text{as } s\to +\infty,
\end{gather}
uniformly in $\zeta\in\mathbb C\setminus\left(\tilde \gamma_U\cup \tilde \gamma_D\right)$. This implies in particular, by  \eqref{qdef} and \eqref{Xi1Psi1}, that
\[ q\big((-1)^{n+1}s;{\varrho}\big) = 2i \Psi_{1,12}(s)= 2is^{\frac{1}{2n}} \Xi_{1,12}(s)=\mathcal{O}\le({\rm e}^{-C s^{\frac{2 n + 1}{2 n}}}\ri),\]
as $s\to +\infty$, which proves \eqref{eq:qas+0}.

If we want more precise asymptotics, we need to use the integral equation
\[
\Xi(\zeta)=I + \sqrt\varrho \int_{\tilde \gamma_U} \Xi_-(w)\frac{  {\rm e}^{2i\Theta\left(s^{\frac{1}{2 n}} w;s,\tau_1,\ldots, \tau_{n-1}\right) } }{w-\zeta} \frac{\d w}{2\pi i}  \sigma_- -\sqrt\varrho  \int_{\tilde \gamma_D}\Xi_-(w)\frac{  {\rm e}^{-2i\Theta\left(s^{\frac{1}{2 n}} w;s,\tau_1,\ldots, \tau_{n-1}\right) } }{w-\zeta} \frac{\d w}{2\pi i} \sigma_+,
\]
which is satisfied by $\Xi$ as a consequence of the RH conditions,
with $\sigma_+ = \begin{pmatrix} 0& 1\\0&0\end{pmatrix}$ and $\sigma_- = \begin{pmatrix} 0& 0\\1&0\end{pmatrix}$. To see this, one verifies that the right hand side of the above equation satisfies the three RH conditions for $\Theta$. Uniqueness of the RH solution then yields the integral equation (see, for instance, \cite{FIKN}).
This integral equation { implies that
\[
\Xi_1(s) = -\sqrt\varrho \int_{\tilde \gamma_U}  \Xi_-(w) {\rm e}^{2i\Theta\left(s^{\frac{1}{2 n}} w;s,\tau_1,\ldots, \tau_{n-1}\right) } \frac{\d w}{2\pi i}  \sigma_- +\sqrt\varrho \int_{\tilde \gamma_D}  \Xi_-(w){\rm e}^{-2i\Theta\left(s^{\frac{1}{2 n}} w;s,\tau_1,\ldots, \tau_{n-1}\right) } \frac{\d w}{2\pi i} \sigma_+.\]
Using also \eqref{eq:Xias}, we obtain
\[
\Xi_{1,12}(s) = \sqrt\varrho \int_{\tilde \gamma_D}  {\rm e}^{-2i\Theta\left(s^{\frac{1}{2 n}} w;s,\tau_1,\ldots, \tau_{n-1}\right) } \frac{\d w}{2\pi i} (1+o(1))\]
as $s\to +\infty$.
It follows from \eqref{qdef} that
\[ q\left((-1)^{n+1}s; \varrho \right)=2i\Psi_{1,12}(s)=2is^{\frac{1}{2n}}\Xi_{1,12}(s)=\frac{1+o(1)}{\pi}\sqrt\varrho\,  s^{\frac{1}{2n}}\int_{\tilde \gamma_D}   {\rm e}^{-2i\Theta\left(s^{\frac{1}{2 n}} w;s,\tau_1,\ldots, \tau_{n-1}\right) }  \d w\]
as $s\to +\infty$.}
If $\tau_1=\ldots=\tau_{n-1}=0$, it follows directly from the substitution $u=2is^{\frac{1}{2n}}w$ and \eqref{def:genAiry} that 
\begin{equation}\label{eq:asqAiry}
q\left((-1)^{n+1}s;\varrho \right) =  \sqrt\varrho\, \Ai_{2n+1}(s) (1+o(1)) ,\qquad \mbox{as $s\to +\infty$,}
\end{equation} 
as already announced in the introduction.

\begin{remark}\label{remark2as}
In order to prove \eqref{eq:asqAiry} for general $\tau_1,\ldots, \tau_{n-1}$, we would need the asymptotics of the integral $\ds\int_{\tilde \gamma_D}\!\!  {\rm e}^{-2i\Theta\left(s^{\frac{1}{2 n}} w;s,\tau_1,\ldots, \tau_{n-1}\right) } \d w$. These can in principle be obtained using saddle point arguments, but this would require a lot of technical work, especially because the integrand depends on the large parameter $s$ in a rather complicated manner.
We therefore do not aim to prove this.
\end{remark}

\section{Asymptotics for $q\left((-1)^{n+1}s ;1\right)$ as $s\to -\infty$}\label{sec:4} 

In this section, we restrict to the case where $\varrho=1$, and we analyze the RH problem for $\Psi$ in the limit $s\to -\infty$, in order to derive asymptotics for $q\left((-1)^{n+1}s;1 \right)$ as $s\to -\infty$ and large gap asymptotics for $F(s;1)$.

\subsection{Construction of the $g$-function}
In the asymptotic analysis as  $s\to +\infty$, we were able to normalize the RH problem for $\Psi$ simply by multiplying at the right by the factor ${\rm e}^{i\Theta\left(s^{\frac{2n+1}{2n}}\zeta;s,\tau_1,\ldots, \tau_{n-1}\right)\sigma_3}$. However, for $s$ negative, the topology of the set $\left\{\zeta: \Im \Theta (s^{\frac{2n+1}{2n}}\zeta;s,\tau_1,\ldots, \tau_{n-1} )>0\right\}$ is different from the one sketched in Figure \ref{fig2}, and forces us to normalize the RH problem for $\Psi$ in a different way, namely by means of the construction of a suitable $g$-function.

\medskip

We define a function $g$ of the form
\begin{equation}\label{def g-}
g(\zeta)=\sum_{j=1}^nc_{j}(\zeta^2-\zeta_0^2)^{\frac{2j+1}{2}},
\end{equation}
with $(\zeta^2-\zeta_0^2)^{\frac{2j+1}{2}}$ analytic in  $\mathbb C\setminus [-\zeta_0,\zeta_0]$ and such that it behaves like $\zeta^{2j+1}$ as $\zeta\to\infty$.

We fix the constants $c_j$ and the branch point $\zeta_0>0$ by
the requirement that there exists a value $g_1(s)$ such that
\begin{equation}\label{gtheta-}
|s|^{\frac{2n+1}{2n}}g(\zeta)=\Theta(|s|^{\frac{1}{2n}}\zeta)+\frac{g_1(s)}{\zeta}+\mathcal O(\zeta^{-2}),\qquad\mbox{ as $\zeta\to\infty$.}
\end{equation}
It was shown in \cite[Section 4.6.2]{CIK} that this yields the following asymptotic formulas as $s\to -\infty$,
\begin{align}
&\label{def cn}c_n=\frac{2^{2n}}{2n+1},\\
&\label{def cj}c_{n-m}=\frac{\Gamma(n+\frac{1}{2})}{\Gamma(n-m+\frac{3}{2})}\frac{2^{2n-1}}{m!}\zeta_0^{2m}
+\mathcal O(|s|^{-\frac{1}{n}}), & m=1, \ldots , n-1,
\end{align}
and
\be
\label{zeta0}
\zeta_0^{2n}=\frac{n!\sqrt{\pi}}{2^{2n}\Gamma(n+\frac{1}{2})}+
\mathcal O(|s|^{-\frac{1}{n}})=\frac{n!^2}{(2n)!}+\mathcal O(|s|^{-\frac{1}{n}}).
\ee

The aim of the rest of this subsection is to give more explicit asymptotics for $\zeta_0$ {and $g_1$}, which will be needed later on. 
\begin{lemma}\label{lemmazeta0}
Suppose that $|s|$ is sufficiently large, set $\tau_n=1$, and let $\zeta_0=\zeta_0(s)$ be the unique positive solution to the equation
	\begin{equation}\label{shortzetaequat}
		\sum_{k = 1}^n  (-1)^{n-k} {2k \choose k}\tau_k |s|^{\frac{k-n}{n}} \zeta_0^{2k}= 1.
	\end{equation}
If we define 
\begin{equation}\label{cvalues}
c_{n-m}  = 
\sum_{k=0}^{m}  (-1)^{m-k}2^{2(n-m+k)-1} \tau_{n-m+k} |s|^{-\frac{m-k}{n}} \frac{\Gamma\le(n-m+k +\frac{1}{2}\ri)}{k! \Gamma\le(n-m+\frac{3}{2}\ri)} \zeta_0^{2k},  
\end{equation}
for $m=0,\ldots, n-1$,
then $g$ defined by \eqref{def g-} has the asymptotics \eqref{gtheta-}.
\end{lemma}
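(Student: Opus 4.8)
The plan is to prove Lemma \ref{lemmazeta0} by a direct computation starting from the defining requirement \eqref{gtheta-}, namely that $|s|^{\frac{2n+1}{2n}}g(\zeta)-\Theta(|s|^{\frac{1}{2n}}\zeta)$ has a Laurent expansion at infinity whose nonnegative powers all vanish (except for the allowed $\zeta^{-1}$ term, which defines $g_1$). Since $g(\zeta)=\sum_{j=1}^n c_j(\zeta^2-\zeta_0^2)^{\frac{2j+1}{2}}$ and $\Theta$ is an odd polynomial of degree $2n+1$ in $|s|^{\frac{1}{2n}}\zeta$ with explicitly known coefficients from \eqref{def Theta}, the matching condition amounts to a finite triangular system of equations in the unknowns $c_1,\dots,c_n$ and $\zeta_0$. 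First I would expand $(\zeta^2-\zeta_0^2)^{\frac{2j+1}{2}}=\zeta^{2j+1}(1-\zeta_0^2/\zeta^2)^{\frac{2j+1}{2}}=\zeta^{2j+1}\sum_{\ell\ge0}\binom{(2j+1)/2}{\ell}(-\zeta_0^2)^\ell\zeta^{-2\ell}$ using the generalized binomial theorem, and collect the coefficient of each odd power $\zeta^{2m+1}$ for $m=n,n-1,\dots,1,0$ in $|s|^{\frac{2n+1}{2n}}g(\zeta)$.

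The second step is to read off the constraints. Setting $\tau_n=1$, the coefficient of $\zeta^{2n+1}$ in $\Theta(|s|^{1/2n}\zeta)$ is $|s|^{\frac{2n+1}{2n}}\frac{2^{2n+1}}{4n+2}=|s|^{\frac{2n+1}{2n}}\frac{2^{2n}}{2n+1}$, which forces $c_n=\frac{2^{2n}}{2n+1}$ as in \eqref{def cn}. The coefficient of $\zeta^{2k+1}$ for $1\le k\le n-1$ in $\Theta(|s|^{1/2n}\zeta)$ comes from the $\tau_k$ monomial and equals $|s|^{\frac{2n+1}{2n}}|s|^{\frac{k-n}{n}}\frac{(-1)^{n+k}2^{2k+1}\tau_k}{4k+2}$ (rewriting the power $s^{\frac{2k+1}{2n}}=|s|^{\frac{2n+1}{2n}}|s|^{\frac{k-n}{n}}$ and tracking the sign coming from $s<0$), while on the left side the coefficient of $\zeta^{2k+1}$ is a sum $\sum_{m\ge0}c_{k+m}\binom{(2(k+m)+1)/2}{m}(-\zeta_0^2)^m$. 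Matching all of these for $k=n-1$ down to $k=1$ yields, after solving the triangular recursion, exactly the closed formula \eqref{cvalues} for $c_{n-m}$; this is the routine but slightly intricate bookkeeping with Gamma-function ratios, which I would check by comparing with the leading-order expressions \eqref{def cn}--\eqref{def cj} already established in \cite{CIK}. Finally, the coefficient of $\zeta^{1}$ (the constant-in-$\zeta$ term after factoring $\zeta$, i.e.\ the coefficient of $\zeta$ in $g$) must match the linear term $s\zeta=-|s|\,|s|^{1/2n}\zeta\cdot|s|^{-1/2n}$... more precisely the coefficient of $\zeta$ in $\Theta(|s|^{1/2n}\zeta)$ is $s|s|^{1/2n}=-|s|^{\frac{2n+1}{2n}}$; dividing by $|s|^{\frac{2n+1}{2n}}$, the coefficient of $\zeta$ in $g$ must be $-1$. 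Writing out that coefficient as $\sum_{k=1}^n c_k\binom{(2k+1)/2}{k}(-\zeta_0^2)^k$ and substituting the formulas \eqref{cvalues} for the $c_k$, one simplifies the double sum to $-\sum_{k=1}^n(-1)^{n-k}\binom{2k}{k}\tau_k|s|^{\frac{k-n}{n}}\zeta_0^{2k}$, so the condition becomes precisely the equation \eqref{shortzetaequat} for $\zeta_0$. Uniqueness of a positive solution follows because the left side of \eqref{shortzetaequat} is, for $|s|$ large, a small perturbation (of size $O(|s|^{-1/n})$ in the lower-order $\tau_k$ terms) of the leading term $\zeta_0^{2n}$, so the implicit function theorem gives a unique positive root near the value in \eqref{zeta0}.

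The main obstacle I anticipate is the combinatorial simplification in the last step: showing that after plugging the explicit $c_k$ into the coefficient of $\zeta^1$ in $g$, the resulting double sum over $k$ and the internal index collapses to the single sum in \eqref{shortzetaequat}. Concretely one needs an identity of the form $\sum_{\ell}\binom{(2(k+\ell)+1)/2}{\ell}\binom{(2(k+\ell)+1)/2}{\,k+\ell\,}\cdot(\text{Gamma ratios})=\binom{2k}{k}/(2^{2k-1}\cdot\text{something})$, i.e.\ a convolution identity for the binomial coefficients $\binom{m+1/2}{\ell}$; this is a Vandermonde-type identity but keeping track of the $|s|^{-j/n}$ weights and the alternating signs needs care. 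A clean way to organize this is to introduce the generating function $\lambda(z)=\sum_{k=1}^n(-1)^{n-k}\binom{2k}{k}\tau_k|s|^{\frac{k-n}{n}}z^{2k}$ (essentially the $\lambda(z)$ from the introduction) and observe that $g(\zeta)\sim\zeta^{2n+1}\cdot(\text{a branch of }\lambda^{1/?})$—indeed the condition \eqref{gtheta-} is equivalent to saying $g'(\zeta)^2$ or a suitable power of $g$ equals a rational function matching the polynomial $\Theta$—so that \eqref{shortzetaequat} reads simply $\lambda(\zeta_0)=1$ in a suitable normalization, bypassing the term-by-term sum. I would present the computation that way and relegate the explicit coefficient verification to a remark citing \cite{CIK}.
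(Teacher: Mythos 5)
Your proposal follows essentially the same route as the paper: expand $g$ via the generalized binomial theorem, match coefficients of $\zeta^{2j+1}$ to get a triangular linear system whose solution is \eqref{cvalues}, substitute into the coefficient-of-$\zeta$ equation to recover \eqref{shortzetaequat}, and conclude uniqueness of the positive root from the large-$|s|$ reduction to $\binom{2n}{n}\zeta_0^{2n}=1$. The only soft spot is the final double-sum collapse, which you leave as a "Vandermonde-type" identity to be checked; in the paper it reduces to the elementary alternating binomial sum $\sum_{k=0}^{m-1}\frac{(-1)^{m-k}}{k!(m-k)!}=-\frac{1}{m!}$ together with the duplication-type identity $\frac{2^{2m}}{2m+1}\frac{\Gamma(m+\frac32)}{\Gamma(\frac32)}=\frac{(2m)!}{m!}$, so this is a genuinely routine verification rather than an obstacle.
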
 
\proof
If we expand $g(\zeta)$ defined by \eqref{def g-} and $\Theta(|s|^\frac{1}{2n}\zeta)$ defined by \eqref{def Theta} as $\zeta\to\infty$, we obtain in a straightforward manner that \eqref{gtheta-} implies 
\begin{multline*}
 |s|^{\frac{2n+1}{2n}} \sum_{j=1}^n c_{j} \le[\sum_{k=0}^j   \frac{(-1)^k\Gamma\le( j + \frac{3}{2}\ri)}{k! \Gamma\le( j-k+\frac{3}{2}\ri)} \zeta_0^{2k} \zeta^{2j+1-2k}\ri] =
 \Bigg( \sum_{j=1}^{n}(-1)^{n+j}\frac{2^{2j} \tau_{j}}{2j+1}\zeta^{2j+1} |s|^{\frac{2j+1}{2n}} \Bigg)\\
 - |s|^{\frac{2n+1}{2n}}\zeta +\mathcal O(1),
\end{multline*}
as $\zeta\to\infty$. Equating the coefficients of $\zeta^{2j + 1}, j =1,\ldots,n,$ of the left and right hand side, we obtain the system of equations
\begin{align}
&  \sum_{\ell= 0}^{n-j} c_{j+\ell}   \frac{(-1)^\ell\Gamma\le( j+\ell + \frac{3}{2}\ri)}{\ell ! \Gamma\le( j+\frac{3}{2}\ri)} \zeta_0^{2\ell}  =  (-1)^{n+j}\frac{2^{2j} \tau_{j}}{2j+1} |s|^{-1 + \frac{j}{n}} \qquad j=1,\ldots, n \label{cequations}\\
 &  \sum_{j=1}^n c_{j}   \frac{(-1)^j  \Gamma\le( j + \frac{3}{2}\ri)}{j! \Gamma\le(\frac{3}{2}\ri)} \zeta_0^{2j}   = -1. \label{2.5}
\end{align}
The $n$ equations in \eqref{cequations} form a linear system with unknowns $c_1,\ldots,c_n$ and $\zeta_0$ as a parameter in triangular form, whose solution is given by \eqref{cvalues}.
Substituting this into \eqref{2.5}, we obtain
 \begin{flalign}\label{longzetaequat}
 -1   
& = \sum_{m=0}^{n-1} \le[  \sum_{k=0}^{m}  (-1)^{m-k}\frac{2^{2(n-m+k)} \tau_{n-m+k}}{2(n-m+k)+1} |s|^{-\frac{m-k}{n}}  \frac{\Gamma\le(n-m+k +\frac{3}{2}\ri)}{k!}  \frac{(-1)^{n-m} }{(n-m)! \Gamma\le(\frac{3}{2}\ri)} \zeta_0^{2(n-m+k)}\ri]\nonumber \\
 &= \sum_{m=0}^{n-1} \le[  \sum_{j=0}^{m} (-1)^{j}\frac{2^{2(n-j)} \tau_{n-j}}{2(n-j)+1} |s|^{-\frac{j}{n}}  \frac{\Gamma\le(n-j +\frac{3}{2}\ri)}{(m-j)!}  \frac{(-1)^{n-m} }{(n-m)! \Gamma\le(\frac{3}{2}\ri)} \zeta_0^{2(n-j)}\ri]\nonumber \\
  &=\sum_{j=0}^{n-1}  (-1)^j \tau_{n-j} |s|^{-\frac{j}{n}}  \frac{2^{2(n-j)}}{2(n-j) + 1}\frac{\Gamma\le(n-j +\frac{3}{2}\ri)}{\Gamma\le(\frac{3}{2}\ri)} 
  \le[  \sum_{m=j}^{n-1}     \frac{(-1)^{n-m} }{(m-j)!(n-m)! }\ri]  \zeta_0^{2(n-j)}. 
 \end{flalign} 
 Moreover,
 \[
\frac{2^{2(n-j)}}{2(n-j)+1}   \frac{\Gamma\le(n-j +\frac{3}{2}\ri)}{\Gamma\le(\frac{3}{2}\ri)} 
%&= \frac{2^{2(n-j)+1}}{2(n-j)+1}   \frac{\Gamma\le(n-j +\frac{3}{2}\ri)}{\Gamma\le(\frac{1}{2}\ri)} 
=  2^{2(n-j)}   \frac{\Gamma\le(n-j +\frac{1}{2}\ri)}{\Gamma\le(\frac{1}{2}\ri)}
= 
2^{2(n-j)}  \frac{\le(2(n-j)-1\ri)!!}{2^{n-j}}  =
\frac{(2(n-j))!} {(n-j)!}
\] 
and
 \[
 \sum_{k=0}^{n-1-j}     \frac{(-1)^{n-j-k} }{k!(n-j-k)! } =  -\frac{1}{(n-j)!},
\]
 by the binomial formula.  Substituting the latter two identities into \eqref{longzetaequat}, we obtain \eqref{shortzetaequat}. For sufficiently large $|s|$, we observe easily that \eqref{shortzetaequat} has a unique positive solution, since it reduces to 
 \[{2n \choose n}\zeta_0^{2n}=1\] in the large $|s|$ limit. \QED

{For $g_1$}, as a first step, we need the following identity.
\begin{lemma}\label{lemma:g1}Set $\tau_n=1$. The constant $g_1=g_1(s)$ defined by \eqref{gtheta-} is given by
	\begin{equation}\label{g1equat}
		g_1(s) = \frac{1}2\sum_{k = 1}^n (-1)^{n-k}\tau_k{2k \choose k-1}|s|^{\frac{2k+1}{2n}}\zeta_0(s)^{2k + 2}.
	\end{equation}
\end{lemma}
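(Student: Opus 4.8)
The plan is to identify $g_1$ with a single Laurent coefficient and then compute it directly from \eqref{def g-}. Since $\Theta(|s|^{\frac{1}{2n}}\zeta)$ is a polynomial in $\zeta$, it contributes nothing to the negative powers of $\zeta$, so \eqref{gtheta-} says precisely that
$g_1(s)=|s|^{\frac{2n+1}{2n}}\,[\zeta^{-1}]\,g(\zeta)$, where $[\zeta^{-1}]g$ denotes the coefficient of $\zeta^{-1}$ in the expansion of $g$ at $\zeta=\infty$. Thus everything reduces to extracting this coefficient from \eqref{def g-}.

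Expanding each term of \eqref{def g-} by the binomial series, for large $|\zeta|$ one has
\[
(\zeta^2-\zeta_0^2)^{\frac{2j+1}{2}}=\sum_{k\ge 0}\binom{\frac{2j+1}{2}}{k}(-1)^k\zeta_0^{2k}\,\zeta^{2j+1-2k},
\]
so the coefficient of $\zeta^{-1}$ comes from $k=j+1$ and equals $(-1)^{j+1}\binom{(2j+1)/2}{j+1}\zeta_0^{2j+2}=(-1)^{j+1}2^{-2j-2}\binom{2j+2}{j+1}\zeta_0^{2j+2}$, using $\binom{(2j+1)/2}{j+1}=\frac{(2j+1)!!}{2^{j+1}(j+1)!}=2^{-2j-2}\binom{2j+2}{j+1}$. (Alternatively, one deforms the circle $|\zeta|=R$ onto the branch cut $[-\zeta_0,\zeta_0]$ and uses the Wallis integral $\int_{-\zeta_0}^{\zeta_0}(\zeta_0^2-x^2)^{\frac{2j+1}{2}}\,\d x=2^{-2j-2}\pi\binom{2j+2}{j+1}\zeta_0^{2j+2}$, together with the boundary values of $(\zeta^2-\zeta_0^2)^{\frac{2j+1}2}$ on the cut.) Summing over $j$,
\[
g_1(s)=-\,|s|^{\frac{2n+1}{2n}}\sum_{j=1}^{n}c_j\,(-1)^{j}\,2^{-2j-2}\binom{2j+2}{j+1}\zeta_0^{2j+2}.
\]

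Next I would substitute the explicit $c_j$ from \eqref{cvalues}. Writing $j=n-m$, the result is a double sum which I would reindex so that the subscript $\ell$ of $\tau_\ell$ is the outer variable ($\ell=1,\dots,n$) and $r:=\ell-k$ is the inner variable ($r=1,\dots,\ell$). The powers of $|s|$ and $\zeta_0$ then collect into $|s|^{\frac{2\ell+1}{2n}}\zeta_0^{2\ell+2}$, and, after using $\Gamma(r+\frac32)=\frac{(2r+2)!\sqrt\pi}{4^{r+1}(r+1)!}$ to simplify, the inner sum becomes $\frac{1}{\sqrt\pi}\sum_{r=1}^{\ell}\frac{(-1)^r}{(\ell-r)!(r+1)!}$. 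Now the elementary identity $\sum_{r=0}^{\ell}(-1)^{r}\binom{\ell+1}{r+1}=1$ (a rearrangement of $(1-1)^{\ell+1}=0$) gives $\sum_{r=0}^{\ell}\frac{(-1)^r}{(\ell-r)!(r+1)!}=\frac{1}{(\ell+1)!}$, whence $\sum_{r=1}^{\ell}\frac{(-1)^r}{(\ell-r)!(r+1)!}=-\frac{\ell}{(\ell+1)!}$. Plugging this back and using $2^{2\ell-1}\Gamma(\ell+\frac12)/\sqrt\pi=\frac{(2\ell)!}{2\,\ell!}$ and $\frac{(2\ell)!}{\ell!\,(\ell+1)!}=\frac{1}{\ell}\binom{2\ell}{\ell-1}$ collapses the $\ell$-th coefficient to $\frac12\binom{2\ell}{\ell-1}$, which is exactly \eqref{g1equat}.

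The only genuinely delicate point is the bookkeeping in this last step: choosing the reindexing correctly and recognizing the collapsing alternating binomial sum; the rest is routine arithmetic. An alternative route that sidesteps the telescoping identity is to differentiate \eqref{gtheta-} with respect to $\zeta$, observe that $(\zeta^2-\zeta_0^2)^{1/2}g'(\zeta)$ is a polynomial, and compare the coefficient of $\zeta^{-1}$ on the two sides of the resulting identity; this yields $g_1=[\zeta^{-1}]\bigl((\zeta^2-\zeta_0^2)^{1/2}|s|^{\frac{1}{2n}}\,\dot\Theta(|s|^{\frac1{2n}}\zeta)\bigr)$ with $\dot\Theta$ the derivative of $\Theta$ in its argument, and one then finishes by substituting $\dot\Theta$ from \eqref{def Theta}, extracting the coefficient, and using \eqref{shortzetaequat} together with $\binom{2k}{k-1}=\frac{k}{k+1}\binom{2k}{k}$ to absorb the remaining $\frac12|s|^{\frac{2n+1}{2n}}\zeta_0^2$ term.
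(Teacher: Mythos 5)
Your proof is correct and takes essentially the same route as the paper: both identify $g_1$ as $|s|^{\frac{2n+1}{2n}}$ times the coefficient of $\zeta^{-1}$ in the expansion of \eqref{def g-} at infinity (the $k=j+1$ term of the binomial series, since $\Theta$ is polynomial), then substitute \eqref{cvalues} and collapse the resulting double sum via the same alternating binomial identity used for \eqref{longzetaequat}. The only difference is that you make explicit the reindexing and the telescoping sum that the paper compresses into ``performing some straightforward computations as in \eqref{longzetaequat}''.
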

\proof
The proof is analogous to that of Lemma \ref{lemmazeta0}. Namely we {expand as $\zeta\to\infty$, but now we take into account one more term:}
\begin{multline*}
 |s|^{\frac{2n+1}{2n}} \sum_{j=1}^n c_{j} \le[\sum_{k=0}^{j+1}   \frac{(-1)^k\Gamma\le( j + \frac{3}{2}\ri)}{k! \Gamma\le( j-k+\frac{3}{2}\ri)} \zeta_0(s)^{2k} \zeta^{2j+1-2k}\ri] \\=
 \Bigg( \sum_{j=1}^{n}(-1)^{n+j}\frac{2^{2j} \tau_{j}}{2j+1}\zeta^{2j+1} |s|^{\frac{2j+1}{2n}} \Bigg)- |s|^{\frac{2n+1}{2n}}\zeta + \frac{g_1(s)}{\zeta}+\mathcal O(\zeta^{-2}),
\end{multline*}
as $\zeta\to\infty$,
and this leads to
$$
g_1(s) = |s|^{\frac{2n+1}{2n}}\sum_{j = 1}^n (-1)^j c_j \frac{ \Gamma(j + \frac{3}2)}{\Gamma(j+2)\Gamma(\frac{1}{2})}\zeta_0(s)^{2j + 2}.
$$
Plugging the values \eqref{cvalues} of $c_k,\, k=1,\ldots,n$ into the equation above and performing some straightforward computations as in \eqref{longzetaequat} we finally obtain \eqref{g1equat}. \QED

We will now derive a large $|s|$ expansion of $\zeta_0(s)$. First we need the following lemma: 
\begin{lemma}\label{lemma:LG}

Let $m$ be a positive integer and $\alpha_0,\ldots, \alpha_{m-1}\in\mathbb R$. Define the polynomial $\mu:= \mu(z)$ as
	$$\mu(z) := z^{m} + \sum_{j = 0}^{m-1} \alpha_j z^j.$$
For large positive $\mu$, there is a unique positive solution $z$ of the equation $\mu(z)=\mu$. It admits a Puiseux series at $\mu=\infty$ of the form
\begin{equation}\label{Puiseux}
	z(\sigma) = \sigma + \sum_{j = 0}^\infty \varphi_j \sigma^{-j}, \quad \quad \sigma := \mu^{1/m},
\end{equation}
with
$$
\varphi_0 = -\dfrac{1}m \alpha_{m - 1} \qquad\mbox{and}\qquad
		\varphi_j = \dfrac{1}{j} \res {z = \infty} \le(\mu^{\frac{j}{m}}(z)\ri), \quad \forall \, i \geq 1,
$$
where the residue at infinity is equal to minus the term in $1/z$ of the expansion of $\mu^{\frac{i}{m}}(z)$ as $z \rightarrow \infty$, and where we take the branch of $\mu^{\frac{i}{m}}(z)$ which is positive for large $z>0$.
\end{lemma}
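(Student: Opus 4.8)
The plan is to treat this as a standard Lagrange–B\"urmann / Puiseux inversion result and to organize the argument around the substitution $\sigma=\mu^{1/m}$. First I would establish existence and uniqueness of the positive solution: since $\mu(z)=z^m+\sum_{j=0}^{m-1}\alpha_j z^j$ is a polynomial with leading coefficient $1$, it is strictly increasing for $z$ larger than some $z_\ast$ and tends to $+\infty$, so for $\mu$ large enough there is exactly one positive root $z=z(\mu)$; moreover $z(\mu)\to+\infty$ as $\mu\to+\infty$, and in fact $z(\mu)\sim \mu^{1/m}=\sigma$. This justifies looking for an expansion of the form \eqref{Puiseux}.

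Next I would justify the form of the series. Writing $w:=1/z$, the equation $\mu(z)=\mu$ becomes $\sigma^{-m}=w^m/(1+\sum_{j=0}^{m-1}\alpha_j w^{m-j})$, i.e.\ $\sigma^{-1}$ is an analytic function of $w$ near $w=0$ with a simple zero there, of the form $w\cdot h(w)$ with $h(0)=1$. By the analytic inverse function theorem, $w$ is an analytic function of $\sigma^{-1}$ near $\sigma^{-1}=0$, vanishing there, so $w=\sigma^{-1}(1+O(\sigma^{-1}))$ and hence $z=1/w=\sigma+\varphi_0+\varphi_1\sigma^{-1}+\cdots$ converges for $|\sigma|$ large; this is precisely \eqref{Puiseux}. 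The coefficient $\varphi_0$ is read off directly by substituting $z=\sigma+\varphi_0+O(\sigma^{-1})$ into $\mu(z)=\sigma^m$: the $\sigma^m$ terms cancel and the $\sigma^{m-1}$ terms give $m\varphi_0+\alpha_{m-1}=0$, so $\varphi_0=-\alpha_{m-1}/m$.

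For the closed formula $\varphi_j=\tfrac1j\,\res_{z=\infty}\mu^{j/m}(z)$ I would invoke the Lagrange inversion theorem in residue form. Consider the function $\sigma=\mu(z)^{1/m}$, which for large $z$ has the expansion $\sigma=z(1+O(1/z))$, so $z\mapsto\sigma$ is a biholomorphism near $z=\infty$. The inverse $z=z(\sigma)$ then satisfies, for the coefficients in $z(\sigma)=\sigma+\sum_{j\ge 0}\varphi_j\sigma^{-j}$, the Lagrange formula
\[
\varphi_j=\frac{1}{2\pi i}\oint z(\sigma)\,\sigma^{j-1}\,\d\sigma
=\frac{1}{2\pi i}\oint z\,\bigl(\mu(z)^{1/m}\bigr)^{j-1}\,\Bigl(\mu(z)^{1/m}\Bigr)'\,\d z,
\]
where the contours are large circles and the branch of $\mu^{1/m}$ is the one positive for large $z>0$. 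Integrating by parts in $z$ turns this into $-\tfrac1j\cdot\tfrac{1}{2\pi i}\oint \mu(z)^{j/m}\,\d z=\tfrac1j\res_{z=\infty}\mu^{j/m}(z)$, with the sign bookkeeping matching the stated convention that the residue at infinity is minus the coefficient of $1/z$. For $j=0$ one checks separately that this reduces to the elementary computation above. The main obstacle, and the only point needing care, is the branch-of-root and orientation bookkeeping: one must fix the branch of $\mu^{1/m}$ consistently (positive on the positive real axis), check that $z=\mu^{1/m}(z)(1+O(1/z))$ so the change of variables is legitimate near infinity, and track signs through the integration by parts so that the answer comes out as $+\tfrac1j\res_{z=\infty}$ rather than with a spurious sign; everything else is routine manipulation of formal/convergent Laurent series.
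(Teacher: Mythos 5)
Your proof is correct and follows essentially the same route as the paper: the coefficient formula comes down to the change of variables $\sigma=\mu(z)^{1/m}$ near infinity, which you implement via Lagrange inversion plus an integration by parts, while the paper equivalently rewrites $\res_{z=\infty}\mu^{i/m}(z)$ in the $\sigma$-coordinate and reads off the coefficient from the series for $\d z/\d\sigma$. Your extra care about existence, uniqueness, and convergence of the Puiseux series (via $w=1/z$ and the inverse function theorem) fills in what the paper dismisses as a standard fact, and the branch/orientation bookkeeping you flag indeed works out as you describe.
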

\proof
It is a standard fact that there is a unique positive solution $z(\sigma)$ to the polynomial equation $\mu(z)=\sigma^m$ for $\sigma>0$ sufficiently large, and that it has an expansion of the form \eqref{Puiseux}. All the coefficients $\{\varphi_i, \; i\geq 0\}$ can be computed recursively by plugging \eqref{Puiseux} in the equation
$$ 
	z^m(\sigma) + \sum_{j = 0}^{m-1} \alpha_jz^j(\sigma) = \sigma^m,
$$
and by solving term by term. In this way, we immediately find the value of $\varphi_0$. For the other terms, it is convenient to rewrite the residue using $\sigma$ as coordinate. Indeed, since because of \eqref{Puiseux} we have
$$\frac{\d z}{\d \sigma} = \Big(1 - \sum_{j \geq 1} j \varphi_j \sigma^{-j-1}\Big),$$
then 
$$
	\dfrac{1}{i} \res {z = \infty} \mu^{\frac{i}{m}}(z) = \dfrac{1}{i} \res {\sigma = \infty} \Bigg(\sigma^i \Big(1 - \sum_{j \geq 1} j \varphi_j \sigma^{-j - 1}\Big) \Bigg) = \varphi_i, \quad \forall i \geq 1,
$$ 
which proves the stated result. \QED
\begin{remark}For the interested reader, we mention a connection between the above result and flat coordinates for a special class of Frobenius manifolds, the ones related to topological minimal models of type $A_n$.\\
	Consider a polynomial $\mu(z)$ as in {Lemma \ref{lemma:LG}. By shifting the variable $z\mapsto z+c$, we can assume} without any loss of generality that $\alpha_{m-1} = 0$.  Define the complex manifold $M$
	$$M := \Bigg\{\mu(z)  = z^m + \sum_{j = 0}^{m-2} \alpha_j z^j , \quad \alpha_0,\ldots,\alpha_{m-2} \in \mathbb C \Bigg\}.$$
	At each point $\mu \in M$ we identify the tangent space $T_{\mu} M$ with the algebra
	$$A_\mu := \faktor{\mathbb{C}[z]}{\mu'(z)}.$$
	Moreover, we equip this algebra with the bilinear pairing {$\langle\cdot,\cdot\rangle_\mu$ defined by
	$$
		\langle f(z),g(z)\rangle_{\mu} := \res{z = \infty} \frac{f(z)g(z)}{\mu'(z)}.
	$$}
	The multiplication on $T_\mu M$ and the bilinear pairing satisfy a certain compatibility condition which, together with some additional structures that are not important here, turns $M$ into a Frobenius manifold; the one associated to the singularity of type $A_{m-1}$ (see for instance \cite{Dij,Boris2DTFT} and references therein). The coefficients {$\varphi_1, \ldots, \varphi_{m-1}$} in Lemma \ref{lemma:LG} (up to a rescaling and a permutation necessary to identify the first coordinate with the unit vector field) are the flat coordinates associated to the given Frobenius structure (see, for instance, Corollary 4.6 in \cite{Boris2DTFT}).  With a different approach, flat coordinates of $A_m$ Frobenius manifolds are also used in \cite{ACvM1}, in relation with the same type of generalized Airy kernels.\label{remark:Frobenius}
\end{remark}

\begin{proposition}\label{expansionzeta0}
Let
$$\lambda(z) := \sum_{k=1}^{n}(-1)^{n-k} {2k \choose k} \tau_{k} z^{2k},\qquad \tilde\lambda(z) := \sum_{k=1}^{n}(-1)^{n-k} {2k \choose k} \tau_{k} z^k$$ and
	\begin{equation}\theta_i(\tau_1,\ldots,\tau_n) \equiv \theta_i := \begin{cases}
															\ds{2n \choose n}^{-\frac{1}{2n}}, &i = 0, \\[2ex] 
															\dfrac{1}{2i - 1} \res {z = \infty} \lambda^{\frac{2i-1}{2n}}(z), &i \geq 1.
														\end{cases}\label{def:thetaj}\end{equation}					
	Then, the positive solution of \eqref{shortzetaequat} has the asymptotic expansion
	\begin{equation}\label{zetasolut}
		\zeta_0(s) \sim \sum_{i = 0}^{\infty} \theta_i |s|^{-\frac{i}{n}} ,\qquad\mbox{ as $s\to -\infty$}.
	\end{equation}
	Moreover, 
	\begin{equation}\label{zetasolut2}
	\zeta_0^2(s) \sim \sum_{i = 0}^{\infty} \theta_i^{[2]} |s|^{-\frac{i}{n}},\qquad\mbox{ as $s\to -\infty$},\qquad \mbox{with}\qquad
	\theta_i^{[2]} := \sum_{k = 0}^i \theta_{k}\theta_{i - k},
\end{equation}	
or equivalently
\begin{equation}
\theta_i^{[2]}(\tau_1,\ldots,\tau_n) \equiv \theta_i^{[2]} = \begin{cases} \ds\theta_0,&i=0,\\[2ex]
														\ds\frac{\ds\tau_{n - 1}}{4n-2},&i = 1,\\[2ex]

\ds\dfrac{1}{i - 1} \res {z = \infty} \tilde\lambda^{\frac{i-1}{n}}(z), &i \geq 2. 							\end{cases}\label{formula:thetaj2}\end{equation}
In particular,
\begin{equation}\label{nullcoefficients}
	\theta_{kn + 1}^{[2]} = 0 \quad \forall\, k \geq 1.
\end{equation}
\end{proposition}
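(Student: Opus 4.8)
The plan is to derive the expansion \eqref{zetasolut} by directly applying Lemma \ref{lemma:LG} to the equation \eqref{shortzetaequat}, and then to obtain \eqref{zetasolut2} and \eqref{formula:thetaj2} by squaring and by a second, parallel application of Lemma \ref{lemma:LG}. Since $\tau_n=1$, the equation \eqref{shortzetaequat} is
\[
\sum_{k=1}^n(-1)^{n-k}\binom{2k}{k}\tau_k|s|^{\frac{k-n}{n}}\zeta_0^{2k}=1.
\]
Setting $w:=\zeta_0^2$ and multiplying through by $|s|$, this becomes $\sum_{k=1}^n(-1)^{n-k}\binom{2k}{k}\tau_k |s|^{\frac kn} w^k=|s|$. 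Introducing $z := |s|^{1/n}w$, the left side is exactly $\lambda(|s|^{1/(2n)}\zeta_0)$ written in the variable $z$? — more cleanly: substituting $\zeta_0 = |s|^{-1/(2n)}u$ turns the left-hand side into $\lambda(u)$, where $\lambda(z)=\sum_{k=1}^n(-1)^{n-k}\binom{2k}{k}\tau_k z^{2k}$ as defined, with leading term $\binom{2n}{n}z^{2n}$ since $\tau_n=1$. So the equation reads $\lambda(u)=|s|$, i.e.\ $\lambda$ evaluated at $u=|s|^{1/(2n)}\zeta_0$ equals $|s|$.

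First I would normalize: write $\lambda(z)=\binom{2n}{n}\mu(z/\text{(scale)})$ or, more simply, observe that $\lambda(z)^{1/(2n)}$ has leading behavior $\binom{2n}{n}^{1/(2n)}z$ for large $z>0$, and apply Lemma \ref{lemma:LG} with $m=2n$ (after the standard rescaling to make the equation monic; the $\binom{2n}{n}$ factor is absorbed into $\theta_0=\binom{2n}{n}^{-1/(2n)}$, and note $\lambda$ has no odd powers so the role of $\alpha_{m-1}$, the $z^{2n-1}$ coefficient, is zero, consistent with no $|s|^{0}$ term beyond $\theta_0$ in the naive count — one must be slightly careful here, see below). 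The lemma then gives $|s|^{1/(2n)}\zeta_0(s)\sim \sigma+\sum_{j\ge 1}\varphi_j\sigma^{-j}$ with $\sigma=|s|^{1/(2n)}$ and $\varphi_j=\frac1j\res_{z=\infty}\lambda^{j/(2n)}(z)$ — up to the overall $\binom{2n}{n}^{1/(2n)}$ normalization. Dividing by $|s|^{1/(2n)}$ converts powers $\sigma^{-j}=|s|^{-j/(2n)}$; but since $\lambda$ is even, only even $j$ survive in the residues (odd powers of $\sigma^{-1}$ have vanishing residue because $\lambda^{j/(2n)}$ is even in $z$ for even $j$ and odd for odd $j$, and the residue picks the $z^{-1}$ coefficient). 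Re-indexing $j=2i$ yields exactly $\zeta_0(s)\sim\sum_{i\ge0}\theta_i|s|^{-i/n}$ with $\theta_i=\frac{1}{2i-1}\res_{z=\infty}\lambda^{(2i-1)/(2n)}(z)$ for $i\ge1$ — I would double-check the index shift so that $\varphi_{2i}\leftrightarrow \theta_i$ matches the stated $\frac{1}{2i-1}$ versus $\frac{1}{2i}$; the discrepancy is resolved by the change of branch from $\lambda^{2i/(2n)}$ to $\lambda^{(2i-1)/(2n)}$ induced by carrying the $|s|^{-1/(2n)}$ prefactor through, exactly the $\d z/\d\sigma$ manipulation in the proof of Lemma \ref{lemma:LG}. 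This is the step I expect to be the main obstacle: getting the bookkeeping of the two competing normalizations ($\binom{2n}{n}$ and $|s|^{\pm1/(2n)}$) exactly right so the residue exponents come out as $(2i-1)/(2n)$ rather than $2i/(2n)$.

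For \eqref{zetasolut2}, squaring \eqref{zetasolut} immediately gives $\zeta_0^2\sim\sum_i\theta_i^{[2]}|s|^{-i/n}$ with $\theta_i^{[2]}=\sum_{k=0}^i\theta_k\theta_{i-k}$, which is \eqref{defthetaj2}. For the closed form \eqref{formula:thetaj2}, I would apply Lemma \ref{lemma:LG} a second time, now directly to the equation $\sum_{k=1}^n(-1)^{n-k}\binom{2k}{k}\tau_k |s|^{(k-n)/n}w^k=1$ in the variable $w=\zeta_0^2$: after the substitution $w=|s|^{-1/n}v$ the left side becomes $\tilde\lambda(v)$ with leading term $\binom{2n}{n}v^n$, so $\tilde\lambda(|s|^{1/n}\zeta_0^2)=|s|$, and Lemma \ref{lemma:LG} with $m=n$ gives $|s|^{1/n}\zeta_0^2\sim|s|^{1/n}\theta_0^{[2]}$-type leading term plus $\varphi_j|s|^{-j/n}$ corrections. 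Here the $z^{m-1}=z^{n-1}$ coefficient of $\tilde\lambda$ is $(-1)\binom{2(n-1)}{n-1}\tau_{n-1}$ (nonzero in general), so the $\varphi_0$-term of the lemma is genuinely present and gives the $i=1$ value $\theta_1^{[2]}=\tau_{n-1}/(4n-2)$ after simplifying $\binom{2(n-1)}{n-1}/\binom{2n}{n}=\frac{n}{2(2n-1)}$ against the normalization and the $-\frac1m\alpha_{m-1}$ formula. For $i\ge2$ the lemma gives $\theta_i^{[2]}=\frac{1}{i-1}\res_{z=\infty}\tilde\lambda^{(i-1)/n}(z)$, again with the branch shift from $i/n$ to $(i-1)/n$ coming from the $|s|^{-1/n}$ prefactor exactly as before. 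Finally, \eqref{nullcoefficients}: since $\zeta_0^2$ satisfies $\tilde\lambda(|s|^{1/n}\zeta_0^2)=|s|$ exactly and $\tilde\lambda$ is a polynomial of degree $n$, I would invoke the Lagrange–Bürmann / Puiseux structure from Lemma \ref{lemma:LG} which shows the expansion of $|s|^{1/n}\zeta_0^2$ in $\sigma=|s|^{1/n}$ has a gap: the coefficient of $\sigma^{-(kn-1)}$, equivalently $\theta_{kn+1}^{[2]}$, vanishes for all $k\ge1$ because $\res_{z=\infty}\tilde\lambda^{(kn)/n}(z)=\res_{z=\infty}\tilde\lambda^{k}(z)=0$ for $k\ge1$ (the $k$-th power of a polynomial has no $z^{-1}$ term), and the exponent $(kn-1)/n$ in the $\sigma$-variable corresponds precisely to the term $\theta_{kn+1}^{[2]}$ after the index shift. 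I would write this last point as a short explicit verification using the residue formula rather than re-deriving the whole recursion.
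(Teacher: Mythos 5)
Your proposal is correct and follows essentially the same route as the paper: normalize \eqref{shortzetaequat} so that Lemma \ref{lemma:LG} applies to $\lambda$ (with the $\binom{2n}{n}$ factor absorbed into $\theta_0$), square the resulting Puiseux series for \eqref{zetasolut2}, apply the lemma a second time to $\tilde\lambda$ in the variable $\zeta_0^2$ for \eqref{formula:thetaj2}, and get \eqref{nullcoefficients} from $\res_{z=\infty}\tilde\lambda^k(z)=0$. The one bookkeeping slip is the parity statement: by the evenness argument you yourself give, it is the even-indexed $\varphi_j$ (residues of $\lambda^{j/(2n)}$ with $j$ even) that vanish, and the surviving odd-indexed terms $\varphi_{2i-1}\sigma^{-2i}$ — after dividing the lemma's expansion of $|s|^{1/(2n)}\zeta_0$ by $\sigma$ — are what produce $\theta_i|s|^{-i/n}$ with exponent $(2i-1)/(2n)$, exactly as in the paper.
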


\proof 
We apply Lemma \ref{lemma:LG} to the polynomial
\[\mu(z)=
\sum_{k=1}^{n}(-1)^{n-k} {2k \choose k} \theta_0^{2k}\tau_{k} z^{2k}=\lambda\left(\theta_0 z\right)
\]
which satisfies $\mu\left(\theta_0^{-1}|s|^{\frac{1}{2n}}\zeta_0(s)\right)=|s|=:\sigma^{2n}$.
The lemma then implies that
\[\theta_0^{-1}\sigma\zeta_0(s)=\sigma+\sum_{j=0}^\infty\varphi_j\sigma^{-j}\]
as $\sigma\to\infty$, or equivalently
\[\zeta_0(s)=\theta_0+\theta_0\sum_{j=0}^\infty\varphi_j\sigma^{-j-1}.\]
The result \eqref{def:thetaj} now follows by observing that \[\varphi_{2i-1}=\theta_0^{-1}\res {z = \infty} \lambda^{\frac{2i-1}{2n}}(z),\]
and that
\[\varphi_{2i}=\theta_0^{-1}\res {z = \infty} \lambda^{\frac{i}{n}}(z)=0.\]
This directly implies \eqref{zetasolut2}. To see 
\eqref{formula:thetaj2}, we apply Lemma \ref{lemma:LG} to the polynomial $\tilde\lambda$.
From \eqref{formula:thetaj2} and the definition of $\tilde\lambda$, we deduce \eqref{nullcoefficients}.
 \QED
{Given the expansion of $\zeta_0$, we can now also deduce the one for} $g_1(s)$, as stated in the next proposition.
\begin{proposition}
There exists a constant $\kappa$ such that
	\begin{equation}\label{g_1expansion}
		g_1(s) \sim \sum_{\substack{ i \geq 0 \\ i \neq n+1}} \frac{n}{2(n + 1 - i)} \theta_i^{[2]}|s|^{\frac{2n + 1 - 2i}{2n}} + \kappa |s|^{-\frac{1}{2n}} \quad \mathrm{as} \; s\rightarrow -\infty.
	\end{equation}
\end{proposition}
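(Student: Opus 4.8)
The plan is to reduce the statement to a single antiderivative identity, starting from the closed formula \eqref{g1equat} of Lemma \ref{lemma:g1}; throughout I set $\tau_n=1$ as in that lemma, and I treat $|s|$ as the independent variable. The first step is a change of variable: introduce the rescaled branch point
\[
\eta=\eta(s):=|s|^{\frac1n}\,\zeta_0(s)^2 .
\]
Multiplying equation \eqref{shortzetaequat} by $|s|$ and recalling the definition of $\tilde\lambda$ shows that $\eta$ is the unique large positive root of $\tilde\lambda(\eta)=|s|$; in particular $\eta$ is an algebraic function of $|s|$, and by \eqref{zetasolut2} it has the asymptotic (Puiseux) expansion $\eta(s)\sim\sum_{i\ge0}\theta_i^{[2]}|s|^{\frac{1-i}{n}}$ as $s\to-\infty$. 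Substituting $\zeta_0^{2k+2}=\eta^{k+1}|s|^{-\frac{k+1}{n}}$ into \eqref{g1equat} and using the elementary identity $\frac{2k+1}{2n}-\frac{k+1}{n}=-\frac1{2n}$, every summand acquires the same power of $|s|$, so that
\[
g_1(s)=\frac12\,|s|^{-\frac1{2n}}\,\Phi\big(\eta(s)\big),\qquad
\Phi(\eta):=\sum_{k=1}^n(-1)^{n-k}\tau_k{2k \choose k-1}\,\eta^{k+1}.
\]

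The key observation is that $\Phi$ is an explicit antiderivative along the curve $\tilde\lambda(\eta)=|s|$. Using ${2k \choose k-1}=\frac{k}{k+1}{2k\choose k}$ one finds $\Phi'(\eta)=\sum_{k=1}^n(-1)^{n-k}\tau_k\,k{2k\choose k}\,\eta^k=\eta\,\tilde\lambda'(\eta)$, while differentiating $\tilde\lambda(\eta(s))=|s|$ gives $\tilde\lambda'(\eta)\,\frac{\d\eta}{\d|s|}=1$; combining the two,
\[
\frac{\d}{\d|s|}\,\Phi\big(\eta(s)\big)=\Phi'(\eta)\,\frac{\d\eta}{\d|s|}=\eta(s).
\]
Since $\eta(s)$, hence also $\Phi(\eta(s))$, is algebraic, both sides have genuine Puiseux expansions at $|s|=\infty$, so I may integrate the expansion of $\eta(s)$ from the first step term by term. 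Using $\int|s|^{\frac{1-i}{n}}\,\d|s|=\frac{n}{n+1-i}\,|s|^{\frac{n+1-i}{n}}$ for $i\ne n+1$ gives
\[
\Phi\big(\eta(s)\big)\sim\sum_{\substack{i\ge0\\ i\ne n+1}}\frac{n}{n+1-i}\,\theta_i^{[2]}\,|s|^{\frac{n+1-i}{n}}+\kappa',
\]
for a constant $\kappa'$; multiplying by $\frac12|s|^{-1/(2n)}$ and simplifying the exponents through $\frac{n+1-i}{n}-\frac1{2n}=\frac{2n+1-2i}{2n}$ then yields exactly \eqref{g_1expansion} with $\kappa=\kappa'/2$.

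The only delicate point — and the place where \eqref{nullcoefficients} is used in an essential way — is the resonant index $i=n+1$. For that value the exponent $\frac{1-i}{n}$ of the corresponding term of $\eta(s)$ equals $-1$, so a naive term-by-term integration would generate a $\theta_{n+1}^{[2]}\log|s|$ contribution that is absent from \eqref{g_1expansion}, and simultaneously the coefficient $\frac{n}{2(n+1-i)}$ in the target formula is singular at $i=n+1$. Both pathologies vanish because $\theta_{n+1}^{[2]}=0$ by \eqref{nullcoefficients}: no logarithm is produced, and the term is simply omitted from the sum. The constant $\kappa'$ appears as the constant of integration — equivalently, the coefficient of $|s|^0$ in the Puiseux expansion of $\Phi(\eta(s))$; it is a well-defined real number, which we do not evaluate, and which becomes the undetermined constant $\kappa$ of the statement. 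Apart from this resonance and the exponent bookkeeping, I foresee no genuine obstacle.
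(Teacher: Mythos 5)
Your argument is correct and is essentially the paper's own proof in different clothing: your $\eta(s)$ is the paper's $z(\sigma)=\sigma^2\zeta_0^2(s)$, your $\Phi$ is the paper's auxiliary polynomial $Q$, and your antiderivative identity $\frac{\d}{\d|s|}\Phi(\eta(s))=\eta(s)$ is the paper's relation $\frac{\d}{\d\sigma}Q(z(\sigma))=z(\sigma)\frac{\d}{\d\sigma}\tilde\lambda(z(\sigma))$, merely written with $|s|$ rather than $\sigma=|s|^{1/(2n)}$ as the integration variable. Your explicit treatment of the resonant index $i=n+1$ via $\theta_{n+1}^{[2]}=0$ is a point the paper leaves implicit, and it is handled correctly.
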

\proof
Using \eqref{g1equat} and the asymptotic expansion of $\zeta_0(s)$ for $s \rightarrow -\infty$, we observe that $g_1(s)$ has an expansion of the form
\begin{equation}\label{ansatzg_1}
	g_1(s) \sim \sum_{i \geq 0} g_1^{[i]} |s|^{\frac{2n + 1 - 2i}{2n}}.
\end{equation}
Let us now recall the definition of $\tilde\lambda$ as in Proposition \ref{expansionzeta0}, such that
$\widetilde\lambda(z(\sigma))=\sigma^{2n}$ with $\sigma^{2n}=|s|$ and $z(\sigma):=\sigma^2\zeta_0^2(s)$.
We also 
write \[Q(z):=\sum_{k = 1}^n (-1)^{n-k}\tau_k{2k \choose k-1}z^{k + 1},\]
such that $Q(z(\sigma))=2\sigma g_1(s)$ by Lemma \ref{lemma:g1}.
Using the identity ${2k \choose k}k = {2k \choose k-1}(k+1)$, we obtain the relation
\[\frac{\d}{\d\sigma}Q(z(\sigma))=z(\sigma)\frac{\d}{\d\sigma}\tilde\lambda(z(\sigma)),\]
and this results in 
\[\frac{\d}{\d\sigma}(2\sigma g_1(s))=\frac{\d}{\d\sigma}Q(z(\sigma))=z(\sigma)\frac{\d}{\d\sigma}\tilde\lambda(z(\sigma))=2n\sigma^{2n+1}\zeta_0^2(s).\]
Combining this identity with \eqref{ansatzg_1} and \eqref{zetasolut2} and integrating in $\sigma$, we obtain the result, with unknown integration constant $\kappa$.
\QED

\subsection{Normalization of the RH problem}
We now take the jump contours $\gamma_U$ and $\gamma_D$ of a special form.
It is convenient for us to take them such that
 $\widehat\gamma_U:=|s|^{-\frac{1}{2n}}\gamma_U$ and $\widehat\gamma_D:=|s|^{-\frac{1}{2n}}\gamma_D$ 
coincide with the real line on the interval $[-\zeta_0,\zeta_0]$, and such that they are independent of $s$.
Then we define
\be
S(\zeta)=\Psi\left(|s|^{\frac{1}{2n}}\zeta\right){\rm e}^{i|s|^{\frac{2n+1}{2n}}g(\zeta)\sigma_3}. \label{Sdef}
\ee
$S$ now satisfies the following RH problem. 
\subsubsection*{RH problem for $S$}
\begin{itemize}
\item[(a)]  $S$ is analytic in $\mathbb C\setminus\Sigma_S$, with \[\Sigma_S=[-\zeta_0,\zeta_0]\cup \Sigma_1\cup\Sigma_2\cup\Sigma_3\cup\Sigma_4\]
as in Figure \ref{Fig4}.
\item [(b)] $S$ has the jump relations 
\begin{align}
&S_+(\zeta)=S_-(\zeta)\begin{pmatrix}0&-1\\ 1&{\rm e}^{-2i|s|^{\frac{2n+1}{2n}} g_+(\zeta) }
\end{pmatrix},&&\zeta\in(-\zeta_0,\zeta_0),\\
&S_+(\zeta)=S_-(\zeta)\begin{pmatrix}1&0\\ {\rm e}^{2i|s|^{\frac{2n+1}{2n}}g\left(\zeta\right)}&1
\end{pmatrix},&&\zeta\in\Sigma_1\cup\Sigma_2,\\
&S_+(\zeta)=S_-(\zeta)\begin{pmatrix}1&-{\rm e}^{-2i|s|^{\frac{2n+1}{2n}}g\left(\zeta\right)}\\0&1
\end{pmatrix},&&\zeta\in\Sigma_3\cup\Sigma_4,
\end{align}
which follows from the fact that $g_+(\zeta) + g_-(\zeta) = 0$ for $\zeta \in (-\zeta_0,\zeta_0)$.
\item[(c)] $S(\zeta)=I+S_1\zeta^{-1}+\mathcal O(\zeta^{-2})$ as $\zeta\to\infty$, 
with 
\begin{equation}\label{S1Psi1}S_1=|s|^{-\frac{1}{2n}}\Psi_1+ig_1(s)\sigma_3.
\end{equation}
\end{itemize}

\begin{figure}[h!]
\centering
\scalebox{.8}{
\begin{tikzpicture}[>=stealth]
\path (0,0) coordinate (O);

\draw[->] (-6,0) -- (6,0) coordinate (x axis);
\draw[->] (0,-3) -- (0,3) coordinate (y axis);

\draw[thick, blue, ->- = .7] (-2.5,0) -- (2.5,0);
\draw[fill, blue] (2.5,0) circle [radius=0.05];
\node[above] at (2.5,0.1) {${\color{blue} \large \zeta_0}$};
\draw[fill, blue] (-2.5,0) circle [radius=0.05];
\node[above] at (-2.5,0.1) {${\color{blue}\large -\zeta_0}$};

\draw[thick, blue, ->- = .7] (2.5,0) -- (6,2);
\draw[thick, blue, ->- = .7] (-6,2) -- (-2.5,0);

\draw[thick, blue, ->- = .7] (2.5,0) -- (6,-2);
\draw[thick, blue, ->- = .7] (-6,-2) -- (-2.5,0);

\node [above] at (-5,2) {{\color{blue} \large $ \Sigma_1$}};
\node [above] at (5,2) {{\color{blue} \large $ \Sigma_2$}};

\node [below] at (-5,-2) {{\color{blue} \large $ \Sigma_3$}};
\node [below] at (5,-2) {{\color{blue} \large $ \Sigma_4$}};

\end{tikzpicture}}
\caption{The jump contours $\Sigma_S$ for $S$. \label{Fig4}}
\end{figure}
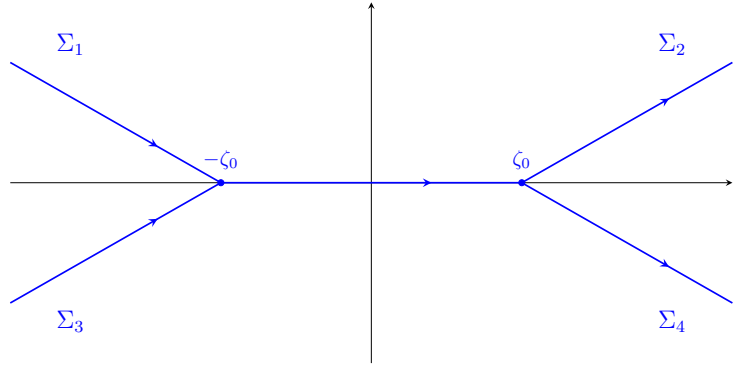

It is easy to see, by studying the argument of the terms in (\ref{def g-}), that the $g$-function satisfies the following relations:
\begin{align}
& \Im\le[ g(\zeta) \ri] >0 \qquad \zeta \in \Sigma_1\cup \Sigma_2, \label{ineq1}\\
&\Im\le[ g(\zeta) \ri] <0 \qquad \zeta \in \Sigma_3\cup \Sigma_4. \label{ineq2}
\end{align}
for sufficiently large $|s|${, if we choose the angles between $\Sigma_1,\ldots, \Sigma_4$ and the real line sufficiently small}. Furthermore, 
\begin{lemma}
\begin{gather}
\Im \le[ g_+(\zeta) \ri] <0 \qquad \zeta \in (-\zeta_0, \zeta_0) \label{ineq3}
\end{gather}
for sufficiently large $|s|$.
\end{lemma}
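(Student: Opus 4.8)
The plan is to make the boundary value of $g$ on the interval explicit, reduce \eqref{ineq3} to the monotonicity of $\Im g_+$, and then to the positivity of an explicit polynomial on a negative interval, which we verify in the large--$|s|$ regime by a closed-form evaluation. Concretely: since the branch of $(\zeta^2-\zeta_0^2)^{\frac12}$ used in \eqref{def g-} behaves like $\zeta$ at infinity and is cut along $[-\zeta_0,\zeta_0]$, its $+$--boundary value at $x\in(-\zeta_0,\zeta_0)$ is $i\sqrt{\zeta_0^2-x^2}$, whence
\[
g_+(x)=i\sum_{j=1}^{n}(-1)^{j}c_j\bigl(\zeta_0^2-x^2\bigr)^{\frac{2j+1}{2}},\qquad x\in(-\zeta_0,\zeta_0),
\]
so $g_+$ is purely imaginary (consistent with $g_++g_-=0$) and $\Im g_+$ is a smooth \emph{even} function of $x$; it therefore suffices to prove $\Im g_+(x)<0$ for $x\in[0,\zeta_0)$. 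Differentiating and factoring out $(\zeta_0^2-x^2)^{\frac12}$ gives
\[
\frac{\d}{\d x}\,\Im g_+(x)=x\,(\zeta_0^2-x^2)^{\frac12}\,P\bigl(x^2-\zeta_0^2\bigr),\qquad P(w):=\sum_{j=1}^{n}(2j+1)\,c_j\,w^{j-1}
\]
(equivalently $g'(\zeta)=\zeta(\zeta^2-\zeta_0^2)^{\frac12}P(\zeta^2-\zeta_0^2)$). Since $\Im g_+(x)\to0$ as $x\to\zeta_0^-$, it is enough to show that $P(w)>0$ on $[-\zeta_0^2,0]$ for large $|s|$: this makes $\Im g_+$ strictly increasing on $(0,\zeta_0)$, hence negative there, and by evenness negative on all of $(-\zeta_0,\zeta_0)$.

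To prove the positivity of $P$, I would pass to the limit $|s|\to\infty$. By \eqref{def cn}--\eqref{zeta0} one has $\zeta_0^2=\zeta_\infty^2+\mathcal O(|s|^{-\frac1n})$ with $\zeta_\infty^2={2n\choose n}^{-\frac1n}$, and $c_j=c_j^{(0)}+\mathcal O(|s|^{-\frac1n})$ with $c_j^{(0)}=2^{2n-1}\dfrac{\Gamma(n+\frac12)}{(n-j)!\,\Gamma(j+\frac32)}\,\zeta_\infty^{2(n-j)}$, so $P\to P^{(0)}$ uniformly on compacts. Using $(2j+1)/\Gamma(j+\tfrac32)=2/\Gamma(j+\tfrac12)$, the duplication values of $\Gamma$, and the identity $\dfrac{4^{k}(k!)^2}{(2k+1)!}=\int_0^1(1-u^2)^{k}\,\d u$, a short manipulation collapses the alternating sum defining $P^{(0)}(-\zeta_\infty^2 s)$ into a binomial expansion and yields
\[
P^{(0)}\bigl(-\zeta_\infty^2 s\bigr)=\frac{(2n)!}{n!}\,\zeta_\infty^{2(n-1)}\cdot\frac{2}{(n-1)!}\int_0^1\bigl(1-s(1-u^2)\bigr)^{n-1}\d u .
\]
For $s\in[0,1]$ the integrand equals $\bigl(1-s+su^2\bigr)^{n-1}\ge 0$ and is strictly positive for a.e.\ $u$, so $P^{(0)}>0$ on $[-\zeta_\infty^2,0]$; being continuous it is bounded below by some $\delta>0$ on a slightly larger interval. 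Combined with $c_j=c_j^{(0)}+o(1)$ and $\zeta_0^2=\zeta_\infty^2+o(1)$, this gives $P(w)>0$ on $[-\zeta_0^2,0]$ for all sufficiently large $|s|$, which by the previous paragraph proves \eqref{ineq3}.

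The only genuine obstacle is the second step: one must carry out the collapse of the alternating sum $P^{(0)}(-\zeta_\infty^2 s)$ into the integral correctly, and check that the perturbation $P\to P^{(0)}$ is uniform up to and slightly beyond the endpoint $-\zeta_0^2$. The latter is automatic, since $P^{(0)}(-\zeta_\infty^2)=\frac{(2n)!}{n!}\zeta_\infty^{2(n-1)}\frac{2}{(n-1)!}\int_0^1 u^{2(n-1)}\d u>0$ strictly, so a fixed neighbourhood of the endpoint is controlled; on the complementary compact subinterval $P^{(0)}$ is bounded away from $0$ and the $\mathcal O(|s|^{-\frac1n})$ corrections are negligible. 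Everything else is elementary calculus.
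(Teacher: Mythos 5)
Your proof is correct, and it reaches the conclusion by a genuinely different route than the paper in the decisive step. Both arguments begin the same way (explicit boundary value $g_+(x)=i\sum_j(-1)^jc_j(\zeta_0^2-x^2)^{\frac{2j+1}{2}}$, then pass to the leading order of $c_j$ and $\zeta_0$ via (\ref{def cn})--(\ref{zeta0}) and reduce the claim to the positivity of an explicit alternating sum). The paper works with $\Im g_+$ itself: after the substitution $|\zeta^2-\zeta_0^2|=\zeta_0^2 z$ it recognizes the resulting sum as a Jacobi polynomial $P^{(3/2,-n-1/2)}_{n-1}(1-2z)$ and plays off two hypergeometric representations from Abramowitz--Stegun, one giving the alternating sum and one that is manifestly a sum of nonnegative terms. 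You instead differentiate, factor $\frac{\d}{\d x}\Im g_+(x)=x(\zeta_0^2-x^2)^{1/2}P(x^2-\zeta_0^2)$, and prove positivity of the limiting polynomial $P^{(0)}$ on $[-\zeta_\infty^2,0]$ by collapsing it, via $1/\Gamma(k+\tfrac32)=\tfrac{2}{\sqrt{\pi}\,k!}\int_0^1(1-u^2)^k\,\d u$ and the binomial theorem, into $\frac{(2n)!}{n!}\zeta_\infty^{2(n-1)}\frac{2}{(n-1)!}\int_0^1(1-s+su^2)^{n-1}\d u>0$; I checked the constants and the degenerate case $n=1$, and they are right. What each approach buys: yours is self-contained and elementary (no special-function identities are quoted), at the cost of the extra monotonicity step and the perturbative control of $P$ near the endpoint $-\zeta_0^2$ (which you correctly note is harmless since $P^{(0)}(-\zeta_\infty^2)>0$ strictly); the paper's is shorter once the two Jacobi representations are taken for granted, and avoids differentiating. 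The two positivity statements are in fact close cousins: expanding your integral binomially and integrating termwise essentially reproduces the paper's manifestly positive representation, shifted by one index because you work with $g'$ rather than $g$.
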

\begin{proof}
Following the arguments of  \cite[Section 4.6]{CIK}, we  consider the subinterval $[0,\zeta_0)$. For  $\zeta \in (-\zeta_0,0)$, the analysis is similar.  

We set
\begin{gather}
|\zeta^2 - \zeta^2_0| = \zeta_0^2 z, \qquad z \in [0,1).
\end{gather}
From the asymptotic behavior of the coefficients $c_{j}$ and the endpoint $\zeta_0 $ (see (\ref{def cn})--(\ref{zeta0})), we have that
\begin{gather}
g_+(\zeta) - g_-(\zeta) = 2g_+(\zeta) =  i n!\sqrt{\pi z}\zeta_0  \sum_{j=1}^n \frac{(-1)^j z^{j + \frac{1}{2}}}{\Gamma\le(j+\frac{3}{2}\ri) (n-j)!} +\mathcal O \le(|s|^{-\frac{1}{n}}\ri), \quad s\to -\infty,
\end{gather}
for $z \in [0,1)$ (i.e. $\zeta \in [0,\zeta_0)$). In order to prove that $\Im \le[ g_+(\zeta) \ri] <0$ on $[0,\zeta_0)$, we need to prove that the sum appearing in the above expression is negative. 

Using Jacobi polynomials $P^{(\alpha,\beta)}_n(x)$ and their representation as hypergeometric functions (in particular formula (22.5.42) from \cite{AbramowitzStegun}), we have that
\begin{gather}
P^{\le(\frac{3}{2}, -n-\frac{1}{2}\ri)}_{n-1}(1-2z) = - \frac{\Gamma\le(n+\frac{3}{2}\ri)}{z} \le( \sum_{j=1}^n \frac{(-1)^j z^{j + \frac{1}{2}}}{\Gamma\le(j+\frac{3}{2}\ri) (n-j)!}\ri)
\end{gather}
On the other hand, using an alternative hypergeometric representation for Jacobi polynomials (formula (22.5.45) from \cite{AbramowitzStegun}), we have
\begin{gather}
P^{\le(\frac{3}{2}, -n-\frac{1}{2}\ri)}_{n-1}(1-2z) = \frac{2\Gamma\le(n+\frac{3}{2}\ri)}{\sqrt{\pi}} \sum_{m=0}^{n-1} \frac{z^{n-m-1}(1-z)^m}{m!(n-m-1)!\le(n-m+\frac{1}{2}\ri)!} >0
\end{gather}
for $z\in [0,1)$, and this implies the result.
\end{proof}

\subsection{Construction of the global parametrix}

From \eqref{ineq1}, \eqref{ineq2}, and \eqref{ineq3}, it follows that the jump matrices for $S$ are exponentially close to the identity matrix as $s\to -\infty$, as long as we stay away from the interval $[-\zeta_0,\zeta_0]$.
For this reason, we can expect that, away from $\pm\zeta_0$, a first approximation to $S$ will be given by a function which has the same jump on $(-\zeta_0,\zeta_0)$ and the same asymptotic behavior as $S$. This motivates the RH conditions for the global parametrix given below.

\subsubsection*{RH problem for $P^\infty$}

\begin{itemize}
\item[(a)]  $P^\infty$ is analytic in $\mathbb C\setminus[-\zeta_0,\zeta_0]$,
\item[(b)] $P^\infty$ has the jump relation
\begin{gather}
P^\infty_+(\zeta) = P^\infty_-(\zeta) \begin{pmatrix} 0 & -1\\ 1& 0\end{pmatrix}  , \qquad \zeta \in (-\zeta_0,\zeta_0),
\end{gather}
\item[(c)] $P^\infty(\zeta)=I+P^\infty_1\zeta^{-1}+\mathcal O(\zeta^{-2})$ as $\zeta\to\infty$.
\end{itemize}
If we impose in addition that $P^\infty(\zeta)=\mathcal O\le((\zeta\mp\zeta_0)^{-1/4}\ri)$ as $\zeta\to \pm\zeta_0$, then it is a well-known fact{, see e.g.\ \cite{DeiftCourant},} that the solution $P^\infty$ is given by
\begin{equation}\label{def:Pinf}
P^\infty(\zeta)=\frac{1}{2}
\begin{pmatrix}1&i\\i&1\end{pmatrix}\gamma(\zeta)^{\sigma_3}
\begin{pmatrix}1&-i\\-i&1\end{pmatrix},\qquad \gamma(\zeta)=\left(\frac{\zeta+\zeta_0}{\zeta-\zeta_0}\right)^{1/4},
\end{equation}
where $\gamma(\zeta)$ is analytic on $\mathbb C\setminus[-\zeta_0, \zeta_0]$ and $\gamma(\zeta)\to 1$ as $\zeta\to\infty$. 

Indeed,
the jump relation for $P^\infty$ can be easily verified by noticing that $\gamma_+(\zeta) = -i\gamma_-(\zeta)$ for $\zeta \in (-\zeta_0,\zeta_0)$. The asymptotic behavior at infinity follows by expanding the function $\gamma$: more precisely, we have 
\be\label{Pinf:as}
P^\infty(\zeta)=I+\frac{\zeta_0}{2\zeta}\begin{pmatrix}0&-i\\i&0\end{pmatrix}+\mathcal O(\zeta^{-2}) \qquad \text{as } \zeta\to\infty.
\ee

\subsection{Construction of local Airy parametrices}	
As $P^\infty$ blows up as $\zeta\to \pm\zeta_0$, we cannot expect it to be a good approximation to $S$ close to $\pm\zeta_0$.	Let us therefore take sufficiently small disks $U_\pm$ around $\pm\zeta_0$, and construct local parametrices $P$ in these disks, which have the same jump relations as $S$ on the parts of the jump contour inside $U_\pm$, and which match with the global parametrix $P^\infty$ on $\partial U_\pm$. To build the parametrices, we use a standard construction using the Airy function, see e.g.\ \cite{DeiftCourant, smallnormRH} {as general reference, and \cite{ClaeysGrava}
for a construction which is almost identical to the present one.} 

Define the functions $y_j$ in terms of the Airy function,
\[
    y_j=y_j(\lambda )=\omega^j\Ai(\omega^j\lambda ),\qquad j=0,1,2,
\]
with $\omega=e^{\frac{2\pi i}{3}}$, and define the functions
\begin{align*}
    &\Phi_{1}(\lambda ) := \sqrt{2\pi}
        \begin{pmatrix}
            y_0 & {-}y_2\\
            {-}iy_0' & iy_2'
        \end{pmatrix}, 
        \\
        &\Phi_2(\lambda ) := \sqrt{2\pi}\begin{pmatrix}
            -y_1 & {-}y_2\\
            iy_1' & iy_2'
        \end{pmatrix},         \\
        &\Phi_3(\lambda ) := \sqrt{2\pi}\begin{pmatrix}
            -y_1 & y_0\\
            iy_1' & -iy_0'
        \end{pmatrix}.
\end{align*}
The functions $\Phi_1$, $\Phi_2$, $\Phi_3$ are entire functions, and by the Airy function identity $y_0+y_1+y_2=0$, they satisfy the relations
{\begin{align}
&\Phi_{1}(\lambda ) = \Phi_{3}(\lambda )
  \begin{pmatrix} 0 & {-}1 \\ 1&1 \end{pmatrix},\\ 
&\Phi_1(\lambda )=\Phi_2(\lambda )\begin{pmatrix} 1 & 0 \\ 1&1\end{pmatrix} ,\\
&\Phi_2(\lambda )=\Phi_3(\lambda )\begin{pmatrix} 1 & {-}1 \\ 0&1\end{pmatrix}.
\end{align}
}
Moreover, if we define sectors  $S_1 := \{\lambda :0<\arg \lambda <\pi -\delta\}$, $S_2 := \{\lambda :\frac{\pi}{3}+\delta< \arg \lambda  < \frac{5\pi}{3}-\delta\}$, and $S_3 := \{\lambda :\pi+\delta<\arg \lambda  < 2\pi\}$, then we have
\begin{equation}\label{Psiasympt}
\Phi_j (\lambda ) = \frac{1}{\sqrt{2}}
\lambda ^{-\frac{\sigma_3}{4}} \begin{pmatrix} 1& i\\ i & 1\end{pmatrix} \le[ I + \frac{1}{48\lambda ^{3/2}}\begin{pmatrix}1&6i\\6i&-1\end{pmatrix}+{\mathcal O (\lambda ^{-3})}\ri] {{\rm e}^{\mp\frac{2}{3} \lambda^{\frac{3}{2}}\sigma_3}}
\end{equation}
{as  $\lambda \to\infty$ in the sector $S_j$, where we take the plus sign in the lower half plane and the minus sign in the upper half plane.
}

Let us now focus on the disk $U_-$. We define a conformal map $\lambda(\zeta)$ by the equation
\begin{gather}
\lambda(\zeta)^{\frac{3}{2}} = \frac{3}{2}{\rm e}^{\mp \frac{3\pi i}{2}}g(\zeta),
\end{gather}
where the minus sign is taken in the upper half plane and the plus sign in the lower half plane.
Then it is straightforward to verify that $\lambda$ maps $U_-$ to a neighbourhood of $\lambda =0$. 

We now define the local parametrix as
\begin{equation}\label{def:P}
P(\zeta) = A(\zeta) \Phi_j
\le( |s|^{\frac{2n+1}{3n}} \lambda(\zeta) \ri) {\rm e}^{i |s|^{\frac{2n+1}{2n}} g(\zeta)\sigma_3},
\end{equation}
where we take $j$ as follows: $j=1$ if {$\zeta$ is in the sector between $(-\zeta_0,+\infty)$ and $\Sigma_1$, $j=2$ for $\zeta$ in the sector between $\Sigma_1$ and $\Sigma_3$, and $j=3$ for $\zeta$ in the sector between $\Sigma_3$ and $(-\zeta_0,+\infty)$.}
If $A$ is analytic in $U_-$, then $P$ satisfies the same jump relations as $S$ inside $U_-$, and we additionally fix the matrix $A(\zeta)$ by the requirement that
\begin{gather}
P(\zeta) P^\infty(\zeta)^{-1} = I + \mathcal{O}\le( |s|^{-\frac{2n+1}{2n}}\ri)
\end{gather}
uniformly for $\zeta\in\partial U_-$,
as $s\to -\infty$. Therefore, we take
\begin{gather}
A(\zeta)  
= P^\infty(\zeta)\frac{1}{\sqrt{2}}\begin{pmatrix} 1&-i\\ - i&1 \end{pmatrix} \le(|s|^{\frac{2n+1}{3n}}\lambda(\zeta)\ri)^{\frac{\sigma_3}{4}}.
\end{gather}

We can make the matching condition more precise and have, using \eqref{def:Pinf} and \eqref{Psiasympt},
\begin{multline}\label{matching2}
P(\zeta) P^\infty(\zeta)^{-1} = I +  \frac{1}{96|s|^{\frac{2n+1}{2n}}\lambda(\zeta)^{3/2}}{\begin{pmatrix}7 \gamma(\zeta)^{2}-5\gamma(\zeta)^{-2}&7i \gamma(\zeta)^{2}+5i\gamma(\zeta)^{-2}\\
7i \gamma(\zeta)^{2}+5i\gamma(\zeta)^{-2}&-7 \gamma(\zeta)^{2}+5\gamma(\zeta)^{-2}
\end{pmatrix}}\\+\mathcal{O}\le( |s|^{- \frac{2n+1}{n}}\ri)
\end{multline}
uniformly for $\zeta\in\partial U_-$,
as $s\to -\infty$.

The parametrix near $+\zeta_0$ can be constructed similarly. Namely, we set 
$P(\zeta)=\sigma_1P(-\zeta)\sigma_1$ for $\zeta\in U_+$.
We then have
\begin{multline}\label{matching3}
P(\zeta) P^\infty(\zeta)^{-1} = I +  \frac{1}{96|s|^{\frac{2n+1}{2n}}\lambda(-\zeta)^{3/2}} { \begin{pmatrix}5\gamma(\zeta)^{2} -7 \gamma(\zeta)^{-2}&5i\gamma(\zeta)^{2}+7i \gamma(\zeta)^{-2}\\
5i\gamma(\zeta)^{2}+7i \gamma(\zeta)^{-2}&-5\gamma(\zeta)^{2} +7 \gamma(\zeta)^{-2} \end{pmatrix}} \\+\mathcal{O}\le( |s|^{- \frac{2n+1}{n}}\ri)
\end{multline}
uniformly for $\zeta\in\partial U_+$,
as $s\to -\infty$.

\subsection{Small norm RH problem}
Finally, we define $R$ as
\begin{equation}\label{defR-}
R(\zeta)=\begin{cases}
S(\zeta)P^\infty(\zeta)^{-1},&\zeta\in\mathbb C\setminus(\overline{U_+\cup U_-})\\
S(\zeta)P(\zeta)^{-1},&\zeta\in U_\pm.
\end{cases}
\end{equation}
where $P$ {stands for} both local parametrices at the endpoints $\pm \zeta_0$.

The function $R$ solves the RH problem:
\subsubsection*{RH problem for $R$}
\begin{itemize}
\item[(a)]  $R$ is analytic in $\mathbb C\setminus\Sigma_R$, where $\Sigma_R$ is depicted in Figure \ref{Fig5},
\item [(b)] $R$ has the jump relations
\begin{gather}
R_+(\zeta)=R_-(\zeta) J_R(\zeta),\qquad\zeta\in\Sigma_R,
\end{gather}
where $J_R$ takes the form
{\begin{equation}
J_R(\zeta)=\begin{cases} I + \mathcal O \le({\rm e}^{-C |s|^{\frac{2n+1}{2n}}}/(\zeta^2+1) \ri),  &\zeta \in \Sigma_R \setminus (\partial U_+ \cup \partial U_- ) \\
I + J^{(1)}(\zeta)|s|^{-\frac{2n+1}{2n}}+J^{(2)}(\zeta)|s|^{-\frac{2n+1}{n}}+\mathcal{O}\le( |s|^{-  \frac{6n+3}{2n} }\ri), &\zeta \in \partial U_\pm,\end{cases}
\end{equation}
for some $C>0$, uniformly in $\zeta$ as $s\to -\infty$,}
with \begin{equation}
J^{(1)}(\zeta)=
\begin{cases}
\displaystyle  \frac{1}{96\,\lambda(-\zeta)^{3/2}} {\begin{pmatrix}5\gamma(\zeta)^{2} -7 \gamma(\zeta)^{-2}&5i\gamma(\zeta)^{2}+7i \gamma(\zeta)^{-2}\\
5i\gamma(\zeta)^{2}+7i \gamma(\zeta)^{-2}&-5\gamma(\zeta)^{2} +7 \gamma(\zeta)^{-2} \end{pmatrix}},&\mbox{on $\partial U_+$},\\
\displaystyle  \frac{1}{96\, \lambda(\zeta)^{3/2}}{\begin{pmatrix}7 \gamma(\zeta)^{2}-5\gamma(\zeta)^{-2}&7i \gamma(\zeta)^{2}+5i\gamma(\zeta)^{-2}\\
7i \gamma(\zeta)^{2}+5i\gamma(\zeta)^{-2}&-7 \gamma(\zeta)^{2}+5\gamma(\zeta)^{-2}
\end{pmatrix}},&\mbox{on $\partial U_-$,}
\end{cases}\end{equation}
by \eqref{matching2} and \eqref{matching3}, and for some matrix $J^{(2)}$ independent of $s$,
\item[(c)] $R(\zeta)=I+R_1\zeta^{-1}+\mathcal O(\zeta^{-2})$ as $\zeta\to\infty$.
\end{itemize}

	\begin{figure}[h!]
\centering
\scalebox{.8}{
\begin{tikzpicture}[>=stealth]
\path (0,0) coordinate (O);

\draw[thick, blue, ->- = .7] (2.5,0) -- (6,2);
\draw[thick, blue, ->- = .5] (-6,2) -- (-2.5,0);

\draw[thick, blue, ->- = .7] (2.5,0) -- (6,-2);
\draw[thick, blue, ->- = .5] (-6,-2) -- (-2.5,0);

\draw[fill, white] (-2.5,0) circle (.7cm);
\draw[->, blue, thick] (-3.2,0) arc (180:0:.7);
\draw[blue, thick] (-1.8,0) arc (0:-180:.7);
%\draw[thick, blue] (-2.5,0) circle (.7cm);
\draw[fill, blue] (-2.5,0) circle [radius=0.05];
\node[above] at (-2.5,0.1) {${\color{blue} \large - \zeta_0}$};

\draw[fill, white] (2.5,0) circle (.7cm);
\draw[->, blue, thick] (1.8,0) arc (180:0:.7);
\draw[blue, thick] (3.2,0) arc (0:-180:.7);
%\draw[thick, blue] (2.5,0) circle (.7cm);
\draw[fill, blue] (2.5,0) circle [radius=0.05];
\node[above] at (2.5,0.1) {${\color{blue} \large \zeta_0}$};

\node [above] at (-5,2) {{\color{blue} \large $ \Sigma_1$}};
\node [above] at (5,2) {{\color{blue} \large $ \Sigma_2$}};

\node [below] at (-5,-2) {{\color{blue} \large $ \Sigma_3$}};
\node [below] at (5,-2) {{\color{blue} \large $ \Sigma_4$}};

\draw[->] (-6,0) -- (6,0) coordinate (x axis);
\draw[->] (0,-3) -- (0,3) coordinate (y axis);

\end{tikzpicture}}
\caption{Contours for the remainder RH problem $R$. \label{Fig5}}
\end{figure}
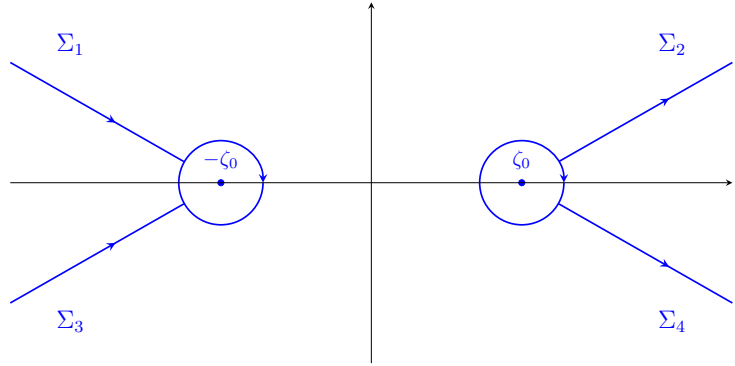

Using the general theory of RH problems normalized at infinity and with jump matrices close to the identity, see e.g. {\cite{DeiftCourant, smallnormRH}}, we can conclude that as $s\to -\infty$, $R$ is close to the identity and has asymptotics of the form
\begin{equation}
R(\zeta)=I+R^{(1)}(\zeta)|s|^{-\frac{2n+1}{2n}}+R^{(2)}(\zeta)|s|^{-\frac{2n+1}{n}}+\mathcal{O}\le( |s|^{- \frac{6n+3}{2n} }\ri), 
\end{equation}
for some matrices $R^{(1)}(\zeta)$ and $R^{(2)}(\zeta)$ which are independent of $s$ and which can be computed via a recursive procedure. We explain how this works for the matrix $R^{(1)}$.
Because of the RH conditions for $R$, we know that $R^{(1)}(\zeta)\to 0$ as $\zeta\to\infty$, and expanding the jump relation for $R$ in $s$, we obtain
\[R_+^{(1)}(\zeta)-R_-^{(1)}(\zeta)=J^{(1)}(\zeta),\qquad \zeta\in\partial U_\pm,\]
whereas $R^{(1)}$ is analytic everywhere else in the complex plane. This allows us to deduce that $R^{(1)}$ is given by
\[
R^{(1)}(\zeta)=\int_{\partial U_+\cup\partial U_-}\frac{J^{(1)}(\xi)}{\xi-\zeta} \frac{\d \xi}{2\pi i}.
\]
This can be computed explicitly using residue calculus.
As $\zeta\to \pm\zeta_0$, there are matrices $A_\pm$, $B_\pm$ independent of $\zeta$ such that we have
\[J^{(1)}(\zeta)=A_\pm(\zeta\mp\zeta_0)^{-2}+B_\pm(\zeta\mp\zeta_0)^{-1}+\mathcal O(1).\]
The entries of $A_\pm$ and $B_\pm$ can be computed explicitly by carefully expanding $J^{(1)}$ as $\zeta\to \pm\zeta_0$.
We have in particular that 
\begin{equation}\label{Bvalues}
B_{\pm,11} = i \lim_{s \rightarrow -\infty}\left(\frac{1}{48 c_1 \zeta_0^2} + \frac{5}{144} \frac{c_2}{c_1^2}\right)=\begin{cases}
 -\frac{i}{32}&\mbox{ if }n=1,\\
 -\frac{i}{8}&\mbox{ if }n>1,
\end{cases}\qquad B_{-,12} = - B_{+,12},\end{equation} 
by \eqref{cvalues} after a straightforward calculation.
It follows that, for $\zeta$ outside $U_\pm$,
$$
R^{(1)}(\zeta)=A_-(\zeta+\zeta_0)^{-2}+B_-(\zeta+\zeta_0)^{-1}+A_+(\zeta-\zeta_0)^{-2}
+B_+(\zeta-\zeta_0)^{-1}.
$$
Expanding this as $\zeta\to\infty$, we get
$$
R^{(1)}(\zeta)=\frac{B_++B_-}{\zeta}+\mathcal O(\zeta^{-2}),
$$
and this implies finally that
\begin{equation}\label{R112}
R_{1}=\big( B_{+}+ B_{-}\big)|s|^{-\frac{2n+1}{2n}}+
M|s|^{-\frac{2n+1}{n}}
+\mathcal{O}\le( |s|^{-  \frac{6n+3}{2n}} \ri),
\end{equation}
as $s\to -\infty$,
where the matrix $M$ can be computed in terms of the jump matrix $R^{(2)}$. The components of $M$ are however unimportant for our concerns.

\subsection{Asymptotics for $q\big((-1)^{n+1}s;1\big)$ and for $\frac{\d}{\d s} \log F(s;1)$ as $s \rightarrow -\infty$.}	

We are now ready to compute the asymptotics for $q\left((-1)^{n+1}s ;1\right)$ as $s\to -\infty$, as stated in \eqref{eq:qas-} and \eqref{eq:qas-2}. Using first \eqref{qdef} and \eqref{S1Psi1}, and then the fact that $S(\zeta)=R(\zeta)P^{\infty}(\zeta)$ for large $\zeta$ together with \eqref{Pinf:as} and the large $\zeta$ asymptotics for $R$, we obtain 
\begin{gather}
q\big((-1)^{n+1}s;1\big)= 2i\Psi_{1,12}(s) =  2i|s|^{\frac{1}{2n}}S_{1,12}(s)=  \zeta_0(s)|s|^{\frac{1}{2n}} + 2i |s|^{\frac{1}{2n}} R_{1,12}.
\end{gather}
By \eqref{zetasolut}, \eqref{Bvalues} and \eqref{R112}, we get
\begin{align}q\big((-1)^{n+1}s;1\big)&= \sum_{i = 0}^{2n+1} \theta_i |s|^{\frac{1}{2n}-\frac{i}{n}}+2i|s|^{-1}\big( B_{+,12}+ B_{-,12}\big)+2iM_{12}|s|^{-2-\frac{1}{2n}}+\mathcal O\left(|s|^{ -2-\frac{3}{2n} }\right)\nonumber\\
&\label{qasformula}= \sum_{i = 0}^{2n} \theta_i |s|^{\frac{1}{2n}-\frac{i}{n}}+(2iM_{12} + \theta_{2n + 1})|s|^{-2-\frac{1}{2n}}+\mathcal O\left(|s|^{-2-\frac{3}{2n}}\right)
\end{align}
as $s\to -\infty$,
where $\theta_0={2n \choose n}^{-\frac{1}{2n}}$ and $\theta_1, \theta_2\ldots$ are as in \eqref{def:thetaj}.
This proves \eqref{eq:qas-} and the improved version \eqref{eq:qas-2}, up to the value of $M_{12}$.

\medskip

To obtain large gap asymptotics, we follow two different routes, and the compatibility between them will allow us to evaluate $M_{12}$. First, we use \eqref{qasformula} to compute 

\begin{equation}q^2\big((-1)^{n+1}s;1\big)=  \sum_{j = 0}^{2n+1} \theta_j^{[2]} |s|^{\frac{1-j}{n}}+(4i\theta_0M_{12} + 2\theta_0\theta_{2n + 1})  |s|^{-2}+{\mathcal O\left(|s|^{-2-\frac{1}{n}}\right)}\end{equation}
as $s\to -\infty$,
where 
\[\theta_j^{[2]}=\sum_{k=0}^j\theta_k\theta_{j-k},\quad j=0,\ldots, n.\] In addition, $\theta_{2n+1}^{[2]}=0$ by Lemma \ref{expansionzeta0}.
Substituting this into \eqref{eq:TWidentity} and integrating, we obtain
\begin{equation}\log F(s;1)=-\sum_{\substack{j=0 \\ j \neq n+1}}^{2n}\frac{n^2}{(n+1-j)(2n+1-j)}\theta_j^{[2]}|s|^{\frac{2n-j+1}{n}} + (4i\theta_0M_{12} + 2\theta_0\theta_{2n + 1})\log|s|+\log C+o(1)\label{largegap1}\end{equation}
as $s\to -\infty$, for some constant $C>0$.

Secondly,
we start from \eqref{eqlogF}, which by \eqref{normalization} and \eqref{S1Psi1} implies
\[\frac{\d}{\d s}\log F(s;1)=2i\Psi_{1,11}=2i|s|^{\frac{1}{2n}}S_{1,11}+2|s|^{\frac{1}{2n}}g_1(s).\]
By \eqref{Pinf:as} and \eqref{defR-}, we get
\[\frac{\d}{\d s}\log F(s;1)=2i|s|^{\frac{1}{2n}}R_{1,11}+2|s|^{\frac{1}{2n}}g_1(s)= \frac{2i\big(B_{+,11}+B_{-,11}\big)}{|s|}+ 2|s|^{\frac{1}{2n}}g_1(s) + \mathcal O\left(|s|^{-\frac{4n+1}{2n}}\right)\]
as $s\to -\infty$.
By \eqref{g_1expansion} and \eqref{Bvalues}, we obtain after integration that there is a constant $C>0$ such that
\[\log F(s;1)=
-\sum_{\substack{j=0 \\ j \neq n+1}}^{2n}\frac{n^2}{(n+1-j)(2n+1-j)}\theta_j^{[2]}|s|^{\frac{2n-j+1}{n}}-\log|s|\ \times \ \begin{cases}\frac{1}{8}&\mbox{if }n=1,\\
\frac{1}{2}&\mbox{if }n>1,\end{cases} +  \log C + o(1).\]
This completes the proof of Theorem \ref{thm:largegap} and Theorem \ref{thm:largegaptau0}.
Comparing this with \eqref{largegap1}, we find
\[4i\theta_0M_{12} + 2\theta_0\theta_{2n + 1}= \begin{cases}-\frac{1}{8}&\mbox{if }n=1,\\
-\frac{1}{2}&\mbox{if }n>1.\end{cases}\] Substituting this in \eqref{qasformula}, we complete the proof of \eqref{eq:qas-2}.

\paragraph{Acknowledgements.}\hfill{}\\
M.C. and M.G. acknowledge the support of the H2020-MSCA-RISE-2017 PROJECT No. 778010 IPaDEGAN. T.C. was supported by the Fonds de la Recherche Scientifique-FNRS under EOS project O013018F.
Part of the work of M.C. and M.G. was done during their visits at the Institut de Recherche en Math\'ematique et Physique (UCLouvain) in May 2018. We acknowledge UCLouvain for excellent working conditions and generous support.
Part of the work was as well conducted during M.C. and M.G.'s visit at Concordia University in July 2018. We thank Concordia University for hosting us and providing numerous resources and facilities for performing our work.
M.C. wants to thank Marco Bertola for many interesting discussions on point processes and integrable operators and, in particular, for teaching him the techniques used in the appendix below to prove that the kernels analyzed in this paper define determinantal point processes.

\appendix
\renewcommand{\theequation}{\Alph{section}.\arabic{equation}}

\section{The generalized Airy kernel point processes}
\label{appendix:genAiry}
 The generalized Airy kernel studied in \cite{FermMM} is given by
\be\label{kernel:genAiryapp}
	K_{\Ai_{2n + 1}}(x,y) := \int_0^\infty \Ai_{2n + 1}(x + u) \Ai_{2n + 1}(y + u)\  \d u,
\ee
where 
\be\label{def:genAiryapp}
\Ai_{2n + 1}(x) = (-1)^n \int_{\gamma_R} \frac{\d \mu}{2 \pi i}\, {\rm e}^{(-1)^{n+1}\frac{\mu^{2n + 1}}{2n +1} - x \mu}=  (-1)^{n+1}\int_{\gamma_L} \frac{\d \lambda}{2 \pi i}\, {\rm e}^{(-1)^n\frac{\lambda^{2n + 1}}{2n + 1} + x \lambda}
\ee
is a real solution of the equation
\be
	\frac{\d^{2n}}{\d x^{2n}} f(x) = (-1)^{n+1} x f (x).
\ee

\begin{proposition}
The kernel \eqref{kernel:genAiryapp} is equal to \eqref{def:kernel} with $p_{2n+1}(\lambda)=\lambda^{2n+1}$.
\end{proposition}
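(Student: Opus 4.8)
The plan is to insert the two contour-integral representations of $\Ai_{2n+1}$ recorded in \eqref{def:genAiryapp} into the definition \eqref{kernel:genAiryapp}, to carry out the $u$-integral explicitly, and to recognise the outcome as the kernel \eqref{def:kernel}. Concretely, I would write the first Airy factor through the $\gamma_R$-contour and the second through the $\gamma_L$-contour,
\begin{align*}
\Ai_{2n+1}(x+u)&=(-1)^{n}\int_{\gamma_R}\frac{\d\mu}{2\pi i}\,{\rm e}^{(-1)^{n+1}\frac{\mu^{2n+1}}{2n+1}-(x+u)\mu},\\
\Ai_{2n+1}(y+u)&=(-1)^{n+1}\int_{\gamma_L}\frac{\d\lambda}{2\pi i}\,{\rm e}^{(-1)^{n}\frac{\lambda^{2n+1}}{2n+1}+(y+u)\lambda},
\end{align*}
so that $K_{\Ai_{2n+1}}(x,y)$ becomes a triple integral over $(\mu,\lambda,u)\in\gamma_R\times\gamma_L\times[0,\infty)$ in which all of the $u$-dependence is concentrated in the single factor ${\rm e}^{u(\lambda-\mu)}$.

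The second step is to justify the interchange of the order of integration and then integrate in $u$. The inner exponentials in \eqref{def:genAiryapp} decay super-exponentially, like ${\rm e}^{-c|\lambda|^{2n+1}}$ along $\gamma_L$ and like ${\rm e}^{-c|\mu|^{2n+1}}$ along $\gamma_R$ (for some $c>0$), which is precisely how these contours are chosen; see Figure \ref{Fig1}. Moreover, since $\gamma_L$ and $\gamma_R$ lie in the open half-planes $\{\Re\zeta<0\}$ and $\{\Re\zeta>0\}$ and satisfy $\Re\lambda\to-\infty$, $\Re\mu\to+\infty$ at their ends, there is an $\epsilon>0$ with $\Re(\lambda-\mu)\le-\epsilon$ for all $\lambda\in\gamma_L$ and $\mu\in\gamma_R$. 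Hence the modulus of the triple integrand is bounded by a constant times ${\rm e}^{-c|\lambda|^{2n+1}+|y|\,|\lambda|}\,{\rm e}^{-c|\mu|^{2n+1}+|x|\,|\mu|}\,{\rm e}^{-\epsilon u}$, which is integrable over $\gamma_R\times\gamma_L\times[0,\infty)$; by Fubini's theorem one may integrate first in $u$, using
\[
\int_{0}^{\infty}{\rm e}^{u(\lambda-\mu)}\,\d u=\frac{1}{\mu-\lambda}.
\]

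It then remains only to collect constants and exponents. The product of the two prefactors, $(-1)^{n}(-1)^{n+1}=-1$, cancels the sign in $\tfrac{1}{\mu-\lambda}=-\tfrac{1}{\lambda-\mu}$; the $x,y$-dependent part of the exponent is $-x\mu+y\lambda$, which agrees with \eqref{def:kernel} verbatim; and the remaining part combines as
\begin{align*}
(-1)^{n}\frac{\lambda^{2n+1}}{2n+1}+(-1)^{n+1}\frac{\mu^{2n+1}}{2n+1}
&=(-1)^{n+1}\left(\frac{\mu^{2n+1}}{2n+1}-\frac{\lambda^{2n+1}}{2n+1}\right)\\
&=(-1)^{n+1}\big(p_{2n+1}(\mu)-p_{2n+1}(\lambda)\big),
\end{align*}
with $p_{2n+1}(z)=\frac{z^{2n+1}}{2n+1}$, i.e.\ the monomial case $\tau_1=\dots=\tau_{n-1}=0$ of \eqref{def p}. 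Putting everything together yields exactly
\[
K_{\Ai_{2n+1}}(x,y)=\frac{1}{(2\pi i)^{2}}\int_{\gamma_R}\d\mu\int_{\gamma_L}\d\lambda\,\frac{{\rm e}^{(-1)^{n+1}(p_{2n+1}(\mu)-p_{2n+1}(\lambda))-x\mu+y\lambda}}{\lambda-\mu},
\]
which is \eqref{def:kernel}, as claimed.

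The only genuinely delicate point is the justification of the interchange of the $u$-integral with the two contour integrals; once one observes that $\gamma_L$ and $\gamma_R$ sit in opposite open half-planes, the required absolute integrability is immediate from the super-exponential decay along the contours, so this is not a serious obstacle. Everything else is routine sign and orientation bookkeeping --- the one thing to watch is that the orientations of $\gamma_L$ and $\gamma_R$ implicit in \eqref{def:genAiryapp} coincide with those drawn in Figure \ref{Fig1}, so that no spurious sign creeps into the final identification.
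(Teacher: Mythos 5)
Your proposal is correct and follows essentially the same route as the paper: substitute the $\gamma_R$- and $\gamma_L$-representations of the two Airy factors from \eqref{def:genAiryapp}, integrate out $u$ using $\int_0^\infty {\rm e}^{u(\lambda-\mu)}\,\d u = \frac{1}{\mu-\lambda}$ (justified because $\Re\lambda<\Re\mu$), and collect signs. The paper's argument is just a terser version of yours, omitting the explicit Fubini estimate you supply.
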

\proof 
By \eqref{def:genAiry}, we have
	\bea
		K_{\Ai_{2n + 1}}(x,y) &=& \int_0^\infty \Ai_{2n + 1}(x + u) \Ai_{2n + 1}(y + u)\ \d u \nonumber \\
		&=& - \frac{1}{(2 \pi i)^2}\int_0^\infty \d u \int_{\gamma_R} \d \mu \int_{\gamma_L} \d\lambda\, {\rm e}^{(-1)^{n+1}\frac{\mu^{2n + 1} - \lambda^{2n + 1}}{2n + 1} - x \mu + y \lambda + u(\lambda - \mu) } \nonumber\\
						&=& \frac{1}{(2 \pi i)^2} \int_{\gamma_R} \d\mu \int_{\gamma_L} \d\lambda \frac{{\rm e}^{(-1)^{n+1}\frac{\mu^{2n + 1} - \lambda^{2n + 1}}{2n + 1} - x \mu + y \lambda}}{\lambda - \mu},
	\eea
	since $\Re\lambda<\Re\mu$ for $\mu\in\gamma_R$ and $\lambda\in\gamma_L$.
\QED

In \cite{FermMM} the authors obtained $K_{\Ai_{2n + 1}}$ as the $N \rightarrow \infty$ scaling limit of the kernels associated to certain point processes describing the momenta of $N$ non--interacting fermions trapped in a non--harmonic potential. In what follows we will prove, without references to any physical model, that indeed, for any $K$ as in \eqref{def:kernel} and $p_{2n+1}$ as in \eqref{def p} , the determinants 
\be\label{corrfcts}
	\rho_\ell(x_1,\ldots,x_\ell) := \det \Big[K(x_i,x_j) \Big]_{i,j = 1}^\ell, \; \ell \geq 1
\ee  
are the correlation functions associated to a well defined point process. The proof {uses} Theorem 3 in \cite{Sosh}, stating that {a Hermitian locally trace class} operator $\mathcal K$ uniquely defines a determinantal point process if and only if it is positive definite and bounded from above by the identity {operator}.

\begin{theorem}
 	For any $n \geq 1$, there exists a unique determinantal point process whose correlation functions are given by \eqref{corrfcts}.
\end{theorem}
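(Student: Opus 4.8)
The plan is to verify the hypotheses of Soshnikov's criterion recalled above for the operator $\mathcal K$ on $L^2(\mathbb R)$ with kernel $K$ given by \eqref{def:kernel}: one needs $\mathcal K$ to be Hermitian, locally trace class, and to satisfy $0\leq\mathcal K\leq\Id$. The crucial first step is a factorization of $K$ extending the computation in Appendix \ref{appendix:genAiry} to an arbitrary odd polynomial $p_{2n+1}$ with real coefficients. Since $K$ does not depend on the precise shape of $\gamma_L,\gamma_R$ (only on their asymptotic directions, by analyticity of the integrand and the fact that $\gamma_L$ and $\gamma_R$ do not meet), we may assume $\gamma_R=\overline{\gamma_R}$ and $\gamma_L=-\gamma_R$, and define the generalized Airy function
\[
\phi(x):=\frac{1}{2\pi i}\int_{\gamma_R}{\rm e}^{(-1)^{n+1}p_{2n+1}(\mu)-x\mu}\,\d\mu .
\]
Because $\operatorname{Re}\bigl[(-1)^{n+1}p_{2n+1}(\mu)\bigr]\to-\infty$ along $\gamma_R$, this is an entire function of $x$; because $p_{2n+1}$ is odd with real coefficients and $\gamma_R$ is conjugation-symmetric, $\phi$ is real on $\mathbb R$, decays rapidly as $x\to+\infty$, and is oscillatory and polynomially bounded as $x\to-\infty$, so that $\phi$ is a tempered distribution lying in $L^2(a,+\infty)$ for every $a$, but $\phi\notin L^1(\mathbb R)\cup L^2(\mathbb R)$. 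Using $\frac{1}{\lambda-\mu}=-\int_0^{\infty}{\rm e}^{u(\lambda-\mu)}\,\d u$ (valid since $\operatorname{Re}\lambda<\operatorname{Re}\mu$ on $\gamma_L\times\gamma_R$), together with the substitution $\lambda\mapsto-\mu$ and the oddness of $p_{2n+1}$ (which give $\frac{1}{2\pi i}\int_{\gamma_L}{\rm e}^{-(-1)^{n+1}p_{2n+1}(\lambda)+x\lambda}\,\d\lambda=-\phi(x)$), one obtains, exactly as in Appendix \ref{appendix:genAiry}, the representation $K(x,y)=\int_0^{\infty}\phi(x+u)\,\phi(y+u)\,\d u$.

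From this representation the easy parts follow at once. The kernel $K$ is real and symmetric, so $\mathcal K$ is Hermitian. For any bounded interval $[a,b]$ the operator $\mathcal A_{[a,b]}\colon L^2(a,b)\to L^2(0,\infty)$ given by $(\mathcal A_{[a,b]}f)(u)=\int_a^b\phi(x+u)f(x)\,\d x$ is Hilbert--Schmidt, since $\int_a^b\!\int_0^{\infty}|\phi(x+u)|^2\,\d u\,\d x\leq(b-a)\int_a^{\infty}|\phi|^2<\infty$; as $\chi_{[a,b]}\mathcal K\chi_{[a,b]}=\mathcal A_{[a,b]}^*\mathcal A_{[a,b]}$, the operator $\mathcal K$ is locally trace class. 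Likewise, letting $\mathcal A\colon L^2(\mathbb R)\to L^2(0,\infty)$, $(\mathcal Af)(u)=\int_{\mathbb R}\phi(x+u)f(x)\,\d x$, one has $\mathcal K=\mathcal A^*\mathcal A$, hence $\langle\mathcal Kf,f\rangle=\|\mathcal Af\|^2\geq 0$ and $\mathcal K\geq 0$.

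The one genuinely non-trivial point is the bound $\mathcal K\leq\Id$. I would deduce it from the fact that the generalized Airy transform $\mathcal U\colon L^2(\mathbb R)\to L^2(\mathbb R)$, $(\mathcal Uf)(u)=\int_{\mathbb R}\phi(x+u)f(x)\,\d x$, is unitary: writing $P_+$ for the orthogonal projection of $L^2(\mathbb R)$ onto $L^2(0,\infty)$, the representation of $K$ gives $\mathcal K=\mathcal U^*P_+\mathcal U$, whence $0\leq\mathcal K\leq\mathcal U^*\mathcal U=\Id$. Up to reflecting the argument, $\mathcal U$ is convolution by $\phi$, so $\mathcal U$ is unitary if and only if the tempered-distributional Fourier transform $\widehat\phi$ has modulus $1$ everywhere. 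To compute $\widehat\phi$, one integration by parts in $\mu$ in the integral for $\phi$ shows that $\phi$ satisfies the order $2n$ generalized Airy equation $\phi^{(2n)}+\sum_{j=1}^{n-1}\tau_j\phi^{(2j)}=(-1)^{n+1}x\phi$; Fourier transforming turns this into the first-order linear equation $\widehat\phi\,'(\xi)=i\,P(\xi)\,\widehat\phi(\xi)$ with $P$ a real polynomial, so $\widehat\phi(\xi)=C\,{\rm e}^{iq(\xi)}$ with $q(\xi)=\int_0^{\xi}P(t)\,\d t$ real, a function of constant modulus $|C|$. One then pins down $|C|=1$ by recognizing $\phi$, via a deformation of $\gamma_R$ toward the imaginary axis (the contour can be taken to coincide with $i\mathbb R$ on a long segment and to veer off into the sectors where the integrand decays at its two ends, after which one passes to the limit), as the oscillatory integral $\pm\frac{1}{2\pi}\int_{\mathbb R}{\rm e}^{iq(\xi)+ix\xi}\,\d\xi$. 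This last step --- establishing $|\widehat\phi|\equiv 1$, equivalently the unitarity of the generalized Airy transform, equivalently $\int_{\mathbb R}\phi(x+t)\phi(y+t)\,\d t=\delta(x-y)$, rigorously despite $\phi\notin L^1(\mathbb R)\cup L^2(\mathbb R)$ and the merely oscillatory behaviour of the integrand on $i\mathbb R$ --- is the main obstacle; everything else is bookkeeping.

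Once $\mathcal K$ is known to be Hermitian, locally trace class and to satisfy $0\leq\mathcal K\leq\Id$, Theorem 3 of \cite{Sosh} provides a unique determinantal point process whose $\ell$-point correlation functions are $\rho_\ell(x_1,\dots,x_\ell)=\det[K(x_i,x_j)]_{i,j=1}^{\ell}$ for all $\ell\geq 1$, which is the assertion. In particular this shows, as claimed in Section \ref{sec:2}, that the gap probability $F(s;\varrho)$ defined in \eqref{def:Fredholm} is a genuine distribution function for every $0<\varrho\leq 1$.
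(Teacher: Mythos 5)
Your proposal is correct and follows essentially the same strategy as the paper: factor the kernel as $K(x,y)=\int_0^\infty\varphi(x+u)\varphi(y+u)\,\d u$ with $\varphi$ a generalized Airy function, then verify Soshnikov's three conditions, the only substantive one being $\mathcal K\le\Id$ via the unitarity of the generalized Airy transform. Two remarks on the differences. For positivity, you use the factorization $\mathcal K=\mathcal A^*\mathcal A$ while the paper applies the Andr\'eief identity to show each determinant $\det[K(x_i,x_j)]$ is a square; both are fine, and yours is arguably more economical. More importantly, the step you single out as ``the main obstacle'' --- proving $|\widehat\varphi|\equiv1$ rigorously despite $\varphi\notin L^1\cup L^2$ --- is precisely where the paper's organization pays off: instead of starting from the contour integral and computing its Fourier transform (via the ODE plus an identification of the constant), the paper simply \emph{defines} the transform as $\Phi=\mathfrak F^{-1}\mathcal M\,\mathfrak F^{-1}$ with $\mathcal M$ multiplication by the unimodular function ${\rm e}^{i(\lambda^{2n+1}/(2n+1)+\sum_j(-1)^{n+j}\tau_j\lambda^{2j+1}/(2j+1))}$, so unitarity is immediate, and the kernel of $\Phi$ is $\varphi(x+u)$ with $\varphi$ given by the real-line oscillatory integral \eqref{def varphi}. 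The content of your proposed contour deformation of $\gamma_R$ toward $i\mathbb R$ is exactly the identification of your $\phi$ with this $\varphi$, i.e.\ with $\mathfrak F^{-1}$ of a unimodular multiplier; once you phrase it that way, the obstacle dissolves and no separate argument for $|C|=1$ is needed. (The paper additionally observes $\Phi=\Phi^{-1}$, so that $\mathcal K=\Phi\mathcal P\Phi$ is an orthogonal projection, but your weaker conclusion $\mathcal K=\mathcal U^*P_+\mathcal U\le\mathcal U^*\mathcal U=\Id$ already suffices.)
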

\proof

For future use, we start {by} rewriting the kernel $K$ as 
\be\label{kernelrewritten}
	K(x,y) = \int_{0}^\infty \varphi (x + u) \varphi(y + u) \d u,
\ee
where 
\be\label{def varphi}
	\varphi(x) := \frac{1}{2 \pi} \int_{-\infty}^{+ \infty} {\rm e}^{i \left(\frac{\lambda^{2n + 1}}{2n + 1} +\sum_{ j = 1}^{n-1} \frac{(-1)^{n + j}\tau_j}{2j + 1} \lambda^{2j + 1} + x\lambda \right)}\d \lambda.
\ee

From \eqref{kernelrewritten} we immediately deduce that the  integral operator $\mathcal K$ with kernel $K$ is Hermitian (since $\varphi$ is real) and locally of trace class (because of continuity). In order to prove total positivity, we use the { Andr\'eief identity} 

\begin{gather}
\det \le[\int_{[0,\infty)} \phi_i(x) \psi_j(x)\, \d x  \ri]_{i,j=1}^\ell = \frac{1}{\ell!} \int_{[0,\infty)^\ell} \det\big[ \phi_i(x_j)\big]_{i,j=1}^\ell \det\big[\psi_i(x_j)\big]_{i,j=1}^\ell \, \d x_1\cdots \d x_\ell
\end{gather}

giving, in our case,
\begin{gather}
\det \Big[ K(x_i, x_j) \Big]_{i,j=1}^\ell = \det \le[  \int_0^\infty \varphi(x_i + u) \varphi(x_j + u)\  \d u \ri]_{i,j =1}^\ell \nonumber \\
= \frac{1}{\ell!} \int_{(0,+\infty)^\ell} \det\big[ \varphi(x_i + u_s)\big]_{i,s=1}^\ell \det\big[\varphi(x_j + u_t)\big]_{j,t=1}^\ell \, \d u_1\cdots d u_\ell \nonumber \\
=  \frac{1}{\ell!} \int_{(0,+\infty)^\ell} \le( \det\big[ \varphi(x_i + u_s)\big]_{i,s=1}^\ell \ri)^2  \, \d u_1 \cdots d u_\ell  \geq 0.
\end{gather}

Now, in order to prove that $\mathcal K$ is bounded from above by the identity, we introduce the associated (generalized) Airy transform as
\be\label{Airytransform}
	\Phi = \mathfrak F^{-1} \mathcal M \mathfrak F^{-1},
\ee
where $\mathfrak F$ is the standard Fourier transform such that
$$
	(\mathfrak F g)(x) = \frac{1}{\sqrt{2 \pi}} \int_{-\infty}^\infty g(\lambda){\rm e}^{-i x \lambda} \d t
$$
and $\mathcal M$ is {the multiplication operator} by the function ${\rm e}^{i\Big(\frac{\lambda^{2n + 1}}{2n + 1} + \sum_{ j = 1}^{n-1} \frac{(-1)^{n + j}\tau_j}{2j + 1} \lambda^{2j + 1}\Big)}$. This is a straightforward {generalization of what was }done in \cite{BW} for the Airy case $n = 1$. Note that, in particular, 
$$
	(\Phi f)(x) = \int_{-\infty}^\infty \varphi(x + u)f(u) \d u.
$$
 Using the equation above in combination with the definition \eqref{kernelrewritten} of $K$ we obtain
\be
	\mathcal K = \Phi \mathcal P \Phi,
\ee 
 where $\mathcal P$ is the multiplication operator for the characteristic function $\chi_{[0,\infty)}$. We will use the equation above to prove that $\mathcal K$ is a projection operator, i.e. 
$\mathcal K^2 = \mathcal K$, which implies that   $\mathcal  K \leq 1 $. Indeed, denote with $\mathcal I$ the inversion operator such that
$$(\mathcal{I} g)(x) = g(-x).$$ 
Since
$$\mathfrak F^{-1} = \mathcal I \mathfrak F = \mathfrak F \mathcal I,$$ 
we immediately see that
$$\Phi = \mathfrak F^{-1}  \mathcal M \mathfrak F^{-1} = \mathfrak F \mathcal  {I M I} \mathfrak F = \mathfrak F  \mathcal M^{-1} \mathfrak F = \Phi^{-1} $$
from which the reproducing property $\mathcal K^2 = \mathcal K$ follows immediately. \QED

\bibliographystyle{plain}
\bibliography{FermionsPIIhier_biblio.bib}
\end{document}